\newcommand{\thickhline}{%
    \noalign {\ifnum 0=`}\fi \hrule height 1pt
    \futurelet \reserved@a \@xhline
}
\newcolumntype{"}{@{\hskip\tabcolsep\vrule width 1pt \hskip\tabcolsep}}
\newcolumntype{'}{@{\hskip\tabcolsep\vrule width 1pt}}
\newcolumntype{`}{@{\vrule width 1pt \hskip\tabcolsep}}
\newcommand{\specialcell}[2][c]{%
  \begin{tabular}[#1]{@{}c@{}}#2\end{tabular}}
\let\Pr\relax
\DeclareMathOperator*{\Pr}{\mathbb{P}}
\DeclareMathOperator{\tr}{tr}
\DeclareMathOperator{\rank}{rank}
\newcommand{\R}{\mathbb{R}}
\DeclareMathOperator*{\argmin}{arg\,min}
\newtheorem*{rep@theorem}{\rep@title}
\newcommand{\newreptheorem}[2]{%
\newenvironment{rep#1}[1]{%
 \def\rep@title{#2 \ref{##1}}%
 \begin{rep@theorem}}%
 {\end{rep@theorem}}}
\newtheorem{theorem}{Theorem}
\newtheorem{corollary}[theorem]{Corollary}
\newtheorem{lemma}[theorem]{Lemma}
\newtheorem*{lemma*}{Lemma}
\newtheorem{claim}[theorem]{Claim}
\newtheorem{definition}[theorem]{Definition}
\newcommand{\ol}{\overline}
\newcommand{\inprod}[1]{\left\langle #1 \right\rangle}
\newcommand{\norm}[1]{\|#1\|}
\newcommand{\abs}[1]{\lvert#1\rvert}
\newcommand{\bs}[1]{\boldsymbol{#1}}
\newcommand{\bv}[1]{\mathbf{#1}}
\title{Dimensionality Reduction for $k$-Means Clustering and Low Rank Approximation}
\author{Michael B. Cohen, Sam Elder, Cameron Musco, Christopher Musco, M\u{a}d\u{a}lina Persu
\\ \textit{Massachusetts Institute of Technology, EECS and Mathematics}
\\ \textit{Cambridge, MA 02139, USA}
\\ Email: \{micohen,same,cnmusco,cpmusco,mpersu\}@mit.edu}
\date{}
\begin{document}
\maketitle

\begin{abstract}
We show how to approximate a data matrix $\bv{A}$ with a much smaller sketch $\bv{\tilde A}$ that can be used to solve a general class of \emph{constrained k-rank approximation} problems to within $(1+\epsilon)$ error. Importantly, this class of problems includes $k$-means clustering and unconstrained low rank approximation (i.e. principal component analysis). By reducing data points to just $O(k)$ dimensions, our methods generically accelerate any exact, approximate, or heuristic algorithm for these ubiquitous problems.

For $k$-means dimensionality reduction, we provide $(1+\epsilon)$ relative error results for many common sketching techniques, including random row projection, column selection, and approximate SVD.
%
%
For approximate principal component analysis, we give a simple alternative to known algorithms that has applications
in the streaming setting. Additionally, we extend recent work on column-based matrix reconstruction, giving column subsets that not only `cover' a good subspace for $\bv{A}$, but can be used directly to compute this subspace.

Finally, for $k$-means clustering, we show how to achieve a $(9+\epsilon)$ approximation by Johnson-Lindenstrauss projecting data points to just $O(\log k/\epsilon^2)$ dimensions. This gives the first result that leverages the specific structure of $k$-means to achieve dimension independent of input size and sublinear in $k$. 


\end{abstract}

\thispagestyle{empty}
\clearpage
\setcounter{page}{1}

\section{Introduction}

Dimensionality reduction has received considerable attention in the study of fast linear algebra algorithms. The goal is to approximate a large matrix $\bv{A}$ with a much smaller $\emph{sketch}$ $\bv{\tilde A}$ such that solving a given problem on $\bv{\tilde A}$ gives a good approximation to the solution on $\bv{A}$. 
This can lead to faster runtimes, reduced memory usage, or decreased distributed communication.
Methods such as random sampling and Johnson-Lindenstrauss projection have been applied to a variety of problems including matrix multiplication, regression, and low rank approximation \cite{Halko:2011, Mahoney:2011}.

Similar tools have been used for accelerating $k$-means clustering. While exact $k$-means clustering is  NP-hard \cite{Aloise:2009:NES:1519378.1519389,Mahajan:2009:PKP:1507094.1507122}, effective heuristics 
and provably good approximation algorithms are known \cite{1056489,kanungo2002local, 1366265,Arthur2007,Har-Peled:2007:SCK:1229133.1229135}. 
Dimensionality reduction seeks to generically accelerate 
any of these algorithms by reducing the dimension of the data points being clustered. In other words, we want a sketch $\bv{\tilde A}$ with many fewer columns than the original data matrix $\bv{A}$. An approximately optimal $k$-means clustering for $\bv{\tilde A}$ should also be approximately optimal for $\bv{A}$.


\subsection{Big Picture}
We obtain a variety of new results on dimensionality reduction for both $k$-means clustering and $k$-rank approximation (also known as singular value decomposition or principal component analysis). In the later case, we will use $\bv{\tilde A}$ to find a nearly optimal $k$-dimensional basis for reconstructing the columns of $\bv{A}$ -- i.e., an approximate set of left singular vectors.

We start by noting that both problems are special cases of a general \emph{constrained  $k$-rank approximation} problem \cite{nphardkmeans}, which also includes problems related to sparse and nonnegative PCA \cite{papailiopoulos2013sparse, yuan2013truncated, asteris2014nonnegative}. Then, following the coreset definitions of \cite{feldman2013turning}, we introduce the concept of a \emph{projection-cost preserving sketch}, an approximation where the sum of squared distances of $\bv{\tilde A}$'s columns from any $k$-dimensional subspace (plus a fixed constant independent of the subspace) is multiplicatively close to that of $\bv{A}$.
This ensures that the cost of any $k$-rank projection of $\bv{A}$ is well approximated by $\bv{\tilde A}$ and thus, we can solve the general constrained $k$-rank approximation problem approximately for $\bv{A}$ using $\bv{\tilde A}$.

Next, we give several simple and efficient approaches for obtaining projection-cost preserving sketches with $(1+\epsilon)$ relative error. 
All of these techniques simply require computing an SVD, multiplying by a random projection, random sampling, or some combination of the three. These methods have well developed
implementations, are robust, and can be accelerated for sparse or otherwise structured data. As such, we do not focus heavily on specific implementations or runtime analysis. 
We do show that our proofs are amenable to approximation and acceleration in the underlying sketching techniques -- for example, it is possible to use fast approximate SVD algorithms, sparse Johnson-Lindenstrauss embeddings, and inexact sampling probabilities.

In addition to the applications in this paper, we hope that projection-cost preserving sketches will be useful in developing future randomized matrix algorithms.
They relax the guarantee of \emph{subspace embeddings}, which have received significant attention in recent years \cite{sarlos2006improved,Clarkson:2013}. Subspace embedding sketches require that $\norm{\bv{x\tilde A}} \approx \norm{\bv{x}\bv{ A}}$ simultaneously for all $\bv{x}$, which in particular implies that $\bv{\tilde A}$ preserves the cost of \emph{any} column projection of $\bv{A}$\footnote{$\norm{(\bv{I} - \bv{P})\bv{A}}_F \approx \norm{(\bv{I} - \bv{P})\bv{\tilde A}}_F$ for any  projection matrix $\bv{P}$.}. However, in general $\bv{\tilde A}$ will require at least $O(rank(\bv{A}))$ columns.
On the other hand, our projection-cost preserving sketches only work for projections with rank at most $k$, but  only require $O(k)$ columns. 

\subsection{Dimensionality Reduction Results}
In Table \ref{results_table} we summarize each of our dimensionality reduction results, showing a variety of methods for obtaining projection-cost preserving sketches. For each method, we note how many dimensions (columns) are required for a sketch $\bv{\tilde A}$ that achieves $(1+\epsilon)$ error.  We compare to prior work, most of which focuses on constructing sketches for $k$-means clustering, but applies to general constrained $k$-rank approximation as well. One exception for non-oblivious random projection is noted since no prior results were known for $k$-means or the general constrained problem.

\renewcommand*{\thefootnote}{\fnsymbol{footnote}}
\vspace{-5pt}
\begin{table}[H]
\begin{center}
\begin{tabular}{`>{\centering}m{3.2cm}">{\centering}m{1.8cm}|c|c">{\centering}m{1.65cm}|c|c'}
\thickhline
 & \multicolumn{3}{c"}{\textbf{Previous Work}} & \multicolumn{3}{c'}{\textbf{Our Results}} \\
\thickhline
\textbf{Technique}  & \textbf{Reference} & \textbf{Dimensions} & \textbf{Error} & \textbf{Theorem}  &\textbf{Dimensions} & \textbf{Error}\\
\hline
SVD & \specialcell{\cite{nphardkmeans} \\ \cite{feldman2013turning}} & \specialcell{$k$ \\ $O(k/\epsilon^2)$} & \specialcell{$2$ \\ $1+\epsilon$} & Thm \ref{exact_svd} & $\lceil k/\epsilon \rceil$ & $1+\epsilon$ \\
\hline
Approx. SVD & \cite{DBLP:journals/corr/abs-1110-2897} & $k$ & $2+\epsilon$ & Thm \ref{approx_svd},\ref{general_svd}&$\lceil k/\epsilon \rceil$ & $1+\epsilon$ \\
\hline
Random Projection & \cite{NIPS2010_3901} & $O(k/\epsilon^2)$ & $2+\epsilon$ & Thm \ref{rp_theorem}\\Thm \ref{logk_approx} & \specialcell{$O(k/\epsilon^2)$\\  $O(\log k / \epsilon^2)$} & \specialcell{$1+\epsilon$\\ $9+\epsilon$ \footnotemark[2]} \\
\hline
\specialcell{Non-oblivious\\ Randomized\\ Projection} & \cite{sarlos2006improved} & $O(k/\epsilon)$ & $1+\epsilon$ \footnotemark[3]  & Thm \ref{non_oblivious_full_theorem} 
& $O(k/\epsilon)$ & $1+\epsilon$ \\
\hline
\specialcell{Feature Selection\\ (Random Sampling)} & \cite{DBLP:conf/nips/BoutsidisMD09, DBLP:journals/corr/abs-1110-2897} & $O(k \log k / \epsilon^2)$ & $3+\epsilon$ &Thm \ref{sample}&$O(k \log k / \epsilon^2)$ & $1+\epsilon$ \\
\hline
\specialcell{Feature Selection\\ (Deterministic)} & \cite{boutsidis2013deterministic} & $r$, $k < r < n$ & $O(n/r)$ &Thm \ref{bss_theorem} 
& $O(k / \epsilon^2)$ & $1+\epsilon$ \\
\thickhline
\end{tabular}
\caption{Summary of new dimensionality reduction results.}
\label{results_table}
\end{center}
\end{table}
\footnotetext[2]{$k$-means clustering only.}
\footnotetext[3]{$k$-rank approximation only.}
\renewcommand*{\thefootnote}{\arabic{footnote}}
\vspace{-15pt}

The smallest dimension projection-cost preserving sketches can be obtained by projecting $\bv{A}$'s rows onto its top $\lceil k/\epsilon \rceil$ right singular vectors (identified using a partial singular value decomposition).
Our analysis improves on \cite{feldman2013turning}, which requires an $O(k/\epsilon^2)$ rank approximation. However, we note that our proof nearly follows from work in that paper. 

Due to the expense of computing a partial SVD, we show that any nearly optimal set of $\lceil k/\epsilon \rceil$ right singular vectors  
also suffices. This result
improves on a $(2+\epsilon)$ bound in \cite{DBLP:journals/corr/abs-1110-2897} and allows for the application of
fast approximate SVD algorithms based on Krylov subspace methods or more recent randomized techniques \cite{Halko:2011}.
SVD sketches offer some unique practical advantages. $k$ is typically small so the lack of constant factors and $1/\epsilon$ dependence (vs. $1/\epsilon^2$) can be significant. We also show that a smaller sketch suffices when $\bv{A}$'s spectrum is not uniform, a condition that is simple to check in practice.

While our SVD based dimensionality reduction results are valuable for $k$-means clustering and other constrained problems, they are not useful for the unconstrained approximate SVD problem itself -- finding $\bv{\tilde A}$ would be just as hard as solving the problem directly. Nevertheless, we give projection-cost preserving sketches based on random projection and feature selection that are useful in both the constrained and unconstrained setting. 
These results are based on a unified proof technique that relies on a reduction to a spectral approximation problem. 
The approach allows us to tighten and generalize a fruitful line of work in \cite{DBLP:conf/nips/BoutsidisMD09, NIPS2010_3901, DBLP:journals/corr/abs-1110-2897,boutsidis2013deterministic}, which were the first papers to address dimensionality reduction for $k$-means using random projection and feature selection. They inspired our general proof technique.

Specifically, we show that a $(1+\epsilon)$ error projection-cost preserving sketch can be obtained by randomly projecting  $\bv{A}$'s rows to $O(k/\epsilon^2)$ dimensions -- i.e., multiplying 
on the right by a Johnson-Lindenstrauss matrix with $O(k/\epsilon^2)$ columns. 
Sampling $O(k\log k/\epsilon^2)$ columns or using \emph{BSS selection} (a deterministic algorithm based on \cite{batson2012twice}) to choose $O(k/\epsilon^2)$ columns  also suffices. Our results improve on constant factor bounds in \cite{DBLP:conf/nips/BoutsidisMD09,NIPS2010_3901,boutsidis2013deterministic,DBLP:journals/corr/abs-1110-2897}.


Our random projection result gives the 
lowest communication relative error distributed algorithm for $k$-means, 
improving on \cite{liang2013distributed,balcan2014improved,kannan2014principal}. It also
gives an \emph{oblivious} dimension reduction technique for computing the unconstrained SVD, providing an alternative to the algorithms in \cite{sarlos2006improved,Clarkson:2013} that has applications in the streaming setting. 
We complete the picture by showing that the \emph{non-oblivous} 
technique 
in \cite{sarlos2006improved,Clarkson:2013} generalizes to constrained $k$-rank approximation. This method multiplies $\bv{A}$ on the left by a Johnson-Lindenstrauss matrix with just $O(k/\epsilon)$ rows and then projects onto the row span of this smaller matrix.

For low rank approximation, our feature selection results are similar to column-based matrix reconstruction \cite{deshpande2006matrix,guruswami2012optimal,boutsidis2014near,boutsidis2014optimal}, but we give stronger guarantees at the cost of worse $\epsilon$ dependence. We discuss the strong connection with this line of work in Section \ref{spectral_norm_proofs}.

Finally, for general constrained $k$-rank approximation, it is not possible to reduce to dimension below $\Theta(k)$. However, we conclude by showing that it \emph{is possible} to do better for $k$-means clustering by leveraging the problem's specific structure. Specifically, randomly projecting to $O(\log k /\epsilon^2)$ dimensions 
is sufficient to obtain a $(9+\epsilon)$ approximation to the optimal clustering. This gives the first $k$-means sketch 
with dimension independent of the input size and sublinear in $k$. It is simple to show via the standard Johnson-Lindenstrauss lemma that $O(\log n/\epsilon^2)$ dimension projections yield $(1+\epsilon)$ error, also specifically for $k$-means \cite{DBLP:journals/corr/abs-1110-2897}. Our results offers significantly reduced dimension and we are interested in knowing whether our $(9+\epsilon)$ error bound can be improved.

\subsection{Road Map}
\begin{description}
\item[Section \ref{prelims}]  Review notation and linear algebra basics. Introduce \emph{constrained low rank approximation} and demonstrate that $k$-means clustering is a special case of the problem.
\item[Section \ref{error_matrices}] Introduce \emph{projection-cost preserving sketches} and their applications. 
\item[Section \ref{sec:sufficient_conditions}] Overview our approach and give sufficient conditions for projection-cost preservation.
\item[Section \ref{svds}] Prove that projecting onto $\bv{A}$'s top $\lceil k/\epsilon \rceil$ singular vectors or finding an approximately optimal $\lceil k/\epsilon \rceil$-rank approximation gives a projection-cost preserving sketch.
\item[Section \ref{block_section}] Reduce projection-cost preservation to spectral norm matrix approximation.
\item[Section \ref{spectral_norm_proofs}] Use the reduction to prove random projection and feature selection results.
\item[Section \ref{squish}] Prove $O(k/\epsilon)$ dimension non-oblivious randomized projection result.
\item[Section \ref{logk}] Prove $O(\log k/\epsilon^2)$ random projection result for $(9+\epsilon)$ $k$-means approximation.
\item[Section \ref{applications}] Present example applications of our results to streaming and distributed algorithms.
\end{description}

\section{Preliminaries}\label{prelims}

\subsection{Linear Algebra Basics}

For any $n$ and $d$, consider a matrix $\bv{A} \in \mathbb{R}^{n \times d}$. Let $r = \rank(\bv{A})$. Using a singular value decomposition, we can write $\bv{A} = \bv{U}\bv{\Sigma}\bv{V^\top}$,  where $\bv{U} \in \mathbb{R}^{n \times r}$ and $\bv{V} \in \mathbb{R}^{d \times r}$ have orthogonal columns (the left and right singular vectors of $\bv{A}$) and $\bv{\Sigma} \in \mathbb{R}^{r \times r}$ is a positive diagonal matrix containing the singular values of $\bv{A}$: $\sigma_1 \ge \sigma_2 \ge ... \ge \sigma_r$. The pseudoinverse of $\bv{A}$ is given by $\bv{A}^+ = \bv{V} \bv{\Sigma}^{-1} \bv{U}^\top$.

Let $\bv{\Sigma}_k$ be $\bv{\Sigma}$ with all but its largest $k$ singular values zeroed out. Let $\bv{U}_k$ and $\bv{V}_k$ be $\bv{U}$ and $\bv{V}$ with all but their first $k$ columns zeroed out. For any $k \le r$, $\bv{A}_k = \bv{U}\bv{\Sigma}_k\bv{V^\top} =  \bv{U}_k\bv{\Sigma}_k\bv{V}_k^\top$ is the closest rank $k$ approximation to $\bv{A}$ for any unitarily invariant norm, including the Frobenius norm and spectral norm \cite{1960QJMat1150M}. The squared Frobenius norm is given by $\norm{ \bv{A}}^2_F = \sum_{i,j} \bv{A}_{i,j}^2 = \tr(\bv{A A^\top}) = \sum_{i} \sigma_i^2$. The spectral norm is given by $\norm{\bv{A}}_{2} =\sigma_{1}$.
\begin{align*}
\norm{\bv{A}-\bv{A}_k}_F &= \min_{\bv{B}\mid \rank(\bv{B}) = k} \norm{\bv{A}-\bv{B}}_F \text{ and }\\
\norm{\bv{A}-\bv{A}_k}_2 &= \min_{\bv{B}\mid \rank(\bv{B}) = k} \norm{\bv{A}-\bv{B}}_2.
\end{align*}
We often work with the remainder matrix $\bv{A} - \bv{A}_k$ and label it $\bv{A}_{r \setminus  k}$. 

For any two matrices $\bv{M}$ and $\bv{N}$, $\norm{\bv{MN}}_F \le \norm{\bv{M}}_{F} \norm{\bv{N}}_2 $ and  $\norm{\bv{MN}}_F \le \norm{\bv{N}}_{F} \norm{\bv{M}}_2$. This property is known as \emph{spectral submultiplicativity}. It holds because multiplying by a matrix can scale each row or column, and hence the Frobenius norm, by at most the matrix's spectral norm. Submultiplicativity implies that multiplying by an orthogonal projection matrix (which only has singular values of 0 or 1) can only decrease Frobenius norm, a fact that we will use repeatedly.

If $\bv{M}$ and $\bv{N}$ have the same dimensions and $\bv{MN^\top} = \bv{0}$ then $\norm{\bv{M} + \bv{N}}^2_F = \norm{\bv{M}}^2_F + \norm{\bv{N}}^2_F$. This matrix Pythagorean theorem follows from the fact that $\norm{\bv{M+N}}^2_F = \tr(\bv{(M+N)(M+N)^\top})$. As an example, note that since $\bv{A}_k$ is an orthogonal projection of $\bv{A}$ and $\bv{A}_{r \setminus k}$ is its residual, $\bv{A}_{k}\bv{A}_{r \setminus k}^\top = \bv{0}$. Thus,  $\|\bv{A}_k\|_F^2 + \|\bv{A}_{r \setminus k}\|_F^2 = \|\bv{A}_k+ \bv{A}_{r \setminus k}\|_F^2 = \|\bv{A}\|_F^2$.

For any two symmetric matrices $\bv{M},\bv{N} \in \mathbb{R}^{n \times n}$, $\bv{M} \preceq \bv{N}$ indicates that $\bv{N - M}$ is positive semidefinite -- that is, it has all positive eigenvalues and $\bv{x}^\top (\bv{N}-\bv{M})\bv{x} \ge 0$ for all $\bv{x} \in \mathbb{R}^{n}$. We use $\lambda_i(\bv{M})$ to denote the $i^{\text{th}}$ largest eigenvalue of $\bv{M}$ \emph{in absolute value}.

Finally, we often use $\bv{P}$ to denote an orthogonal projection matrix, which is any matrix that can be written as $\bv{P} = \bv{Q}\bv{Q}^\top$ where $\bv{Q}$ is a matrix with orthonormal columns. Multiplying a matrix by $\bv{P}$ on the left will project its columns to the column span of $\bv{Q}$. If $\bv{Q}$ has just $k$ columns, the projection has rank $k$. Note that $\bv{B}^* = \bv{P}\bv{A}$ minimizes $\norm{\bv{A}-\bv{B}}_F$ amongst all matrices $\bv{B}$ whose columns lie in the column span of $\bv{Q}$ \cite{woodruff2014sketching}.

\subsection{Constrained Low Rank Approximation}
\label{subsec:constrained}
To develop sketching algorithms for $k$-means clustering and low rank approximation, we show that both problems reduce to a general constrained low rank approximation objective. Consider a matrix $\bv{A} \in \mathbb{R}^{n \times d}$ and any set $S$ of rank $k$ orthogonal projection matrices in $\mathbb{R}^{n \times n}$. 
We want to find
\begin{align}\label{projection_objective}
\bv{P^*} = \argmin_{\bv{P} \in S} \norm{\bv{A} - \bv{P}\bv{A}}_F^2.
\end{align}
We often write $\bv{Y} = \bv{I}_{n\times n}-\bv{P}$ and refer to $\norm{\bv{A} - \bv{P}\bv{A}}_F^2 = \norm{\bv{Y}\bv{A}}_F^2$ as the \emph{cost} of the projection $\bv{P}$.

When $S$ is the set of all rank $k$ orthogonal projections, this problem is equivalent to finding the optimal rank $k$ approximation for $\bv{A}$, and is solved by computing $\bv{U}_k$ using an SVD algorithm and setting $\bv{P}^* = \bv{U}_k\bv{U}_k^\top$. In this case, the cost of the optimal projection is $\norm{\bv{A}- \bv{U}_k\bv{U}_k^\top\bv{A}}_F^2 = \norm{\bv{A}_{r \setminus k}}_F^2$. As the optimum cost in the unconstrained case, $\norm{\bv{A}_{r \setminus k}}_F^2$ is a universal lower bound on $\norm{\bv{A} - \bv{P} \bv{A}}_F^2$.

\subsection{$k$-Means Clustering as Constrained Low Rank Approximation}
\label{subsec:k_means_constrained}

Formally, $k$-means clustering asks us to partition $n$ vectors in $\R^d$, $\{\bv{a}_1, \ldots,\bv{a}_n\}$, into $k$ cluster sets, $\{C_1, \ldots, C_k\}$. Let $\bs{\mu}_i$ be the centroid of the vectors in $C_{i}$. Let $\bv{A} \in \mathbb{R}^{n \times d}$ be a data matrix containing our vectors as rows and let $C(\bv{a}_j)$ be the set that vector $\bv{a}_j$ is assigned to. The goal is to minimize the objective function
\begin{align*}
\sum_{i=1}^k \sum_{\bv{a}_{j}\in C_i}\|\bv{a}_{j} - \bs{\mu}_i\|_2^2 = \sum_{j=1}^n \|\bv{a}_{j} - \bs{\mu}_{C(\bv{a}_j)}\|_2^2.
\end{align*}

To see that $k$-means clustering is an instance of general constrained low rank approximation, 
we rely on a linear algebraic formulation of the $k$-means objective that has been used critically in prior work on dimensionality reduction for the problem (see e.g. \cite{DBLP:conf/nips/BoutsidisMD09}). 

For a clustering $C = \{C_1, \ldots, C_k\}$, let $\bv{X}_C \in \mathbb{R}^{n \times k}$ be the \emph{cluster indicator matrix}, with $\bv{X}_C (i,j) = 1/\sqrt{|C_j|}$ if $\bv{a}_i$ is assigned to $C_j$. $\bv{X}_C(i,j) = 0$ otherwise. Thus, $\bv{X}_C\bv{X}_C^\top\bv{A}$ has its $i^{\text{th}}$ row equal to $\bs{\mu}_{C(\bv{a}_i)}$, the center of $\bv{a}_i$'s assigned cluster. So, we can express the $k$-means objective function as:
\begin{align*}
\norm{\bv{A} - \bv{X}_C\bv{X}_C^\top \bv{A}}_F^2 =  \sum_{j=1}^n \|\bv{a}_{j} - \bs{\mu}_{C(\bv{a}_j)}\|_2^2.
\end{align*}
By construction, the columns of $\bv{X}_C$ have disjoint supports and so are orthonormal vectors. Thus $\bv{X}_C\bv{X}_C^\top$ is an orthogonal projection matrix with rank $k$, and $k$-means is just the constrained low rank approximation problem of \eqref{projection_objective} with $S$ as the set of all possible cluster projection matrices $\bv{X}_C\bv{X}_C^\top$.

While the goal of $k$-means is to well approximate each \emph{row} of $\bv{A}$ with its cluster center, this formulation shows that the problem actually amounts to finding an optimal rank $k$ subspace for approximating the \emph{columns} of $\bv{A}$. The choice of subspace is constrained because it must be spanned by the columns of a cluster indicator matrix. 

\section{Projection-Cost Preserving Sketches}\label{error_matrices}

We hope to find an approximately optimal constrained low rank approximation \eqref{projection_objective} for $\bv{A}$ by optimizing $\bv{P}$ (either exactly or approximately) over a sketch $\bv{\tilde A} \in \mathbb{R}^{n \times d'}$ with $d' \ll d$. This approach will certainly work if the cost $\norm{\bv{\tilde A} - \bv{P}\bv{\tilde A}}_F^2$ approximates the cost of $\norm{\bv{A} - \bv{P}\bv{A}}_F^2$ for \emph{any} $\bv{P} \in S$. An even stronger requirement is that $\bv{\tilde A}$ approximates projection-cost for all rank $k$ projections (of which $S$ is a subset). We call such an $\bv{\tilde A}$ a \emph{projection-cost preserving sketch}.

\begin{definition}[Rank $k$ Projection-Cost Preserving Sketch with Two-sided Error]
\label{def:2sidedsketch}
$\bv{\tilde A} \in \R^{n\times d'}$ is a rank $k$ projection-cost preserving sketch of $\bv{A} \in \R^{n\times d}$  with error $0 \leq\epsilon < 1$ if, for all rank $k$ orthogonal projection matrices $\bv{P} \in \R^{n \times n}$,
\begin{align*}
 (1-\epsilon) \norm{\bv{A} - \bv{P}\bv{A}}_F^2 \le \norm{\bv{\tilde A-P\tilde A}}_F^2 + c \le (1+\epsilon)  \norm{\bv{A} - \bv{P}\bv{A}}_F^2,
\end{align*}
for some fixed non-negative constant $c$ that may depend on $\bv{A}$ and $\bv{\tilde A}$ but is independent of $\bv{P}$.
\end{definition}
This definition is equivalent to the $(k,\epsilon)$-coresets of \cite{feldman2013turning} (see their Definition 2). It can be strengthened slightly by requiring a one-sided error bound, which some of our sketching methods will achieve. The tighter bound is required for results that do not have constant factors in the sketch size.

\begin{definition}[Rank $k$ Projection-Cost Preserving Sketch with One-sided Error]
\label{def:1sidedsketch}
$\bv{\tilde A} \in \R^{n\times d'}$ is a rank $k$ projection-cost preserving sketch of $\bv{A} \in \R^{n\times d}$  with one-sided error $0 \leq\epsilon < 1$ if, for all rank $k$ orthogonal projection matrices $\bv{P} \in \R^{n \times n}$,
\begin{align*}
\norm{\bv{A} - \bv{P}\bv{A}}_F^2 \le \norm{\bv{\tilde A-P\tilde A}}_F^2 + c \le (1+\epsilon)  \norm{\bv{A} - \bv{P}\bv{A}}_F^2,
\end{align*}
for some fixed non-negative constant $c$ that may depend on $\bv{A}$ and $\bv{\tilde A}$ but is independent of $\bv{P}$.
\end{definition}

\subsection{Application to Constrained Low Rank Approximation}
\label{subsec:app_to_constrained_low_rank}
It is straightforward to show that a projection-cost preserving sketch is sufficient for approximately optimizing \eqref{projection_objective}, our constrained low rank approximation problem.

\begin{lemma}[Low Rank Approximation via Projection-Cost Preserving Sketches]\label{sketch_approximation} For any $\bv{A}\in \R^{n\times d}$ and any set $S$ of rank $k$ orthogonal projections, let $\bv{P^*} = \argmin_{\bv{P} \in S} \norm{\bv{A} - \bv{P}\bv{ A}}_F^2$. Accordingly, for any $\bv{\tilde A} \in \mathbb{R}^{n \times d'}$, let $\bv{\tilde P}^* = \argmin_{\bv{P} \in S} \norm{\bv{\tilde A} - \bv{P}\bv{\tilde A}}_F^2$. If $\bv{\tilde A}$ is a rank $k$ projection-cost preserving sketch for $\bv{A}$ with error $\epsilon$, then for any $\gamma \ge 1$, 
if $\norm{\bv{\tilde A} - \bv{\tilde P} \bv{\tilde A}}_F^2 \le \gamma \norm{\bv{\tilde A} - \bv{\tilde P}^* \bv{\tilde A}}_F^2$ 
\begin{align*}
\norm{\bv{ A} - \bv{\tilde P} \bv{ A}}_F^2 \le \frac{(1+\epsilon)}{(1-\epsilon)} \cdot \gamma \norm{\bv{A}-\bv{P^* A}}_F^2.
\end{align*}
\end{lemma}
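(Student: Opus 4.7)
The plan is to exploit the two-sided error guarantee of Definition \ref{def:2sidedsketch} to transfer costs back and forth between $\bv{A}$ and $\bv{\tilde A}$, together with the optimality of $\bv{\tilde P}^*$ on the sketch. The only slightly subtle point is carrying the additive constant $c$ through the chain of inequalities.

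First, I would apply the lower bound of the sketch definition to $\bv{\tilde P}$ itself to get
\[
\norm{\bv{A}-\bv{\tilde P}\bv{A}}_F^2 \;\le\; \frac{1}{1-\epsilon}\bigl(\norm{\bv{\tilde A}-\bv{\tilde P}\bv{\tilde A}}_F^2 + c\bigr).
\]
Next, since $\bv{\tilde P}$ is a $\gamma$-approximate minimizer on $\bv{\tilde A}$ and $\bv{\tilde P}^*$ is the exact minimizer over $S$, while $\bv{P}^*\in S$ is a legal candidate, I would chain
\[
\norm{\bv{\tilde A}-\bv{\tilde P}\bv{\tilde A}}_F^2 \;\le\; \gamma\,\norm{\bv{\tilde A}-\bv{\tilde P}^*\bv{\tilde A}}_F^2 \;\le\; \gamma\,\norm{\bv{\tilde A}-\bv{P}^*\bv{\tilde A}}_F^2.
\]

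Then I would absorb $c$ into the $\gamma$ factor using $\gamma\ge 1$ to write $\norm{\bv{\tilde A}-\bv{\tilde P}\bv{\tilde A}}_F^2 + c \le \gamma\bigl(\norm{\bv{\tilde A}-\bv{P}^*\bv{\tilde A}}_F^2 + c\bigr)$, and finally apply the upper bound of the sketch definition to $\bv{P}^*$ to obtain $\norm{\bv{\tilde A}-\bv{P}^*\bv{\tilde A}}_F^2 + c \le (1+\epsilon)\norm{\bv{A}-\bv{P}^*\bv{A}}_F^2$. Combining everything yields the claimed $\frac{1+\epsilon}{1-\epsilon}\gamma$ bound.

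The only real obstacle is the arithmetic around the shift $c$: because the definition gives a multiplicative approximation to $\norm{\bv{\tilde A}-\bv{P}\bv{\tilde A}}_F^2 + c$ rather than to $\norm{\bv{\tilde A}-\bv{P}\bv{\tilde A}}_F^2$ alone, one has to be careful that a $\gamma$-approximate minimizer of the sketch cost is still a $\gamma$-approximate minimizer of the shifted quantity, which is where the assumption $\gamma\ge 1$ is used. Everything else is a direct two-sided sandwich.
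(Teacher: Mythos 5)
Your proposal is correct and follows essentially the same route as the paper's proof: both use the optimality of $\bv{\tilde P}^*$ to compare against $\bv{P^*}$ on the sketch, then apply the lower bound of the sketch guarantee to $\bv{\tilde P}$ and the upper bound to $\bv{P^*}$, using $c \ge 0$ and $\gamma \ge 1$ to absorb the additive constant. The only difference is cosmetic: you fold $c$ into the $\gamma$ factor before invoking the upper bound, whereas the paper combines all three inequalities first and then drops the $c - \gamma c \le 0$ term.
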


That is, if $\bv{\tilde P}$ is an (approximately) optimal solution for $\bv{\tilde A}$, then it is also approximately optimal for $\bv{A}$. 
\begin{proof}

By optimality of $\bv{\tilde P}^*$ for $\bv{\tilde A}$, $\norm{\bv{\tilde A} - \bv{\tilde P^*}\bv{\tilde A}}_F^2 \leq \norm{\bv{\tilde A} - \bv{P^*}\bv{\tilde A}}_F^2$ and thus,
\begin{align}\label{approx1}
\norm{\bv{\tilde A} - \bv{\tilde P}\bv{\tilde A}}_F^2 \le \gamma \norm{\bv{\tilde A} - \bv{P^*}\bv{\tilde A}}_F^2.
\end{align}
Furthermore, since $\bv{\tilde A}$ is projection-cost preserving, the following two inequalities hold:
\begin{align}\label{approx2}
\norm{\bv{\tilde A} - \bv{P^*}\bv{\tilde A}}_F^2  \le (1+\epsilon) \norm{\bv{A} - \bv{P^*}\bv{ A}}_F^2- c,\\
\label{approx3}
\norm{\bv{\tilde A} - \bv{\tilde P}\bv{\tilde A}}_F^2  \ge (1-\epsilon)\norm{\bv{A} - \bv{\tilde P}\bv{A}}_F^2 - c.
\end{align}
Combining \eqref{approx1},\eqref{approx2}, and \eqref{approx3}, we see that:
\begin{align*}
(1-\epsilon)\norm{\bv{ A} - \bv{\tilde P} \bv{ A}}_F^2 - c &\le (1+\epsilon) \cdot \gamma \norm{\bv{A}-\bv{P^* A}}_F^2 - \gamma c\\
\norm{\bv{ A} - \bv{\tilde P} \bv{ A}}_F^2 &\le \frac{(1+\epsilon)}{(1-\epsilon)} \cdot \gamma \norm{\bv{A}-\bv{P^* A}}_F^2,
\end{align*} 
where the final step is simply the consequence of $c \geq 0$ and $\gamma \ge 1$.
\end{proof}

For any $0 \le \epsilon' < 1$, to achieve a $(1+\epsilon') \gamma$ approximation with Lemma \ref{sketch_approximation}, we just need to set $\epsilon = \frac{\epsilon'}{2+\epsilon'} \ge \frac{\epsilon'}{3}$. Using Definition \ref{def:1sidedsketch} gives a variation that avoids this constant factor adjustment:

\begin{lemma}[Low Rank Approximation via One-sided Error Projection-Cost Preserving Sketches]\label{sketch_approximation_improved} For any $\bv{A}\in \R^{n\times d}$ and any set $S$ of rank $k$ orthogonal projections, let $\bv{P^*} = \argmin_{\bv{P} \in S} \norm{\bv{A} - \bv{P}\bv{ A}}_F^2$. Accordingly, for any $\bv{\tilde A} \in \mathbb{R}^{n \times d'}$, let $\bv{\tilde P}^* = \argmin_{\bv{P} \in S} \norm{\bv{\tilde A} - \bv{P}\bv{\tilde A}}_F^2$. If $\bv{\tilde A}$ is a rank $k$ projection-cost preserving sketch for $\bv{A}$ with one-sided error $\epsilon$, then for any $\gamma \ge 1$, 
if $\norm{\bv{\tilde A} - \bv{\tilde P} \bv{\tilde A}}_F^2 \le \gamma \norm{\bv{\tilde A} - \bv{\tilde P}^* \bv{\tilde A}}_F^2$
\begin{align*}
\norm{\bv{ A} - \bv{\tilde P} \bv{ A}}_F^2 \le (1+\epsilon) \cdot \gamma \norm{\bv{A}-\bv{P^* A}}_F^2.
\end{align*}
\end{lemma}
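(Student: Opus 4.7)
The plan is to follow exactly the same chain of inequalities used in the proof of Lemma \ref{sketch_approximation}, but to exploit the tighter lower bound in Definition \ref{def:1sidedsketch} so that the $(1-\epsilon)$ factor drops out entirely. Intuitively, the one-sided definition already guarantees $\norm{\bv{A}-\bv{P}\bv{A}}_F^2 \le \norm{\bv{\tilde A}-\bv{P}\bv{\tilde A}}_F^2 + c$ with no shrinking factor, which is precisely the inequality that was responsible for the $(1-\epsilon)$ denominator in the previous lemma.

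Concretely, I would first invoke the optimality of $\bv{\tilde P}^*$ on the sketch to write $\norm{\bv{\tilde A}-\bv{\tilde P}^*\bv{\tilde A}}_F^2 \le \norm{\bv{\tilde A}-\bv{P}^*\bv{\tilde A}}_F^2$, and combine this with the assumed $\gamma$-approximation to obtain $\norm{\bv{\tilde A}-\bv{\tilde P}\bv{\tilde A}}_F^2 \le \gamma\norm{\bv{\tilde A}-\bv{P}^*\bv{\tilde A}}_F^2$. Next I would apply the upper bound of Definition \ref{def:1sidedsketch} at $\bv{P}^*$ to get $\norm{\bv{\tilde A}-\bv{P}^*\bv{\tilde A}}_F^2 + c \le (1+\epsilon)\norm{\bv{A}-\bv{P}^*\bv{A}}_F^2$, and the lower bound of Definition \ref{def:1sidedsketch} at $\bv{\tilde P}$ to get $\norm{\bv{A}-\bv{\tilde P}\bv{A}}_F^2 \le \norm{\bv{\tilde A}-\bv{\tilde P}\bv{\tilde A}}_F^2 + c$. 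Chaining these three displays yields $\norm{\bv{A}-\bv{\tilde P}\bv{A}}_F^2 \le \gamma\norm{\bv{\tilde A}-\bv{P}^*\bv{\tilde A}}_F^2 + c$, and I then want to bound the right-hand side by $\gamma(1+\epsilon)\norm{\bv{A}-\bv{P}^*\bv{A}}_F^2$.

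The only subtlety — and really the only place where $\gamma \ge 1$ is used — is in absorbing the stray additive $c$: since $\gamma \ge 1$ and $c \ge 0$, we have $\gamma\norm{\bv{\tilde A}-\bv{P}^*\bv{\tilde A}}_F^2 + c \le \gamma\bigl(\norm{\bv{\tilde A}-\bv{P}^*\bv{\tilde A}}_F^2 + c\bigr)$, at which point the upper bound of the one-sided definition finishes the proof. There is no real obstacle here; the argument is essentially bookkeeping, and the main thing to be careful about is keeping the asymmetric error bound straight (the lower inequality has no $(1-\epsilon)$, the upper one still has $(1+\epsilon)$) so that the constant in the conclusion comes out cleanly as $(1+\epsilon)\gamma$ rather than $(1+\epsilon)\gamma/(1-\epsilon)$.
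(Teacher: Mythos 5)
Your proof is correct and mirrors the paper's approach exactly: the paper also proves this by reusing the chain from Lemma \ref{sketch_approximation}, replacing the two-sided lower bound \eqref{approx3} with $\norm{\bv{\tilde A} - \bv{\tilde P}\bv{\tilde A}}_F^2 \ge \norm{\bv{A} - \bv{\tilde P}\bv{A}}_F^2 - c$, and invoking $c\ge 0$, $\gamma\ge 1$ to absorb the stray constant. Your bookkeeping of the additive $c$ (absorbing $c \le \gamma c$) is the same step the paper performs when it combines $-\gamma c + c \le 0$.
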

\begin{proof}
Identical to the proof of Lemma \ref{sketch_approximation} except that \eqref{approx3} can be replaced by $\norm{\bv{\tilde A} - \bv{\tilde P}\bv{\tilde A}}_F^2  \ge \norm{\bv{A} - \bv{\tilde P}\bv{A}}_F^2 - c$, which gives  the result when combined with \eqref{approx1} and \eqref{approx2}.
\end{proof}

\section{Sufficient Conditions}
\label{sec:sufficient_conditions}
With Lemmas \ref{sketch_approximation} and \ref{sketch_approximation_improved} in place, we seek to characterize what sort of sketch suffices for rank $k$ projection-cost preservation. We discuss sufficient conditions that will be used throughout the remainder of the paper. Before giving the full technical analysis, it is helpful to overview our general approach and highlight connections to prior work. 

\subsection{Our Approach}
\label{subsec:our_approach}
Using the notation $\bv{Y} = \bv{I}_{n\times n}-\bv{P}$, we can rewrite the guarantees for Definitions \ref{def:2sidedsketch} and \ref{def:1sidedsketch} as:
\begin{align}
\label{rewritten_guarantee}
 (1-\epsilon) \tr(\bv{Y} \bv{A} \bv{A}^\top \bv{Y}) &\le \tr(\bv{Y} \bv{\tilde A} \bv{\tilde A}^\top \bv{Y}) + c \le (1+\epsilon) \tr(\bv{Y} \bv{A} \bv{A}^\top \bv{Y})\text{, and}\\
\label{rewritten_guarantee_improved}
\tr(\bv{Y}\bv{A} \bv{A}^\top \bv{Y}) &\le \tr(\bv{Y} \bv{\tilde A} \bv{\tilde A}^\top \bv{Y}) + c \le (1+\epsilon) \tr(\bv{Y} \bv{A} \bv{A}^\top \bv{Y}).
\end{align}
Thus, in approximating $\bv{A}$ with $\bv{\tilde A}$, we are really attempting to approximate $\bv{ A} \bv{A}^\top$. 

Furthermore, all of the sketching approaches analyzed in this paper are linear -- i.e. we can always write $\bv{\tilde A} = \bv{AR}$. Suppose our sketching dimension is $m = O(k)$. For an SVD sketch, $\bv{R} = \bv{V}_m$. For a Johnson-Lindenstrauss random projection, $\bv{R}$ is a $d\times m$ random matrix. For a column selection sketch, $\bv{R}$ is a $d\times d$ diagonal matrix with $m$ non-zeros. So, our goal is to show:
\begin{align*}
\tr(\bv{Y} \bv{A} \bv{A}^\top \bv{Y})  \approx \tr(\bv{Y}  \bv{A}\bv{R}\bv{R}^\top \bv{A}^\top \bv{Y}) + c.
\end{align*}
A common trend in prior work has been to attack this analysis by splitting $\bv{A}$ into separate orthogonal components \cite{nphardkmeans,DBLP:journals/corr/abs-1110-2897}. In particular, previous results note that $\bv{A} = \bv{A}_k + \bv{A}_{r \setminus  k}$ and implicitly compare
\begin{align*}
\tr(\bv{Y} \bv{A}\bv{A}^\top \bv{Y}) &= \tr(\bv{Y} \bv{A}_k\bv{A}_k^\top \bv{Y}) + \tr(\bv{Y} \bv{A}_{r \setminus  k}\bv{A}_{r \setminus  k}^\top \bv{Y}) + \tr(\bv{Y} \bv{A}_k\bv{A}_{r \setminus  k}^\top \bv{Y})  + \tr(\bv{Y} \bv{A}_{r \setminus  k}\bv{A}_k^\top \bv{Y}) \\
&= \tr(\bv{Y} \bv{A}_k\bv{A}_k^\top \bv{Y})+ \tr(\bv{Y} \bv{A}_{r \setminus  k}\bv{A}_{r \setminus  k}^\top \bv{Y}) + 0  + 0,
\end{align*}
to
\begin{align*}
\tr(\bv{Y} \bv{A}\bv{R}\bv{R}^\top\bv{A}^\top \bv{Y}) = \tr(\bv{Y} \bv{A}_k\bv{R}\bv{R}^\top\bv{A}_k^\top \bv{Y}) + \tr(\bv{Y} \bv{A}_{r \setminus  k}\bv{R}\bv{R}^\top\bv{A}_{r \setminus  k}^\top \bv{Y})\\ + \tr(\bv{Y} \bv{A}_k\bv{R}\bv{R}^\top\bv{A}_{r \setminus  k}^\top \bv{Y})  + \tr(\bv{Y} \bv{A}_{r \setminus  k}\bv{R}\bv{R}^\top\bv{A}_k^\top \bv{Y}).
\end{align*}
We adopt this same general technique, but make the comparison more explicit and analyze the difference between each of the four terms separately. In Section \ref{splitting_error_tech}, the allowable error in each term will correspond to $\bv{E}_1$, $\bv{E}_2$, $\bv{E}_3$, and $\bv{E}_4$, respectively. 

Additionally, our analysis generalizes the approach by splitting $\bv{A}$ into a wider variety of orthogonal pairs. Our SVD results split $\bv{A} = \bv{A}_{\lceil k/\epsilon\rceil} + \bv{A}_{r \setminus  \lceil k/\epsilon\rceil}$, our random projection results split $\bv{A} = \bv{A}_{2k} + \bv{A}_{r \setminus 2k}$, and our column selection results split $\bv{A} = \bv{AZZ}^\top + \bv{A}(\bv{I-ZZ}^\top)$ for an approximately optimal rank-$k$ projection $\bv{ZZ}^\top$. Finally, our $O(\log k)$ result for $k$-means clustering splits $\bv{A} = \bv{P}^* \bv{A} + (\bv{I-P}^*) \bv{A}$ where $\bv{P}^*$ is the optimal $k$-means projection matrix for $\bv{A}$. 

\subsection{Characterization of Projection-Cost Preserving Sketches}
\label{splitting_error_tech}
Next we formally
analyze what sort of error, $\bv{E} =\bv{\tilde A} \bv{\tilde A}^\top - \bv{A} \bv{A}^\top$, is permissible for a projection-cost preserving sketch. We start by showing how to achieve the stronger guarantee of Definition \ref{def:1sidedsketch} (one-sided error), which will constrain $\bv{E}$ most tightly. We then loosen restrictions on $\bv{E}$ to show conditions that suffice for Definition \ref{def:2sidedsketch} (two-sided error). For ease of notation, write $\bv{C} = \bv{A} \bv{A}^\top$ and $\bv{\tilde C} = \bv{\tilde A} \bv{\tilde A}^\top$.

\begin{lemma}\label{error_improved}
$\bv{\tilde A}$ is a rank $k$ projection-cost preserving sketch with one-sided error $\epsilon$ (i.e. satisfies Definition \ref{def:1sidedsketch}) as long as we can write $\bv{\tilde C} = \bv{C} + \bv{E}$ where
$\bv{E}$ is symmetric, $\bv{E} \preceq \bv{0}$, and $\sum_{i=1}^k |\lambda_i(\bv{E})| \le \epsilon \norm{\bv{A}_{r \setminus k}}_F^2$. Specifically, referring to the guarantee of Equation \ref{rewritten_guarantee_improved},
we show that, for any rank $k$ orthogonal projection $\bv{P}$ and $\bv{Y} = \bv{I-P}$,
\begin{align*}
\tr(\bv{Y} \bv{C} \bv{Y}) \le \tr(\bv{Y} \bv{\tilde C} \bv{Y}) -\tr(\bv{E}) \le (1+\epsilon) \tr(\bv{Y} \bv{C} \bv{Y}).
\end{align*}
\end{lemma}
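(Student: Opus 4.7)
\medskip
\noindent\textbf{Proof proposal.} The plan is to expand both quantities in the claim as traces and reduce the whole statement to a bound on $\tr(\bv{P}\bv{E}\bv{P})$. First, since $\bv{\tilde C} = \bv{C} + \bv{E}$, linearity of trace gives $\tr(\bv{Y}\bv{\tilde C}\bv{Y}) = \tr(\bv{Y}\bv{C}\bv{Y}) + \tr(\bv{Y}\bv{E}\bv{Y})$. The orthogonal projector $\bv{Y} = \bv{I} - \bv{P}$ is idempotent, so by the cyclic property $\tr(\bv{Y}\bv{E}\bv{Y}) = \tr(\bv{E}\bv{Y})$. Writing $\bv{I} = \bv{Y} + \bv{P}$ then yields $\tr(\bv{E}) = \tr(\bv{E}\bv{Y}) + \tr(\bv{E}\bv{P})$, and using $\bv{P}^2 = \bv{P}$ and cyclicity once more gives $\tr(\bv{E}\bv{P}) = \tr(\bv{P}\bv{E}\bv{P})$. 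Combining these identities, I get the clean reformulation
\begin{align*}
\tr(\bv{Y}\bv{\tilde C}\bv{Y}) - \tr(\bv{E}) \;=\; \tr(\bv{Y}\bv{C}\bv{Y}) \;-\; \tr(\bv{P}\bv{E}\bv{P}),
\end{align*}
so the lemma reduces to showing $0 \le -\tr(\bv{P}\bv{E}\bv{P}) \le \epsilon\,\tr(\bv{Y}\bv{C}\bv{Y})$.

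The lower bound is immediate: $\bv{E} \preceq \bv{0}$ implies $\bv{P}\bv{E}\bv{P} \preceq \bv{0}$, hence $-\tr(\bv{P}\bv{E}\bv{P}) \ge 0$, which gives the left-hand inequality of the lemma. For the upper bound I factor the rank $k$ projection as $\bv{P} = \bv{Q}\bv{Q}^\top$ with $\bv{Q} \in \R^{n\times k}$ having orthonormal columns. Then $-\tr(\bv{P}\bv{E}\bv{P}) = \tr(\bv{Q}^\top(-\bv{E})\bv{Q})$, and the task becomes bounding the trace of a $k \times k$ compression of the positive semidefinite matrix $-\bv{E}$.

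Here is the main technical step, and the point I expect to be the key obstacle to state cleanly: I need the compression inequality $\lambda_i(\bv{Q}^\top(-\bv{E})\bv{Q}) \le \lambda_i(-\bv{E})$ for $i = 1,\dots,k$, which follows from the Courant--Fischer min-max characterization (equivalently Cauchy interlacing applied to the extension of $\bv{Q}$ to an orthonormal basis). Summing over $i$ and using that $-\bv{E}$ is PSD so all its eigenvalues are non-negative and equal $|\lambda_i(\bv{E})|$, I obtain
\begin{align*}
-\tr(\bv{P}\bv{E}\bv{P}) \;=\; \sum_{i=1}^k \lambda_i\bigl(\bv{Q}^\top(-\bv{E})\bv{Q}\bigr) \;\le\; \sum_{i=1}^k |\lambda_i(\bv{E})| \;\le\; \epsilon\,\|\bv{A}_{r\setminus k}\|_F^2,
\end{align*}
where the last inequality is the hypothesis on $\bv{E}$.

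To finish, I invoke the universal lower bound observed in Section~\ref{subsec:constrained}: for every rank $k$ orthogonal projection $\bv{P}$, $\|\bv{A}_{r\setminus k}\|_F^2 \le \|\bv{A} - \bv{P}\bv{A}\|_F^2 = \tr(\bv{Y}\bv{C}\bv{Y})$. Chaining this with the previous display gives $-\tr(\bv{P}\bv{E}\bv{P}) \le \epsilon\,\tr(\bv{Y}\bv{C}\bv{Y})$, which combined with the reformulation at the top of the proof yields the right-hand inequality of the lemma. The constant $c = -\tr(\bv{E})$ is independent of $\bv{P}$ and non-negative (since $\bv{E} \preceq 0$), matching the definition of a one-sided projection-cost preserving sketch.
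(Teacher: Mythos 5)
Your proof is correct and follows the same overall strategy as the paper: rewrite the claim to isolate $\tr(\bv{PE})$ (equivalently $\tr(\bv{P}\bv{E}\bv{P})$), use $\bv{E}\preceq\bv{0}$ for the lower bound, bound the projected trace by $\sum_{i=1}^k|\lambda_i(\bv{E})|$ using that $\bv{P}$ has rank $k$, and finish with the universal lower bound $\norm{\bv{A}_{r\setminus k}}_F^2 \le \tr(\bv{YCY})$. The only difference is cosmetic: you invoke Cauchy interlacing for the compression $\bv{Q}^\top(-\bv{E})\bv{Q}$, whereas the paper reaches the same bound elementarily via the eigendecomposition of $\bv{E}$ and the facts $0\le\tr(\bv{P}\bv{v}_i\bv{v}_i^\top)\le 1$ and $\sum_i\tr(\bv{P}\bv{v}_i\bv{v}_i^\top)=k$.
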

The general idea of Lemma \ref{error_improved} is fairly simple. Restricting $\bv{E} \preceq \bv{0}$ (which implies $\tr(\bv{E}) \leq 0)$ ensures that the projection independent constant in our sketch is non-negative, which was essential in proving Lemmas \ref{sketch_approximation} and \ref{sketch_approximation_improved}. Then we observe that, since $\bv{P}$ is a rank $k$ projection, any projection dependent error \emph{at worst} depends on the largest $k$ eigenvalues of our error matrix. Since the cost of any rank $k$ projection is at least $\norm{\bv{A}_{r \setminus k}}_F^2$, we need the restriction $\sum_{i=1}^k \left | \lambda_i(\bv{E}) \right | \le \epsilon \norm{\bv{A}_{r \setminus k}}_F^2$ to achieve relative error approximation. 

\begin{proof}
First note that, since $\bv{C} = \bv{\tilde C} - \bv{E}$, by linearity of the trace
\begin{align}
\tr(\bv{Y} \bv{C} \bv{Y}) &= \tr(\bv{Y} \bv{\tilde C} \bv{Y}) - \tr(\bv{Y} \bv{ E} \bv{Y})\nonumber\\
&= \tr(\bv{Y} \bv{\tilde C} \bv{Y}) - \tr(\bv{Y}\bv{E}) \nonumber\\
\label{eq:trace_sep}
&= \tr(\bv{Y} \bv{\tilde C} \bv{Y}) - \tr(\bv{ E}) + \tr(\bv{P E}).
\end{align}
The second step follows from the cyclic property of the trace and the fact that $\bv{Y}^2 = \bv{Y}$ since $\bv{Y}$ is a projection matrix. So, to prove  Lemma \ref{error_improved}, all we have to show is
\begin{align}
\label{improved_reduced_goal}
-\epsilon\tr(\bv{Y} \bv{C} \bv{Y}) \leq \tr(\bv{P E}) \leq 0.
\end{align}
Since $\bv{E}$ is symmetric, let $\bv{v}_1,\ldots,\bv{v}_r$ be the eigenvectors of $\bv{E}$, and write
\begin{align}
\bv{E} &= \sum_{i=1}^r \lambda_i(\bv{E}) \bv{v}_i \bv{v}_i^\top \text{ and thus} \nonumber\\
\label{eq:eigenbreakdown}
\tr(\bv{P}\bv{E}) &= \sum_{i=1}^r \lambda_i(\bv{E}) \tr(\bv{P}\bv{v}_i \bv{v}_i^\top).
\end{align}
For all $i$, $0 \leq \tr(\bv{P}\bv{v}_i \bv{v}_i^\top) \le \norm{\bv{v}_i}_2^2 \le 1$ and $\sum_{i=1}^r \tr(\bv{P}\bv{v}_i \bv{v}_i^\top) \leq \tr(\bv{P}) = k$. Thus, since $\bv{E} \preceq \bv{0}$ and accordingly has all negative eigenvalues, $\sum_{i=1}^r \lambda_i(\bv{E}) \tr(\bv{P}\bv{v}_i \bv{v}_i^\top)$ is minimized when $\tr(\bv{P}\bv{v}_i \bv{v}_i^\top) = 1$ for $\bv{v}_1,\ldots,\bv{v}_k$, the eigenvectors corresponding to $\bv{E}$'s largest magnitude eigenvalues. So,
\begin{align*}
\sum_{i=1}^k \lambda_i(\bv{E}) \leq  \sum_{i=1}^r \lambda_i(\bv{E}) \tr(\bv{P}\bv{v}_i \bv{v}_i^\top) \leq 0.
\end{align*}
The upper bound in Equation \eqref{improved_reduced_goal} follows immediately. The lower bound follows from our requirement that  $\sum_{i=1}^k |\lambda_i(\bv{E})| \le \epsilon\norm{\bv{A}_{r \setminus k}}_F^2$ and the fact that $\norm{\bv{A}_{r \setminus k}}_F^2$ is a universal lower bound on $\tr(\bv{Y} \bv{C} \bv{Y})$ (see Section \ref{subsec:constrained}).
\end{proof}

Lemma \ref{error_improved} is enough to prove that an exact or approximate low rank approximation to $\bv{A}$ gives a sufficient sketch for constrained low rank approximation (see Section \ref{svds}). However, other sketching techniques will introduce a broader class of error matrices, which we handle next.

\begin{lemma}\label{error} 
$\bv{\tilde A}$ is a rank $k$ projection-cost preserving sketch with two-sided error $\epsilon$ (i.e. satisfies Definition \ref{def:2sidedsketch}) as long as we can write $\bv{\tilde C} = \bv{C} + \bv{E}_1 + \bv{E}_2 + \bv{E}_3 + \bv{E}_4$ where
\begin{enumerate}
\item $\bv{E}_1$ is symmetric and $-\epsilon_1 \bv{C} \preceq \bv{E}_1 \preceq \epsilon_1\bv{C}$
\item $\bv{E}_2$ is symmetric, $\sum_{i=1}^k \left | \lambda_i(\bv{E}_2) \right | \le \epsilon_2 \norm{\bv{A}_{r \setminus k}}_F^2$, and $\tr(\bv{E}_2) \le \epsilon_2' \norm{\bv{A}_{r \setminus k}}_F^2$
\item The columns of $\bv{E}_3$ fall in the column span of $\bv{C}$ and $\tr(\bv{E}_3^\top \bv{C}^{+} \bv{E}_3) \le \epsilon_3^2 \norm{\bv{A}_{r  \setminus k}}_F^2$
\item The rows of $\bv{E}_4$ fall in the row span of $\bv{C}$ and $\tr(\bv{E}_4 \bv{C}^{+} \bv{E}_4^\top) \le \epsilon_4^2 \norm{\bv{A}_{r \setminus k}}_F^2$
\end{enumerate}
and $\epsilon_1 + \epsilon_2 + \epsilon_2' + \epsilon_3+ \epsilon_4 = \epsilon$. Specifically, referring to the guarantee in Equation \ref{rewritten_guarantee}, we show that for any rank $k$ orthogonal projection $\bv{P}$ and $\bv{Y} = \bv{I-P}$,
\begin{align*}
(1-\epsilon) \tr(\bv{Y} \bv{C} \bv{Y}) \le \tr(\bv{Y} \bv{\tilde C} \bv{Y}) - \min\{0,\tr(\bv{E}_2)\} \le (1+\epsilon) \tr(\bv{Y} \bv{C} \bv{Y}).
\end{align*}
\end{lemma}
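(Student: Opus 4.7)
The plan is to extend the decomposition used in the proof of Lemma \ref{error_improved}. I would start by linearity, writing
\begin{align*}
\tr(\bv{Y}\bv{\tilde C}\bv{Y}) - \tr(\bv{Y}\bv{C}\bv{Y}) = \tr(\bv{Y}\bv{E}_1\bv{Y}) + \tr(\bv{Y}\bv{E}_2\bv{Y}) + \tr(\bv{Y}\bv{E}_3\bv{Y}) + \tr(\bv{Y}\bv{E}_4\bv{Y}),
\end{align*}
and then bound each summand in terms of $\tr(\bv{Y}\bv{C}\bv{Y})$, repeatedly using the universal lower bound $\tr(\bv{Y}\bv{C}\bv{Y}) \ge \|\bv{A}_{r\setminus k}\|_F^2$ from Section \ref{subsec:constrained}. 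The goal is to show that the total perturbation, after absorbing the allowed shift $\min\{0,\tr(\bv{E}_2)\}$, is at most $\epsilon\cdot\tr(\bv{Y}\bv{C}\bv{Y})$.

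For $\bv{E}_1$, the spectral sandwich $-\epsilon_1\bv{C}\preceq \bv{E}_1\preceq \epsilon_1\bv{C}$ conjugates through $\bv{Y}$, so $|\tr(\bv{Y}\bv{E}_1\bv{Y})|\le \epsilon_1\tr(\bv{Y}\bv{C}\bv{Y})$. For $\bv{E}_2$, I would mimic the derivation of Equation \eqref{eq:trace_sep}, writing $\tr(\bv{Y}\bv{E}_2\bv{Y}) = \tr(\bv{E}_2) - \tr(\bv{P}\bv{E}_2)$, then splitting into two cases on the sign of $\tr(\bv{E}_2)$ to account for the subtraction of $\min\{0,\tr(\bv{E}_2)\}$. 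The leftover trace term is bounded by $\epsilon_2'\|\bv{A}_{r\setminus k}\|_F^2$, and the eigendecomposition argument from Equation \eqref{eq:eigenbreakdown}, adapted to an $\bv{E}_2$ which is no longer sign-definite, yields $|\tr(\bv{P}\bv{E}_2)|\le \sum_{i=1}^k |\lambda_i(\bv{E}_2)|\le \epsilon_2\|\bv{A}_{r\setminus k}\|_F^2$, since $\bv{P}$ is a rank $k$ projection so $\sum_i \tr(\bv{P}\bv{v}_i\bv{v}_i^\top)\le k$ with each summand in $[0,1]$.

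The technically trickiest pieces are $\bv{E}_3$ and $\bv{E}_4$, and I expect this to be the main obstacle since the hypothesis involves the pseudoinverse weighted norm $\tr(\bv{E}_3^\top \bv{C}^+\bv{E}_3)$ rather than a spectral bound. The key move is that, because the columns of $\bv{E}_3$ lie in the column span of $\bv{C}$, we can factor $\bv{E}_3 = \bv{C}^{1/2}\bv{F}$ with $\bv{F}=(\bv{C}^{1/2})^+\bv{E}_3$, so that $\tr(\bv{F}^\top\bv{F}) = \tr(\bv{E}_3^\top\bv{C}^+\bv{E}_3)\le \epsilon_3^2\|\bv{A}_{r\setminus k}\|_F^2$. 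Using $\bv{Y}^2=\bv{Y}$ and the cyclic property, $\tr(\bv{Y}\bv{E}_3\bv{Y}) = \tr(\bv{Y}\bv{C}^{1/2}\cdot\bv{F})$, and the matrix Cauchy--Schwarz inequality $|\tr(\bv{M}^\top\bv{N})|^2\le \tr(\bv{M}^\top\bv{M})\tr(\bv{N}^\top\bv{N})$ applied with $\bv{M}^\top=\bv{Y}\bv{C}^{1/2}$ and $\bv{N}=\bv{F}$ gives
\begin{align*}
|\tr(\bv{Y}\bv{E}_3\bv{Y})|^2 \le \tr(\bv{Y}\bv{C}\bv{Y})\cdot\tr(\bv{F}^\top\bv{F}) \le \epsilon_3^2\,\tr(\bv{Y}\bv{C}\bv{Y})\,\|\bv{A}_{r\setminus k}\|_F^2 \le \epsilon_3^2\,\tr(\bv{Y}\bv{C}\bv{Y})^2.
\end{align*}
Taking square roots yields $|\tr(\bv{Y}\bv{E}_3\bv{Y})|\le \epsilon_3\tr(\bv{Y}\bv{C}\bv{Y})$, and the argument for $\bv{E}_4$ is symmetric (factor $\bv{E}_4=\bv{G}\bv{C}^{1/2}$ on the right instead). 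Summing the four bounds and using $\epsilon_1+\epsilon_2+\epsilon_2'+\epsilon_3+\epsilon_4=\epsilon$ closes the proof.
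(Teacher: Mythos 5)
Your proposal is correct and follows essentially the same route as the paper: the same term-by-term decomposition, the same $\tr(\bv{Y}\bv{E}_2\bv{Y}) = \tr(\bv{E}_2) - \tr(\bv{P}\bv{E}_2)$ eigenvalue argument with the $\min\{0,\tr(\bv{E}_2)\}$ shift, and the same repeated use of the lower bound $\tr(\bv{Y}\bv{C}\bv{Y}) \ge \norm{\bv{A}_{r\setminus k}}_F^2$. Your handling of $\bv{E}_3$ and $\bv{E}_4$ via the factorization $\bv{E}_3 = \bv{C}^{1/2}\bv{F}$ and the trace Cauchy--Schwarz inequality is just an explicit rewriting of the paper's Cauchy--Schwarz argument for the semi-inner product $\langle \bv{M},\bv{N}\rangle = \tr(\bv{M}\bv{C}^+\bv{N}^\top)$, so there is no substantive difference.
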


\begin{proof}
Again, by linearity of the trace, note that 
\begin{align}
\label{eq:tildecbrokenout}
\tr(\bv{Y}\bv{\tilde C}\bv{Y}) = \tr(\bv{Y}\bv{C}\bv{Y}) + \tr(\bv{Y}\bv{E}_1\bv{Y}) + \tr(\bv{Y}\bv{E}_2\bv{Y}) + \tr(\bv{Y}\bv{E}_3\bv{Y}) + \tr(\bv{Y}\bv{E}_4\bv{Y}).
\end{align}
We handle each error term separately. Starting with $\bv{E}_1$, note that $\tr(\bv{Y} \bv{E}_1 \bv{Y}) = \sum_{i=1}^n \bv{y}_i^\top \bv{E}_1 \bv{y}_i$ where $\bv{y}_i$ is the $i^\text{th}$ column (equivalently row) of $\bv{Y}$. So, by the spectral bounds on $\bv{E}_1$
\begin{align}\label{e_1}
-\epsilon_1 \tr(\bv{Y} \bv{C} \bv{Y})  \le \tr (\bv{Y} \bv{E}_1 \bv{Y} ) \le \epsilon_1 \tr(\bv{Y} \bv{ C} \bv{Y}). 
\end{align}

$\bv{E}_2$ is analogous to our error matrix from Lemma \ref{error_improved}, but may have both positive and negative eigenvalues since we no longer require $\bv{E}_2 \preceq \bv{0}$ . Again, referring to \eqref{eq:trace_sep}, the goal is to bound $\tr(\bv{Y} \bv{E}_2 \bv{Y}) = \tr(\bv{E}_2) - \tr(\bv{P} \bv{E}_2)$. Using an eigendecomposition as in \eqref{eq:eigenbreakdown}, let $\bv{v}_1,\ldots,\bv{v}_r$ be the eigenvectors of $\bv{E}_2$, and note that
\begin{align*}
|\tr(\bv{P}\bv{E}_2)| = \left | \sum_{i=1}^r \lambda_i(\bv{E}_2) \tr(\bv{P}\bv{v}_i \bv{v}_i^\top) \right | \le \sum_{i=1}^r |\lambda_i(\bv{E}_2)| \tr(\bv{P}\bv{v}_i \bv{v}_i^\top).
\end{align*}
$\sum_{i=1}^r |\lambda_i(\bv{E}_2)| \tr(\bv{P}\bv{v}_i \bv{v}_i^\top)$ is maximized when $\tr(\bv{P}\bv{v}_i \bv{v}_i^\top) = 1$ for $\bv{v}_1,\ldots,\bv{v}_k$. Combined with our requirement that $\sum_{i=1}^k \left | \lambda_i(\bv{E}_2) \right | \le \epsilon_2 \norm{\bv{A}_{r \setminus k}}_F^2$, we see that $|\tr(\bv{P}\bv{E}_2)|  \leq \epsilon_2 \norm{\bv{A}_{r \setminus k}}_F^2$. Accordingly,
\begin{align}
\tr(\bv{E}_2) - \epsilon_2 \norm{\bv{A}_{r \setminus k}}_F^2 \le \tr(\bv{Y} \bv{E}_2 \bv{Y}) &\le  \tr(\bv{E}_2) + \epsilon_2 \norm{\bv{A}_{r \setminus k}}_F^2 \nonumber\\ 
\min\{0,\tr(\bv{E}_2)\} - \epsilon_2 \norm{\bv{A}_{r \setminus k}}_F^2 \le \tr(\bv{Y} \bv{E}_2 \bv{Y}) &\le \min\{0,\tr(\bv{E}_2)\} + (\epsilon_2 + \epsilon_2') \norm{\bv{A}_{r \setminus k}}_F^2 \nonumber \\
\min\{0,\tr(\bv{E}_2)\} - (\epsilon_2 + \epsilon_2') \tr(\bv{Y} \bv{C} \bv{Y}) \le \tr(\bv{Y} \bv{E}_2 \bv{Y}) &\le \min\{0,\tr(\bv{E}_2)\} +(\epsilon_2 + \epsilon_2') \tr(\bv{Y} \bv{C} \bv{Y})\label{e_2}.
\end{align}
The second step follows from the trace bound on $\bv{E}_2$. The last step follows from recalling that $\norm{\bv{A}_{r \setminus k}}_F^2$ is a universal lower bound on $\tr(\bv{Y} \bv{C} \bv{Y})$.

Next, we note that, since $\bv{E}_3$'s columns fall in the column span of $\bv{C}$, $\bv{C}\bv{C}^+ \bv{E}_3 = \bv{E}_3$. Thus,
\begin{align*}
\tr(\bv{Y}\bv{E}_3 \bv{Y}) = \tr(\bv{Y}\bv{E}_3) = \tr \left ((\bv{YC}) \bv{C}^+ (\bv{E}_3)\right ).
\end{align*}
$\langle \bv{M}, \bv{N} \rangle = \tr( \bv{M} \bv{C}^+ \bv{N}^\top)$ is a semi-inner product since $\bv{C} = \bv{A}\bv{A}^\top$, and therefore also $\bv{C}^+$, is positive semidefinite. Thus, by the Cauchy-Schwarz inequality,
\begin{align*}
\left | \tr \left ((\bv{YC}) \bv{C}^+ (\bv{E}_3)\right ) \right | \le \sqrt{\tr(\bv{YC}\bv{C}^+\bv{CY}) \cdot \tr(\bv{E}^\top_3 \bv{C}^+ \bv{E}_3)} \le \epsilon_3 \norm{\bv{A}_{r \setminus k}}_F \cdot \sqrt{\tr(\bv{YCY})}.
\end{align*}
Since $\sqrt{\tr(\bv{YCY})} \ge \norm{\bv{A}_{r \setminus k}}_F$, we conclude that
\begin{align}
\label{e_3}
\left |\tr(\bv{Y}\bv{E}_3 \bv{Y})\right| \leq  \epsilon_3 \cdot \tr(\bv{YCY}).
\end{align}
For $\bv{E}_4$ we make a symmetric argument. 
\begin{align}\label{e_4}
\left | \tr(\bv{Y}\bv{E}_4 \bv{Y}) \right | = \left |\tr \left ((\bv{E}_4) \bv{C}^+ (\bv{CY})\right ) \right | \le \sqrt{\tr(\bv{YCY}) \cdot \tr(\bv{E}_4 \bv{C}^+ \bv{E}_4^\top)} \le \epsilon_4 \cdot \tr(\bv{YCY}).
\end{align} 

Finally, combining equations \eqref{eq:tildecbrokenout}, \eqref{e_1}, \eqref{e_2}, \eqref{e_3}, and \eqref{e_4} and recalling that  $\epsilon_1 + \epsilon_2 + \epsilon_2' + \epsilon_3 + \epsilon_4 = \epsilon$, we have:
\begin{align*}
 (1-\epsilon) \tr(\bv{Y} \bv{C} \bv{Y}) \le \tr(\bv{Y} \bv{\tilde C} \bv{Y}) -  \min\{0,\tr(\bv{E}_2)\} \le (1+\epsilon) \tr(\bv{Y} \bv{C} \bv{Y}).
\end{align*}
\end{proof}
\section{Singular Value Decomposition}\label{svds}

Lemmas \ref{error_improved} and \ref{error} provide a framework for analyzing a variety of projection-cost preserving dimensionality reduction techniques. We start by considering a sketch $\bv{\tilde A}$ that is simply $\bv{A}$ projected onto its top $m = \lceil k/\epsilon \rceil$ singular vectors. As stated, this sketch actually has the same dimensions as $\bv{A}$ -- however, since $\bv{\tilde A} = \bv{A}_{m}$ is simply $\bv{U}_{m} \bv{\Sigma}_{m}$ under rotation, we could actually solve constrained low rank approximation using $\bv{U}_{m} \bv{\Sigma}_{m} = \bv{A} \bv{V}_{m}$ as our data matrix. This form of the sketch has data points of dimension $m = {\lceil k/\epsilon \rceil}$ and can be computed using a truncated SVD algorithm to obtain $\bv{A}$'s top $m$ right singular vectors.

Our analysis is extremely close to \cite{feldman2013turning}, which claims that $O(k/\epsilon^2)$ singular vectors suffice (see their Corollary 4.2). Simply noticing that $k$-means amounts to a constrained low rank approximation problem is enough to tighten their result to   $\lceil k/\epsilon \rceil$. 
In Appendix \ref{lower_bound} we show that $\lceil k/\epsilon \rceil$ is tight -- we cannot take fewer singular vectors and hope to get a $(1+\epsilon)$ approximation in general.

As in \cite{DBLP:journals/corr/abs-1110-2897}, we show that our analysis is robust to imperfection in our singular vector computation. This allows for the use of approximate truncated SVD algorithms, which can be faster than exact methods \cite{tygertImpl}. Randomized SVD algorithms (surveyed in \cite{Halko:2011})  are often highly parallelizable and require few passes over $\bv{A}$, which limits costly memory accesses. In addition to standard Krylov subspace methods like the Lanczos algorithm, asymptotic runtime gains may also be substantial for sparse data matrices .

\subsection{Exact SVD}

\begin{theorem}\label{exact_svd}
Let $m = \lceil k/\epsilon \rceil$. For any $\bv{A} \in \mathbb{R}^{n \times d}$, the sketch $\bv{\tilde A} = \bv{A}_{m}$ satisfies the conditions of Definition \ref{def:1sidedsketch}. Specifically, for any rank $k$ orthogonal projection $\bv{P}$,
\begin{align*}
\norm{\bv{A} - \bv{P}\bv{A}}_F^2 \le \norm{\bv{\tilde A-P\tilde A}}_F^2 + c \le (1+\epsilon)  \norm{\bv{A} - \bv{P}\bv{A}}_F^2.
\end{align*}
\end{theorem}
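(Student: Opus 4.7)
The plan is to apply Lemma \ref{error_improved} with error matrix $\bv{E} \eqdef \bv{\tilde C} - \bv{C}$, where $\bv{C} = \bv{A}\bv{A}^\top$ and $\bv{\tilde C} = \bv{A}_m \bv{A}_m^\top$. The first step is to put $\bv{E}$ in closed form. Writing $\bv{A} = \bv{A}_m + \bv{A}_{r \setminus m}$ and using the matrix Pythagorean identity $\bv{A}_m \bv{A}_{r \setminus m}^\top = \bv{0}$, I would expand $\bv{A}\bv{A}^\top = \bv{A}_m \bv{A}_m^\top + \bv{A}_{r \setminus m}\bv{A}_{r \setminus m}^\top$, yielding $\bv{E} = -\bv{A}_{r\setminus m}\bv{A}_{r\setminus m}^\top$. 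This makes $\bv{E}$ manifestly symmetric and negative semidefinite, so two of the three hypotheses of Lemma \ref{error_improved} hold automatically.

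The main work is the eigenvalue bound $\sum_{i=1}^k |\lambda_i(\bv{E})| \le \epsilon \|\bv{A}_{r \setminus k}\|_F^2$. The nonzero eigenvalues of $-\bv{A}_{r\setminus m}\bv{A}_{r\setminus m}^\top$ are $-\sigma_{m+1}^2, \ldots, -\sigma_r^2$ (augmented with zeros if $r < m + k$), so the goal becomes bounding $B \eqdef \sum_{i=m+1}^{m+k}\sigma_i^2$ (with $\sigma_j \eqdef 0$ for $j > r$) in terms of $\|\bv{A}_{r \setminus k}\|_F^2 = \sum_{i=k+1}^r \sigma_i^2$. Monotonicity of the singular values is the key: setting $A \eqdef \sum_{i=k+1}^m \sigma_i^2$, each term in $B$ is at most $\sigma_{m+1}^2 \le \sigma_m^2$, which in turn is at most the average $A/(m-k)$. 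Hence $B/k \le A/(m-k)$, i.e., $A \ge \tfrac{m-k}{k} B$. Since $\|\bv{A}_{r \setminus k}\|_F^2 \ge A + B \ge \tfrac{m}{k} B$, I would conclude $B \le \tfrac{k}{m}\|\bv{A}_{r \setminus k}\|_F^2 \le \epsilon \|\bv{A}_{r \setminus k}\|_F^2$, where the last step uses $m = \lceil k/\epsilon \rceil \ge k/\epsilon$.

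With all three hypotheses of Lemma \ref{error_improved} verified, the lemma delivers the one-sided error bound of Definition \ref{def:1sidedsketch} with projection-independent constant $c = -\tr(\bv{E}) = \|\bv{A}_{r \setminus m}\|_F^2 \ge 0$. I do not anticipate a real obstacle here; the only subtlety is the singular-value accounting above, which reduces to a clean averaging argument once $A$ and $B$ are defined. The choice $m = \lceil k/\epsilon \rceil$ is exactly the threshold needed to force $k/m \le \epsilon$, consistent with the tightness statement in Appendix \ref{lower_bound}.
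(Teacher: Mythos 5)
Your proposal is correct and follows essentially the same route as the paper: it sets $\bv{E} = -\bv{A}_{r\setminus m}\bv{A}_{r\setminus m}^\top$, verifies symmetry and negative semidefiniteness, and bounds $\sum_{i=m+1}^{m+k}\sigma_i^2$ by $\epsilon\norm{\bv{A}_{r\setminus k}}_F^2$ via the same singular-value averaging that the paper uses in Equations \eqref{singular_value_bound} and \eqref{svd_head_bound}, then invokes Lemma \ref{error_improved}. Your bookkeeping with $A$ and $B$ is just a slightly rephrased version of the paper's observation that the tail sum of $k$ terms is the smallest part of a sum of $m = \lceil k/\epsilon\rceil$ terms, so there is no substantive difference.
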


\begin{proof}
\begin{align*}
\bv{\tilde C} = \bv{\tilde A} \bv{\tilde A}^\top = (\bv{A} - \bv{A}_{r \setminus m})(\bv{A} - \bv{A}_{r  \setminus m})^\top = \bv{ A} \bv{ A}^\top - \bv{A}_{r  \setminus m}\bv{A}_{r  \setminus m}^\top.
\end{align*}
The last equality follows from the fact that $\bv{A}\bv{A}_{r \setminus m}^\top = (\bv{A}_{m}+\bv{A}_{r \setminus m})\bv{A}_{r \setminus m}^\top = \bv{A}_{r \setminus m}\bv{A}_{r \setminus m}^\top$ since the rows of $\bv{A}_{r \setminus m}$ and $\bv{A}_{m}$ lie in orthogonal subspaces and so $\bv{A}_{m}\bv{A}_{r \setminus m}^\top = \bv{0}$.
Now, we simply apply Lemma \ref{error_improved}, setting $\bv{E} = -\bv{A}_{r \setminus m}\bv{A}_{r \setminus m}^\top$. We know that $\bv{\tilde C} = \bv{C} + \bv{E}$, $\bv{E}$ is symmetric, and $\bv{E} \preceq \bv{0}$ since $\bv{A}_{r \setminus m}\bv{A}_{r \setminus m}^\top$ is positive semidefinite. Finally,
\begin{align}\label{singular_value_bound}
\sum_{i=1}^k |\lambda_i(\bv{E})| = \sum_{i=1}^k \sigma^2_i(\bv{A}_{r \setminus m}) =  \sum_{i=m +1}^{m +k} \sigma^2_i(\bv{A}) \le \epsilon \norm{\bv{A}_{r \setminus k}}_F^2. 
\end{align}
The final inequality follows from the fact that 
\begin{align}\label{svd_head_bound}
 \norm{\bv{A}_{r \setminus k}}_F^2  = \sum_{i=k+1}^n \sigma_i^2(\bv{A}) \ge \sum_{i=k+1}^{m+k} \sigma_i^2(\bv{A}) \ge \frac{1}{\epsilon} \sum_{i=m+1}^{m+k} \sigma_i^2(\bv{A})
\end{align}
since the last sum contains just the smallest $k$ terms of the previous sum, which has $m = \lceil k/\epsilon \rceil$ terms in total.
So, by Lemma \ref{error_improved}, we have:
\begin{align*}
\norm{\bv{A} - \bv{P}\bv{A}}_F^2 \le \norm{\bv{\tilde A-P\tilde A}}_F^2 + c \le (1+\epsilon)  \norm{\bv{A} - \bv{P}\bv{A}}_F^2.
\end{align*}
\end{proof}

Note that, in practice, it may be possible to set $m \ll \lceil k/\epsilon \rceil$. Specifically, $\lceil k/\epsilon \rceil$ singular vectors are only required for the condition of Equation \ref{singular_value_bound},
\begin{align*}
\sum_{i=m +1}^{m +k} \sigma^2_i(\bv{A}) \le \epsilon \norm{\bv{A}_{r \setminus k}}_F^2,
\end{align*} 
when the top $\lceil k/\epsilon \rceil$ singular values of $\bv{A}$ are all equal. If the spectrum of $\bv{A}$ decays, the equation will hold for a smaller $m$. Furthermore, it is easy to check the condition by iteratively computing the singular values of $\bv{A}$ and stopping once a sufficiently high $m$ is found.

Finally, note that $\norm{\bv{\tilde A-P\tilde A}}_F^2 = \norm{(\bv{I-P}) \bv{U}_m\bs{\Sigma}_m\bv{V}_m^\top}_F^2 = \norm{(\bv{I-P}) \bv{U}_m\bs{\Sigma}_m}_F^2$ since $\bv{V}_m^\top$ is orthonormal. So, as claimed, the sketch $\bv{\tilde A} = \bv{U}_m\bs{\Sigma}_m \in \R^{n\times  \lceil k/\epsilon \rceil}$ also satisfies Definition \ref{def:1sidedsketch}.

%

\subsection{Approximate SVD}
\label{subsec:approx_svd}

Next we claim that any approximately optimal set of top singular vectors suffices for sketching $\bv{A}$.
\begin{theorem}\label{approx_svd}
Let $m = \lceil k/\epsilon \rceil$. For any $\bv{A} \in \mathbb{R}^{n \times d}$ and any orthonormal matrix $\bv{Z} \in \mathbb{R}^{d \times m}$ satisfying $\norm{\bv{A}-\bv{A}\bv{Z}\bv{Z}^\top}_F^2 \le (1+\epsilon') \norm{\bv{A}_{r \setminus m}}_F^2$, the sketch $\bv{\tilde A} = \bv{A}\bv{Z}\bv{Z}^\top$ satisfies the conditions of Definition \ref{def:1sidedsketch}. Specifically, for all rank $k$ orthogonal projections $\bv{P}$,
\begin{align*}
\norm{\bv{A} - \bv{P}\bv{A}}_F^2 \le \norm{\bv{\tilde A-P\tilde A}}_F^2 + c \le (1+\epsilon+\epsilon')  \norm{\bv{A} - \bv{P}\bv{A}}_F^2.
\end{align*}
\end{theorem}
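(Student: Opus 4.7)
The plan is to mirror the proof of Theorem~\ref{exact_svd} and reduce the claim to an application of Lemma~\ref{error_improved}, but this time handling the approximation slack $\epsilon'$ in the error bound on the top $k$ singular values of the residual. First I would compute $\bv{\tilde C}$: since $\bv{Z}$ has orthonormal columns, $\bv{Z}^\top\bv{Z} = \bv{I}$, so $\bv{\tilde C} = \bv{A}\bv{Z}\bv{Z}^\top\bv{Z}\bv{Z}^\top\bv{A}^\top = \bv{A}\bv{Z}\bv{Z}^\top\bv{A}^\top$. Setting $\bv{B} = \bv{A}(\bv{I}-\bv{Z}\bv{Z}^\top)$, the matrix Pythagorean theorem (applied to the orthogonal decomposition $\bv{A} = \bv{A}\bv{Z}\bv{Z}^\top + \bv{B}$, which satisfies $(\bv{A}\bv{Z}\bv{Z}^\top)\bv{B}^\top = \bv{0}$) gives $\bv{C} = \bv{\tilde C} + \bv{B}\bv{B}^\top$. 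Thus we can write $\bv{\tilde C} = \bv{C} + \bv{E}$ with $\bv{E} = -\bv{B}\bv{B}^\top$, which is symmetric and $\bv{E}\preceq \bv{0}$, exactly as required by Lemma~\ref{error_improved}.

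The only nontrivial part is to bound $\sum_{i=1}^k |\lambda_i(\bv{E})| = \sum_{i=1}^k \sigma_i^2(\bv{B}) = \norm{\bv{B}_k}_F^2$, where $\bv{B}_k$ is the best rank-$k$ approximation to $\bv{B}$. I would split this Frobenius norm as $\norm{\bv{B}_k}_F^2 = \norm{\bv{B}}_F^2 - \norm{\bv{B}-\bv{B}_k}_F^2$. The hypothesis directly gives $\norm{\bv{B}}_F^2 \le (1+\epsilon')\norm{\bv{A}_{r\setminus m}}_F^2$, so the crux is a lower bound on $\norm{\bv{B}-\bv{B}_k}_F^2$. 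For this, observe that $\bv{A}\bv{Z}\bv{Z}^\top + \bv{B}_k$ has rank at most $m+k$; since $\bv{A}_{m+k}$ is the optimal rank-$(m+k)$ approximation to $\bv{A}$, and $\bv{A} - (\bv{A}\bv{Z}\bv{Z}^\top + \bv{B}_k) = \bv{B} - \bv{B}_k$, we get $\norm{\bv{B}-\bv{B}_k}_F^2 \ge \norm{\bv{A}_{r\setminus(m+k)}}_F^2$. This is the main (but still mild) technical step.

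Combining these bounds,
\begin{align*}
\norm{\bv{B}_k}_F^2 &\le (1+\epsilon')\norm{\bv{A}_{r\setminus m}}_F^2 - \norm{\bv{A}_{r\setminus(m+k)}}_F^2\\
&= \epsilon'\norm{\bv{A}_{r\setminus m}}_F^2 + \sum_{i=m+1}^{m+k}\sigma_i^2(\bv{A}).
\end{align*}
The second summand is at most $\epsilon\norm{\bv{A}_{r\setminus k}}_F^2$ by exactly the inequality~\eqref{svd_head_bound} used in the proof of Theorem~\ref{exact_svd}, and the first is at most $\epsilon'\norm{\bv{A}_{r\setminus k}}_F^2$ since $m \ge k$ implies $\norm{\bv{A}_{r\setminus m}}_F^2 \le \norm{\bv{A}_{r\setminus k}}_F^2$. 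Hence $\sum_{i=1}^k|\lambda_i(\bv{E})| \le (\epsilon+\epsilon')\norm{\bv{A}_{r\setminus k}}_F^2$, and Lemma~\ref{error_improved} applied with error parameter $\epsilon+\epsilon'$ yields the theorem. I do not expect any serious obstacle beyond the rank-$(m+k)$ optimality argument used to lower bound $\norm{\bv{B}-\bv{B}_k}_F^2$, which is the only place the structure of an \emph{approximate} (rather than exact) top singular subspace really enters.
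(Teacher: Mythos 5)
Your proposal is correct and matches the paper's own proof essentially step for step: the same choice $\bv{E} = -(\bv{A}-\bv{A}\bv{Z}\bv{Z}^\top)(\bv{A}-\bv{A}\bv{Z}\bv{Z}^\top)^\top$ fed into Lemma \ref{error_improved}, and the same key bound on $\norm{(\bv{A}-\bv{A}\bv{Z}\bv{Z}^\top)_k}_F^2$ via the rank-$(m+k)$ optimality of $\bv{A}_{m+k}$ applied to $\bv{A}\bv{Z}\bv{Z}^\top + \bv{B}_k$, followed by \eqref{svd_head_bound} and $\norm{\bv{A}_{r\setminus m}}_F^2 \le \norm{\bv{A}_{r\setminus k}}_F^2$. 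No gaps.
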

In recent years, this sort of relative error approximation to the SVD has become standard \cite{sarlos2006improved,Halko:2011}. 
Additionally, note that this theorem implies that the sketch $\bv{A}\bv{Z} \in \R^{n\times  \lceil k/\epsilon \rceil}$ also satisfies Definition  \ref{def:1sidedsketch}. The proof of Theorem \ref{approx_svd} is included in Appendix \ref{approx_svd_proofs}.

\subsection{General Low Rank Approximation}

Finally, we consider an even more general case when $\bv{\tilde A}$ is a good low rank approximation of $\bv{A}$ but may not actually be a row projection of $\bv{A}$ -- i.e. $\bv{\tilde A}$ doesn't necessarily take the form $\bv{AZZ}^\top$. This is the sort of sketch obtained, for example, by the randomized low rank approximation result in \cite{Clarkson:2013} (see Theorem 47). Note that  \cite{Clarkson:2013} still returns a decomposition of the computed sketch, $\bv{\tilde A} = \bv{LDW}^\top$, where $\bv{L}$ and $\bv{W}$ have orthonormal columns and $\bv{D}$ is a $k\times k$ diagonal matrix. Thus, by using $\bv{LD}$, which has just $m$ columns, it is still possible to solve $k$-means (or some other constrained low rank approximation problem) on a matrix that is much smaller than $\bv{A}$.

 \begin{theorem}\label{general_svd}
Let $m = \lceil k/\epsilon \rceil$. For any $\bv{A} \in \mathbb{R}^{n \times d}$ and any $\bv{\tilde A} \in \mathbb{R}^{n \times d}$ with $rank(\bv{\tilde A}) = m$ satisfying $\norm{\bv{A}-\bv{\tilde A}}_F^2 \le (1+(\epsilon')^2) \norm{\bv{A}_{r \setminus m}}_F^2$, the sketch $\bv{\tilde A}$ satisfies the conditions of Definition \ref{def:2sidedsketch}. Specifically, for all rank $k$ orthogonal projections $\bv{P}$,
\begin{align*}
 (1-2\epsilon') \norm{\bv{A} - \bv{P}\bv{A}}_F^2 \le \norm{\bv{\tilde A-P\tilde A}}_F^2 + c \le (1+2\epsilon+5\epsilon')  \norm{\bv{A} - \bv{P}\bv{A}}_F^2.
\end{align*}
\end{theorem}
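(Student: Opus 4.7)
The approach is to reduce to Theorem~\ref{approx_svd} by decomposing $\bv{\tilde A}$ into the orthogonal projection of $\bv{A}$ onto the row span of $\bv{\tilde A}$ plus a small in-span residual. Let $\bv{Z} \in \R^{d \times m}$ be an orthonormal basis for the row span of $\bv{\tilde A}$, and set $\bv{B} = \bv{A}\bv{Z}\bv{Z}^\top$ and $\bv{N} = \bv{\tilde A} - \bv{B}$. The rows of $\bv{A} - \bv{B} = \bv{A}(\bv{I} - \bv{Z}\bv{Z}^\top)$ lie in the orthogonal complement of the row span of $\bv{Z}^\top$, while the rows of $\bv{N}$ lie inside that span (both $\bv{\tilde A}$ and $\bv{B}$ do), so the matrix Pythagorean theorem gives $\norm{\bv{A} - \bv{\tilde A}}_F^2 = \norm{\bv{A} - \bv{B}}_F^2 + \norm{\bv{N}}_F^2$. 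Combined with the hypothesis and with the rank-$m$ lower bound $\norm{\bv{A} - \bv{B}}_F^2 \ge \norm{\bv{A}_{r \setminus m}}_F^2$, this yields both $\norm{\bv{A} - \bv{B}}_F^2 \le (1+(\epsilon')^2)\norm{\bv{A}_{r \setminus m}}_F^2$ and $\norm{\bv{N}}_F^2 \le (\epsilon')^2 \norm{\bv{A}_{r \setminus m}}_F^2$.

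Now I apply Theorem~\ref{approx_svd} to $\bv{B} = \bv{A}\bv{Z}\bv{Z}^\top$ with approximation parameter $(\epsilon')^2$, producing a fixed $c \ge 0$ such that, for every rank-$k$ projection $\bv{P}$ with $\bv{Y} = \bv{I} - \bv{P}$,
\[\norm{\bv{A} - \bv{P}\bv{A}}_F^2 \;\le\; \norm{\bv{Y}\bv{B}}_F^2 + c \;\le\; (1+\epsilon+(\epsilon')^2)\norm{\bv{A} - \bv{P}\bv{A}}_F^2.\]
Expanding $\bv{\tilde A} = \bv{B} + \bv{N}$ gives
\[\norm{\bv{Y}\bv{\tilde A}}_F^2 = \norm{\bv{Y}\bv{B}}_F^2 + 2\langle \bv{Y}\bv{B}, \bv{Y}\bv{N}\rangle_F + \norm{\bv{Y}\bv{N}}_F^2,\]
so the task reduces to showing that the last two terms contribute at most $O(\epsilon')\norm{\bv{A} - \bv{P}\bv{A}}_F^2$ so that the \emph{same} constant $c$ witnesses Definition~\ref{def:2sidedsketch} for $\bv{\tilde A}$.

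For the tail, spectral submultiplicativity under the projection $\bv{Y}$ gives $\norm{\bv{Y}\bv{N}}_F^2 \le \norm{\bv{N}}_F^2 \le (\epsilon')^2 \norm{\bv{A}_{r \setminus m}}_F^2 \le (\epsilon')^2 \norm{\bv{A} - \bv{P}\bv{A}}_F^2$, using $m \ge k$ and the fact that $\norm{\bv{A}_{r \setminus k}}_F^2$ is the universal lower bound on any rank-$k$ projection cost (Section~\ref{subsec:constrained}). For the cross term, Cauchy--Schwarz in the Frobenius inner product gives $|\langle \bv{Y}\bv{B}, \bv{Y}\bv{N}\rangle_F| \le \norm{\bv{Y}\bv{B}}_F \cdot \norm{\bv{Y}\bv{N}}_F \le \sqrt{1+\epsilon+(\epsilon')^2}\,\epsilon'\, \norm{\bv{A}-\bv{P}\bv{A}}_F^2$, where the first factor is bounded by the Theorem~\ref{approx_svd} display together with $c \ge 0$. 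Assembling the two inequalities and absorbing the higher-order terms $O(\epsilon\epsilon')$ and $O((\epsilon')^2)$ gives the claimed constants $(1-2\epsilon')$ and $(1+2\epsilon+5\epsilon')$. The main obstacle is this cross term: it is only first-order in $\epsilon'$ and can take either sign, which is precisely what forces the asymmetric constants in the statement. Controlling it cleanly hinges on pairing the Theorem~\ref{approx_svd} bound on $\norm{\bv{Y}\bv{B}}_F$ (which measures the cost against $\bv{A}$, not against $\bv{A}_{r \setminus m}$) with the Pythagorean budget $\norm{\bv{N}}_F \le \epsilon'\norm{\bv{A}_{r\setminus m}}_F$ extracted in the first step.
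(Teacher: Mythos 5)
Your proposal follows the paper's own proof very closely: you use the same decomposition $\bv{\tilde A} = \bv{A}\bv{Z}\bv{Z}^\top + \bv{N}$ with $\bv Z$ an orthonormal basis for the row span of $\bv{\tilde A}$, the same Pythagorean split of $\norm{\bv A-\bv{\tilde A}}_F^2$ giving $\norm{\bv N}_F^2 \le (\epsilon')^2\norm{\bv{A}_{r\setminus m}}_F^2$, and the same reduction to Theorem~\ref{approx_svd} applied to $\bv{A}\bv Z\bv Z^\top$. Expanding $\norm{\bv Y\bv{\tilde A}}_F^2$ and bounding the cross term is algebraically equivalent to the paper's triangle-inequality argument, and your upper bound comes out with the same coefficient $1+\epsilon+2\epsilon'+2(\epsilon')^2+\epsilon\epsilon'+(\epsilon')^3\le 1+2\epsilon+5\epsilon'$.

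The one place the proposal is not quite right is the claim that the \emph{same} constant $c$ from Theorem~\ref{approx_svd} witnesses Definition~\ref{def:2sidedsketch} after ``absorbing the higher-order terms.'' On the lower side, your Cauchy--Schwarz bound yields
\[
\norm{\bv Y\bv{\tilde A}}_F^2 + c \;\ge\; \bigl(1 - 2\epsilon'\sqrt{1+\epsilon+(\epsilon')^2}\bigr)\norm{\bv A-\bv P\bv A}_F^2,
\]
and $2\epsilon'\sqrt{1+\epsilon+(\epsilon')^2} > 2\epsilon'$, so the coefficient lands strictly \emph{below} $1-2\epsilon'$. Positive first-order terms cannot be absorbed into a lower-bound coefficient the way they can into the generous upper-bound constant. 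The paper avoids this by a slightly different bookkeeping: it applies AM--GM as $2\epsilon'\norm{\bv Y\bv B}_F\norm{\bv A_{r\setminus m}}_F \le \epsilon'\norm{\bv Y\bv B}_F^2 + \epsilon'\norm{\bv A_{r\setminus m}}_F^2$, keeps the $c$-bearing inequality inside the resulting $(1\mp\epsilon')$ factor (so $\norm{\bv Y\bv B}_F^2 \ge \norm{\bv Y\bv A}_F^2 - c$ is multiplied by $(1-\epsilon')$), and then defines the sketch constant as $c' = (1+\epsilon')c$, which is $\ge (1-\epsilon')c$ and $\ge 0$. Definition~\ref{def:2sidedsketch} only requires \emph{some} nonnegative constant, so this rescaling is free and yields the stated $(1-2\epsilon')$ exactly. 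Your argument gives the same theorem up to an inconsequential change of lower-bound coefficient, but to match the statement verbatim you should either adopt the paper's AM--GM split with $c'=(1+\epsilon')c$, or weaken the lower-side constant to account for the extra $O(\epsilon\epsilon'+(\epsilon')^3)$ term.
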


Generally, the result follows from noting that any good low rank approximation to $\bv{A}$ cannot be far from an actual   rank $k$ projection of $\bv{A}$. Our proof is included in Appendix \ref{approx_svd_proofs}.

\section{Reduction to Spectral Norm Matrix Approximation}
\label{block_section}

To prove our column selection and random projection results, we rely on a reduction from the requirements of Lemma \ref{error} to \emph{spectral norm matrix approximation}. For column selection and random projection, we can always write $\bv{\tilde A} = \bv{A}\bv{R}$, where $\bv{R}\in \R^{d\times m}$ is either a diagonal matrix that selects and reweights columns of $\bv{A}$ or a random Johnson-Lindenstrauss matrix. In order to simplify our proofs we wish to construct a new matrix $\bv{B}$ such that, along with a few other conditions,
\begin{align*}
\norm{\bv{BR}\bv{R}^\top\bv{B}^\top - \bv{B}\bv{B}^\top}_2 < \epsilon
\end{align*}
 implies that $\bv{\tilde A} = \bv{A}\bv{R}$ satisfies the conditions of Lemma \ref{error}. Specifically we show:
\begin{lemma}
\label{blocklem}
Suppose that, for $m \le 2k$, we have some $\bv{Z} \in \mathbb{R}^{d \times m}$ with orthonormal columns satisfying $\norm{\bv{A}-\bv{A}\bv{Z}\bv{Z}^\top}_F^2 \le 2 \norm{\bv{A}_{r \setminus k}}_F^2$ and $\norm{\bv{A}-\bv{A}\bv{Z}\bv{Z}^\top}^2_2 \le \frac{2}{k} \norm{\bv{A}_{r \setminus k}}_F^2$. Set $\bv{B} \in \mathbb{R}^{(n+m) \times d}$ to have $\bv{B}_1 = \bv{Z}^\top$ as its first $m$ rows and $\bv{B}_2 = \frac{\sqrt{k}}{\norm{\bv{A}_{r \setminus k}}_F} \cdot(\bv{A-AZZ}^\top)$ as its lower $n$ rows. Then $1 \le \norm{\bv{BB}^\top}_2 \le 2$, $\tr(\bv{B}\bv{B}^\top) \le 3k$, and $\tr(\bv{B}_2\bv{B}_2^\top) \le 2k$. Furthermore, if
\begin{align}\label{spectral_cond}
\norm{\bv{BR}\bv{R}^\top\bv{B}^\top - \bv{B}\bv{B}^\top}_2 < \epsilon
\end{align}
and 
\begin{align}\label{trace_cond}
\tr(\bv{B}_2 \bv{RR}^\top \bv{B}_2^\top) - \tr(\bv{B}_2\bv{B}_2^\top) \le \epsilon k,
\end{align}
then $\bv{\tilde A} = \bv{AR}$ satisfies the conditions of Lemma \ref{error} with error $ 6\epsilon$.
\end{lemma}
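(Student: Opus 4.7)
My strategy is to first verify the three numerical bounds on $\bv{B}$ directly from the block structure, and then to decompose the error $\bv{E} = \bv{A}\bv{R}\bv{R}^\top\bv{A}^\top - \bv{A}\bv{A}^\top$ into four pieces matching $\bv{E}_1,\bv{E}_2,\bv{E}_3,\bv{E}_4$ of Lemma~\ref{error}, and control each piece by the corresponding block of $\bv{B}\bv{R}\bv{R}^\top\bv{B}^\top - \bv{B}\bv{B}^\top$.

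\textbf{Bounds on $\bv{B}$.} The key observation is $\bv{B}_1\bv{B}_2^\top = \frac{\sqrt{k}}{\norm{\bv{A}_{r\setminus k}}_F}\bv{Z}^\top(\bv{A}-\bv{A}\bv{Z}\bv{Z}^\top)^\top = 0$, since $\bv{Z}^\top - \bv{Z}^\top\bv{Z}\bv{Z}^\top = 0$. Hence $\bv{B}\bv{B}^\top$ is block-diagonal with blocks $\bv{I}_m$ and $\bv{B}_2\bv{B}_2^\top$, and the two hypotheses on $\bv{A}-\bv{A}\bv{Z}\bv{Z}^\top$ immediately give $\norm{\bv{B}_2\bv{B}_2^\top}_2 \le 2$ and $\tr(\bv{B}_2\bv{B}_2^\top) \le 2k$, from which all three numerical claims follow.

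\textbf{Decomposition and term-by-term bounds.} Set $\bv{A}_Z = \bv{A}\bv{Z}\bv{Z}^\top$ and $\bv{A}_{\setminus Z} = \bv{A}-\bv{A}_Z$, noting $\bv{A}_Z\bv{A}_{\setminus Z}^\top = 0$. Expanding $\bv{A} = \bv{A}_Z + \bv{A}_{\setminus Z}$ in $\bv{E}$ defines $\bv{E}_1 = \bv{A}_Z(\bv{R}\bv{R}^\top-\bv{I})\bv{A}_Z^\top$, $\bv{E}_2 = \bv{A}_{\setminus Z}(\bv{R}\bv{R}^\top-\bv{I})\bv{A}_{\setminus Z}^\top$, $\bv{E}_3 = \bv{A}_Z\bv{R}\bv{R}^\top\bv{A}_{\setminus Z}^\top$, and $\bv{E}_4 = \bv{E}_3^\top$; these correspond to the four blocks of $\bv{B}\bv{R}\bv{R}^\top\bv{B}^\top - \bv{B}\bv{B}^\top$ up to scaling and conjugation. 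From the upper-left block, $\norm{\bv{Z}^\top\bv{R}\bv{R}^\top\bv{Z}-\bv{I}_m}_2 \le \epsilon$; conjugating by $\bv{A}\bv{Z}$ gives $-\epsilon\bv{C}\preceq\bv{E}_1\preceq\epsilon\bv{C}$ using $\bv{A}\bv{Z}\bv{Z}^\top\bv{A}^\top\preceq\bv{C}$, so $\epsilon_1\le\epsilon$. From the lower-right block and $\bv{E}_2 = \frac{\norm{\bv{A}_{r\setminus k}}_F^2}{k}(\bv{B}_2\bv{R}\bv{R}^\top\bv{B}_2^\top - \bv{B}_2\bv{B}_2^\top)$ we get $\norm{\bv{E}_2}_2 \le \frac{\epsilon}{k}\norm{\bv{A}_{r\setminus k}}_F^2$, hence $\sum_{i=1}^k|\lambda_i(\bv{E}_2)|\le\epsilon\norm{\bv{A}_{r\setminus k}}_F^2$, and the explicit trace hypothesis \eqref{trace_cond} gives $\tr(\bv{E}_2)\le\epsilon\norm{\bv{A}_{r\setminus k}}_F^2$, so $\epsilon_2,\epsilon_2'\le\epsilon$. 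For $\bv{E}_3$, the off-diagonal block gives $\norm{\bv{Z}^\top\bv{R}\bv{R}^\top\bv{A}_{\setminus Z}^\top}_2 \le \frac{\norm{\bv{A}_{r\setminus k}}_F}{\sqrt{k}}\epsilon$; choosing $\bv{Z}$ within the row span of $\bv{A}$ (which we may, since $\bv{A}\bv{Z}\bv{Z}^\top$ is unchanged under such projection), the identity $\bv{A}^\top\bv{C}^+\bv{A}\bv{Z} = \bv{Z}$ collapses $\tr(\bv{E}_3^\top\bv{C}^+\bv{E}_3)$ to $\norm{\bv{Z}^\top\bv{R}\bv{R}^\top\bv{A}_{\setminus Z}^\top}_F^2$; since this matrix has rank at most $m\le 2k$, $\norm{\cdot}_F^2 \le 2k\cdot\norm{\cdot}_2^2 \le 2\epsilon^2\norm{\bv{A}_{r\setminus k}}_F^2$, so $\epsilon_3 \le \sqrt{2}\epsilon$. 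A symmetric argument handles $\bv{E}_4$. Summing: $\epsilon_1+\epsilon_2+\epsilon_2'+\epsilon_3+\epsilon_4 \le (3+2\sqrt{2})\epsilon < 6\epsilon$.

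\textbf{Main obstacle.} The delicate step is the cross-term analysis: translating an off-diagonal spectral-norm bound into the Frobenius-flavored quantity $\tr(\bv{E}_3^\top\bv{C}^+\bv{E}_3)$ that Lemma~\ref{error} requires. This uses both the rank constraint $m\le 2k$ (to pay only a $\sqrt{2k}$ factor when passing from $\norm{\cdot}_2$ to $\norm{\cdot}_F$) and the interpretation of $\bv{A}^\top\bv{C}^+\bv{A}$ as the orthogonal projector onto the row span of $\bv{A}$, which lets $\bv{C}^+$ cleanly ``cancel'' the copy of $\bv{A}\bv{Z}$ sitting inside $\bv{E}_3$; every other piece of the argument is routine block manipulation.
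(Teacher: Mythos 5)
Your proof is, in substance, the same as the paper's: the same block-diagonal analysis of $\bv{B}\bv{B}^\top$, the same four-term split of $\bv{\tilde C}-\bv{C}$ (your $\bv{E}_1,\dots,\bv{E}_4$ coincide exactly with the paper's, since $\bv{A}_Z\bv{A}_{\setminus Z}^\top=\bv{0}$; the paper merely packages the two pieces as $\bv{W}_1\bv{B}$ and $\bv{W}_2\bv{B}$ for lifting matrices $\bv{W}_1,\bv{W}_2$ rather than reading off blocks of $\bv{B}\bv{R}\bv{R}^\top\bv{B}^\top-\bv{B}\bv{B}^\top$ directly), the same use of \eqref{trace_cond} for $\tr(\bv{E}_2)$, and the same rank-$(m\le 2k)$ argument giving $\epsilon_3=\epsilon_4=\sqrt{2}\epsilon$ and total error $(3+2\sqrt{2})\epsilon\le 6\epsilon$.

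The one step you should repair is the claimed reduction ``choose $\bv{Z}$ within the row span of $\bv{A}$, since $\bv{A}\bv{Z}\bv{Z}^\top$ is unchanged under such projection.'' That justification is false in general: for $\bv{A}=(1,0)$ and $\bv{Z}=\tfrac{1}{\sqrt{2}}(1,1)^\top$ one has $\bv{A}\bv{Z}\bv{Z}^\top=(1/2,1/2)$, while projecting $\bv{Z}$ onto the row span of $\bv{A}$ and renormalizing gives $(1,0)$; moreover, replacing $\bv{Z}$ would change $\bv{B}$, and the hypotheses \eqref{spectral_cond} and \eqref{trace_cond} are assumptions about the $\bv{B}$ built from the \emph{given} $\bv{Z}$. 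Fortunately the step is unnecessary: writing $\bv{A}=\bv{U}\bv{\Sigma}\bv{V}^\top$, one always has $\bv{Z}^\top\bv{A}^\top\bv{C}^+\bv{A}\bv{Z}=\bv{Z}^\top\bv{V}\bv{V}^\top\bv{Z}\preceq \bv{I}$, so with $\bv{N}=\bv{Z}^\top\bv{R}\bv{R}^\top\bv{A}_{\setminus Z}^\top$ you get $\tr(\bv{E}_3^\top\bv{C}^+\bv{E}_3)=\tr\left(\bv{N}^\top(\bv{Z}^\top\bv{V}\bv{V}^\top\bv{Z})\bv{N}\right)\le \norm{\bv{N}}_F^2$, i.e. your identity becomes an inequality and the rest of your bound goes through verbatim. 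This is precisely how the paper handles the cross terms, via $\bv{W}_1\bv{W}_1^\top\preceq\bv{C}$ and hence $\bv{W}_1^\top\bv{C}^+\bv{W}_1\preceq\bv{I}$.
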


Note that the construction of $\bv{B}$ is really an approach to splitting $\bv{A}$ into orthogonal pairs as described in Section \ref{subsec:our_approach}. The conditions on $\bv{Z}$ ensure that $\bv{AZ}\bv{Z}^\top$ is a good low rank approximation for $\bv{A}$ in both the Frobenius norm and spectral norm sense. We could simply define $\bv{B}$ with $\bv{Z} = \bv{V}_{2k}$, the top right singular vectors of $\bv{A}$. In fact, this is what we will do for our random projection result. However, in order to compute sampling probabilities for column selection, we will need to compute $\bv{Z}$ explicitly and so want the flexibility of using an approximate SVD algorithm. 

\begin{proof} 
We first show that $1\le \norm{\bv{BB}^\top}_2 \le 2$. Notice that $\bv{B}_1 \bv{B}_2^\top = \bv{0}$, so $\bv{BB}^\top$ is a block diagonal matrix with an upper left block equal to $\bv{B}_1\bv{B}_1^\top = \bv{I}$ and lower right block equal to $\bv{B}_2 \bv{B}_2^\top$. 
The spectral norm of the upper left block is $1$. By our spectral norm bound on $\bv{A}-\bv{A}\bv{ZZ}^\top$, $\norm{\bv{B}_2\bv{B}_2^\top}_2 \le \frac{2}{k} \norm{\bv{A}_{r \setminus k}}_F^2 \frac{k}{\norm{\bv{A}_{r \setminus k}}_F^2} = 2$, giving us the upper bound for $\bv{BB}^\top$. Additionally,  $\tr(\bv{B}_2\bv{B}_2^\top) \le \frac{k}{\norm{\bv{A}_{r \setminus k}}_F^2} \norm{\bv{A}-\bv{A}\bv{Z}\bv{Z}^\top}_F^2 \le 2k$ by our Frobenius norm condition on $\bv{A}-\bv{A}\bv{Z}\bv{Z}^\top$. Finally, $\tr(\bv{B}\bv{B}^\top) = \tr(\bv{B}_1\bv{B}_1^\top) + \tr(\bv{B}_2\bv{B}_2^\top) \le 3k$.

We now proceed to the main reduction. Start by setting $\bv{E} = \bv{\tilde C} - \bv{C} = \bv{AR}\bv{R}^\top \bv{A}^\top - \bv{AA}^\top$.
Now, choose $\bv{W}_1 \in \mathbb{R}^{n \times (n+m)}$ such that $\bv{W}_1\bv{B} = \bv{A}\bv{ZZ}^\top$. Note that $\bv{W}_1$ has all columns other than its first $m$ as zero, since reconstructing $\bv{A}\bv{ZZ}^\top$ only requires recombining rows of $\bv{B}_1 = \bv{Z}^\top$. Set $\bv{W}_2 \in \mathbb{R}^{n \times (n+m)}$ to have its first $m$ columns zero and its next $n$ columns as the $n \times n$ identity matrix multiplied by $\frac{\norm{\bv{A}_{r \setminus k}}_F}{\sqrt{k}}$. This insures that $\bv{W}_2 \bv{B} =\frac{\norm{\bv{A}_{r \setminus k}}_F}{\sqrt{k}} \bv{B}_2 = \bv{A} - \bv{AZZ}^\top$. So, $\bv{A} = \bv{W}_1 \bv{B} + \bv{W}_2 \bv{B}$ and we can rewrite:
\begin{align*}
\bv{E} = (\bv{W}_1 \bv{B} \bv{R} \bv{R}^\top \bv{B}^\top \bv{W}_1^\top - \bv{W}_1 \bv{B} \bv{B}^\top \bv{W}_1^\top) + 
(\bv{W}_2 \bv{B} \bv{R} \bv{R}^\top \bv{B}^\top \bv{W}_2^\top - \bv{W}_2 \bv{B} \bv{B}^\top \bv{W}_2^\top) +\\
 (\bv{W}_1 \bv{B} \bv{R} \bv{R}^\top \bv{B}^\top \bv{W}_2^\top - \bv{W}_1 \bv{B} \bv{B}^\top \bv{W}_2^\top) + 
(\bv{W}_2 \bv{B} \bv{R} \bv{R}^\top \bv{B}^\top \bv{W}_1^\top - \bv{W}_2 \bv{B} \bv{B}^\top \bv{W}_1^\top)
\end{align*}

We consider each term of this sum separately, showing that each corresponds to one of the allowed error terms from Lemma \ref{error}. Set $\bv{E}_1 = (\bv{W}_1 \bv{B} \bv{R} \bv{R}^\top \bv{B}^\top \bv{W}_1^\top - \bv{W}_1 \bv{B} \bv{B}^\top \bv{W}_1^\top)$. Clearly $\bv{E}_1$ is symmetric. If, as required, $\norm{\bv{BR}\bv{R}^\top\bv{B}^\top - \bv{B}\bv{B}^\top}_2 < \epsilon$,  $-\epsilon \bv{I} \preceq (\bv{BR}\bv{R}^\top\bv{B}^\top - \bv{B}\bv{B}^\top) \preceq \epsilon \bv{I}$ so $-\epsilon \bv{W}_1 \bv{W}_1^\top \preceq \bv{E}_1 \preceq \epsilon \bv{W}_1 \bv{W}_1^\top$. Furthermore, $\bv{W}_1 \bv{BB}^\top \bv{W}_1^\top = \bv{A}\bv{ZZ}^\top \bv{ZZ}^\top \bv{A}^\top \preceq \bv{A}\bv{A}^\top = \bv{C}$. Since $\bv{W}_1$ is all zeros except in its first $m$ columns and since $\bv{B}_1 \bv{B}_1^\top = \bv{I}$, $\bv{W}_1 \bv{W}_1^\top = \bv{W}_1 \bv{BB}^\top \bv{W}_1^\top$. This gives us:
\begin{align}\label{w_bound}
\bv{W}_1 \bv{W}_1^\top = \bv{W}_1 \bv{BB}^\top \bv{W}_1^\top \preceq \bv{C}.
\end{align}
So overall we have:
\begin{align}\label{block_1}
-\epsilon \bv{C} \preceq \bv{E}_1 \preceq \epsilon \bv{C},
\end{align}
satisfying the error conditions of Lemma \ref{error}.

Next, set $\bv{E}_2 = (\bv{W}_2 \bv{B} \bv{R} \bv{R}^\top \bv{B}^\top \bv{W}_2^\top - \bv{W}_2 \bv{B} \bv{B}^\top \bv{W}_2^\top)$. Again, $\bv{E}_2$ is symmetric and 
\begin{align}\label{block_2_1}
\tr(\bv{E}_2)  = \frac{\norm{\bv{A}_{r \setminus k}}_F^2}{k}\tr(\bv{B}_2 \bv{R} \bv{R}^\top \bv{B}_2^\top - \bv{B}_2 \bv{B}_2^\top) \le \epsilon \norm{\bv{A}_{r \setminus k}}_F^2
\end{align}
by condition \eqref{trace_cond}. Furthermore,
\begin{align}
\sum_{i=1}^k \left | \lambda_i(\bv{E}_2) \right | &\le k \cdot |\lambda_1 (\bv{E}_2)| \nonumber \\
&\le k \cdot \frac{\norm{\bv{A}_{r \setminus k}}_F^2}{k} |\lambda_1(\bv{B}_2 \bv{R} \bv{R}^\top \bv{B}_2^\top - \bv{B}_2 \bv{B}_2^\top)| \nonumber\\
&\le \norm{\bv{A}_{r \setminus k}}_F^2 \cdot |\lambda_1(\bv{B} \bv{R} \bv{R}^\top \bv{B}^\top - \bv{B} \bv{B}^\top)|\nonumber\\
&\le \epsilon \norm{\bv{A}_{r \setminus k}}_F^2\label{block_2_2}
\end{align}
by condition \eqref{spectral_cond}. So $\bv{E}_2$ also satisfies the conditions of Lemma \ref{error}.

Next, set $\bv{E}_3 =  (\bv{W}_1 \bv{B} \bv{R} \bv{R}^\top \bv{B}^\top \bv{W}_2^\top - \bv{W}_1 \bv{B} \bv{B}^\top \bv{W}_2^\top)$. The columns of $\bv{E}_3$ are in the column span of $\bv{W}_1\bv{B} = \bv{A}\bv{Z}\bv{Z}^\top$, and so in the column span of $\bv{C}$. 
Now:
\begin{align*}
\bv{E}_3^\top \bv{C}^+ \bv{E}_3 = \bv{W}_2 (\bv{B} \bv{R} \bv{R}^\top \bv{B}^\top - \bv{B}\bv{B}^\top) \bv{W}_1^\top \bv{C}^+ \bv{W}_1 (\bv{B} \bv{R} \bv{R}^\top \bv{B}^\top - \bv{B}\bv{B}^\top) \bv{W}_2^\top.
\end{align*}
$\bv{W}_1 \bv{W}_1^\top \preceq \bv{C}$ by \eqref{w_bound}, so $\bv{W}_1^\top \bv{C}^+ \bv{W}_1 \preceq \bv{I}$. So:
\begin{align*}
\bv{E}_3^\top \bv{C}^+ \bv{E}_3 &\preceq \bv{W}_2 (\bv{B} \bv{R} \bv{R}^\top \bv{B}^\top - \bv{B}\bv{B}^\top)^2 \bv{W}_2^\top
\end{align*}
which gives:
\begin{align*}
\norm{\bv{E}_3^\top \bv{C}^+ \bv{E}_3}_2 &\le \norm{\bv{W}_2 (\bv{B} \bv{R} \bv{R}^\top \bv{B}^\top - \bv{B}\bv{B}^\top)^2 \bv{W}_2^\top}_2 \le \frac{\norm{\bv{A}_{r \setminus k}}_F^2}{k} \norm{(\bv{B} \bv{R} \bv{R}^\top \bv{B}^\top - \bv{B}\bv{B}^\top)^2}_2 \le \epsilon^2 \frac{\norm{\bv{A}_{r \setminus k}}_F^2}{k}
\end{align*}
by condition \eqref{spectral_cond}.
Now, $\bv{E}_3$ and hence $\bv{E}_3^\top \bv{C}^+ \bv{E}_3$ only have rank $m \le 2k$ so \begin{align}\label{block_3}
\tr(\bv{E}_3^\top \bv{C}^+ \bv{E}_3) \le 2\epsilon^2 \norm{\bv{A}_{r \setminus k}}^2_F.
\end{align}

Finally, we set $\bv{E}_4 =  (\bv{W}_2 \bv{B} \bv{R} \bv{R}^\top \bv{B}^\top \bv{W}_1^\top - \bv{W}_2 \bv{B} \bv{B}^\top \bv{W}_1^\top) = \bv{E}_3^\top$ and thus immediately have:
\begin{align}\label{block_4}
\tr(\bv{E}_4 \bv{C}^+ \bv{E}_4^\top) \le 2\epsilon^2 \norm{\bv{A}_{r \setminus k}}^2_F.
\end{align}
Together, \eqref{block_1}, \eqref{block_2_1}, \eqref{block_2_2}, \eqref{block_3}, and \eqref{block_4} ensure that $\bv{\tilde A} = \bv{AR}$ satisfies Lemma \ref{error} with error $3\epsilon + 2\sqrt{2} \epsilon \le 6\epsilon$.

\end{proof}
\section{Random Projection and Feature Selection}\label{spectral_norm_proofs}
The reduction in Lemma \ref{blocklem} reduces the problem of finding a projection-cost preserving sketch to well understood matrix sketching guarantees -- subspace embedding \eqref{spectral_cond} and trace preservation \eqref{trace_cond}. A variety of known sketching techniques achieve the error bounds required, including several families of
\emph{subspace embedding} matrices which are referred to as Johnson-Lindenstrauss or random projection matrices throughout this paper. These families are listed alongside randomized column sampling and deterministic column selection sketches below. Note that, to better match previous writing in this area, the matrix matrix $\bv M$ given below will correspond to the transpose of $\bv B$ in Lemma \ref{blocklem}. 

\begin{lemma}
\label{matapprox}
Let $\bv M$ be a matrix with $q$ rows, $\norm{\bv M^\top \bv M}_2 \le 1$, and $\frac{\tr(\bv M^\top \bv M)}{\norm{\bv M^\top \bv M}_2} \leq k$. Suppose $\bv{R}$ is a sketch drawn from any of the following probability distributions of matrices.  Then, for any $\epsilon < 1$ and $\delta < 1/2$, $\norm{\bv{M}^\top \bv{R}^\top \bv R \bv M - \bv{M}^\top \bv M}_2 \leq \epsilon$ and $\left|\tr(\bv{M}^\top \bv{R}^\top \bv R \bv M) - \tr(\bv{M}^\top \bv M)\right| \leq \epsilon k$ with probability at least $1-\delta$.
\begin{enumerate}
\item
$\bv R$ a dense Johnson-Lindenstrauss matrix: a matrix with $q$ columns and $d' = O \left ( \frac{k + \log(1/\delta)}{\epsilon^2} \right )$ rows, with each element chosen independently and uniformly $\pm \sqrt{\frac{1}{d'}}$ \cite{Achlioptas:2003}.  Additionally, the same matrix family except with elements only $O(\log(k / \delta))$-independent \cite{clarkson2009numerical}.

\item
$\bv R$ a fully sparse embedding matrix: a matrix with $q$ columns and $d' = O \left ( \frac{k^2}{\epsilon^2 \delta} \right )$ rows, where each column has a single $\pm 1$ in a random position (sign and position chosen uniformly and independently).  Additionally, the same matrix family except where the position and sign for each column are determined by a 4-independent hash function \cite{Clarkson:2013,meng2013low, DBLP:conf/focs/NelsonN13}.

\item
$\bv R$ an OSNAP sparse subspace embedding matrix \cite{DBLP:conf/focs/NelsonN13}.

\item
$\bv R$ a diagonal matrix that samples and reweights $d' = O \left (\frac{k\log(k / \delta)}{\epsilon^2} \right)$ rows of $\bv M$, selecting each with probability proportional to $\norm{\bv M_i}_2^2$ and reweighting by the inverse probability. Alternatively, $\bv R$ that samples $O \left (\frac{\sum_i t_i \log(\sum_i t_i / \delta)}{\epsilon^2} \right)$ rows of $\bv{M}$ each with probability proportional $t_i$, where $t_i \ge \norm{\bv M_i}_2^2$ for all $i$
\cite{stablechernoff}.

\item
$\bv R$ a `BSS matrix': a deterministic diagonal matrix generated by a polynomial time algorithm that selects and reweights $d' = O\left( \frac{k}{\epsilon^2} \right)$ rows of $\bv{M}$ \cite{batson2012twice,jelaniDiscussion}.
\end{enumerate}
\end{lemma}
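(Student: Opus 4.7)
\medskip

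\noindent\textbf{Proof plan for Lemma \ref{matapprox}.} The plan is to handle the spectral and trace bounds separately, since they are qualitatively different guarantees (the first is an approximate matrix product / subspace embedding statement, the second is a scalar concentration statement), and to invoke existing sketching results for each of the five families, using the hypothesis on $\bv{M}$ only through its stable rank $\tr(\bv{M}^\top\bv{M})/\norm{\bv{M}^\top\bv{M}}_2\le k$.

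For the trace bound I would first observe that $\tr(\bv{M}^\top\bv{R}^\top\bv{R}\bv{M})=\norm{\bv{R}\bv{M}}_F^2$ and $\tr(\bv{M}^\top\bv{M})=\norm{\bv{M}}_F^2\le k$, so it suffices to show $\bigl|\norm{\bv{R}\bv{M}}_F^2-\norm{\bv{M}}_F^2\bigr|\le \epsilon k$. For families (1)--(3) this is a direct consequence of the Frobenius-norm approximate matrix multiplication guarantee satisfied by any JL-moment matrix with $d'=\Omega(1/\epsilon^2)$ rows (applied columnwise to $\bv{M}$ and then summed, or more cleanly through a $(\epsilon,\delta,2)$-JL moment calculation of $\E\|\bv{R}\bv{M}\|_F^{2p}$). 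For family (4), row-norm sampling is an unbiased estimator of $\norm{\bv{M}}_F^2$ with variance controlled by $\norm{\bv{M}}_F^2\cdot\max_i\norm{\bv{M}_i}_2^2/p_i$, so a standard Bernstein bound using $p_i\propto\norm{\bv{M}_i}_2^2$ (or $\propto t_i$) gives the claim with the stated sample complexity. For family (5) the trace bound is exact by construction, since the BSS selection procedure preserves $\tr(\bv{M}^\top\bv{M})$ up to $(1\pm\epsilon)$.

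The spectral-norm bound is the main obstacle and the real content of the lemma. The unifying observation is that the normalization $\norm{\bv{M}^\top\bv{M}}_2\le 1$ together with the stable-rank condition means that $\bv{M}$ behaves, for the purposes of subspace embedding, like a matrix with only $O(k)$ ``effective'' directions, so the $k$-dimensional subspace embedding results in the literature apply. Concretely, for (1) I would combine the distributional JL property with an $\epsilon$-net argument on the unit ball of the top-$k$ right singular directions of $\bv{M}$ and control the tail coming from $\bv{M}-\bv{M}_k$ via the stable-rank bound and a second JL-moment estimate; the limited-independence variant follows from the analysis of \cite{clarkson2009numerical}. For (2) and (3), I would cite the sparse subspace embedding theorems of \cite{Clarkson:2013,meng2013low,DBLP:conf/focs/NelsonN13}, which give exactly the $\norm{\bv{M}^\top\bv{R}^\top\bv{R}\bv{M}-\bv{M}^\top\bv{M}}_2\le\epsilon$ guarantee at the row counts stated, again only requiring a stable-rank bound (rather than exact rank). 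For (4), I would invoke a matrix Bernstein inequality in the spirit of \cite{stablechernoff}: write $\bv{M}^\top\bv{R}^\top\bv{R}\bv{M}-\bv{M}^\top\bv{M}$ as a sum of mean-zero independent rank-one matrices $\bv{X}_j$, bound $\norm{\bv{X}_j}_2$ and $\bigl\|\sum_j\E\bv{X}_j^2\bigr\|_2$ using the sampling probabilities and the stable-rank bound, and apply matrix Bernstein; the refined $t_i$-proportional variant follows identically. For (5), I would cite the Batson--Spielman--Srivastava construction \cite{batson2012twice} together with the observation, made explicit in \cite{jelaniDiscussion}, that the BSS procedure produces exactly a diagonal reweighting $\bv{R}$ with the required spectral guarantee using only $O(k/\epsilon^2)$ nonzeros, again driven by the stable-rank hypothesis.

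Finally, I would take a union bound over the two events (spectral bound and trace bound) and absorb the factor of two into $\delta$. The main technical step I expect to be delicate is the reduction from the ``rank $k$'' subspace embedding statements quoted in the literature to the weaker stable-rank hypothesis used here; this reduction is standard but needs to be done carefully so that the quoted row counts are not inflated. Everything else is bookkeeping and citation.
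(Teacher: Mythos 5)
Your overall plan matches the paper's: both guarantees are handled by reducing to stable-rank subspace embedding and Frobenius-norm preservation results, cited family by family. The trace bound really does reduce to showing $\bigl|\norm{\bv{R}\bv{M}}_F^2-\norm{\bv{M}}_F^2\bigr|\le\epsilon\norm{\bv{M}}_F^2$, since $\norm{\bv{M}^\top\bv{M}}_2\le 1$ and stable rank $\le k$ force $\norm{\bv{M}}_F^2\le k$; families (1)--(4) are then handled essentially the way you propose (JL-moment bounds for (1)--(3), scalar Chernoff/Bernstein for (4)).

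The gap is in your treatment of the trace bound for family (5). You assert that ``the trace bound is exact by construction, since the BSS selection procedure preserves $\tr(\bv{M}^\top\bv{M})$ up to $(1\pm\epsilon)$.'' That does not follow from the guarantee the BSS extension actually provides in the stable-rank regime. What one gets is the \emph{additive} spectral bound $\norm{\bv{M}^\top\bv{R}^\top\bv{R}\bv{M}-\bv{M}^\top\bv{M}}_2\le\epsilon$ with $O(k/\epsilon^2)$ rows, not the two-sided Loewner-order bound $(1-\epsilon)\bv{M}^\top\bv{M}\preceq\bv{M}^\top\bv{R}^\top\bv{R}\bv{M}\preceq(1+\epsilon)\bv{M}^\top\bv{M}$. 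Taking the trace of the additive spectral bound only gives $|\tr(\bv{M}^\top\bv{R}^\top\bv{R}\bv{M})-\tr(\bv{M}^\top\bv{M})|\le\epsilon\cdot\rank(\bv{M})$, and $\rank(\bv{M})$ can be far larger than $k$ when only the stable rank is bounded. So the trace condition genuinely requires a separate argument here. The paper's fix is a small but essential trick: run BSS not on $\bv{M}$ but on $\bv{M}'$, obtained by appending to $\bv{M}$ a column whose $i$th entry is $\norm{\bv{M}^i}_2$. Then $(\bv{M}'^\top\bv{M}')_{ii}=\norm{\bv{M}}_F^2$ for the new diagonal index, so preserving $\bv{M}'^\top\bv{M}'$ in spectral norm simultaneously preserves the original Gram matrix and the trace; the appended column at most doubles the squared Frobenius norm and cannot decrease the spectral norm, so the stable rank of $\bv{M}'$ is still $O(k)$ and the row count is unaffected up to constants. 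Without this (or an equivalent device) your proof of family (5)'s trace bound does not go through. The rest of your argument is consistent with the paper's citation-level treatment.
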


Lemma \ref{matapprox} requires that $\bv M$ has \emph{stable rank} $\frac{\norm{\bv M}_F^2}{\norm {\bv M}_2^2} \le k$. It is well known that if $\bv M$ has \emph{rank} $\leq k$, the $\norm{\bv{M}^\top \bv{R}^\top \bv R \bv M - \bv{M}^\top \bv M}_2 \leq \epsilon$ bound holds for families \emph{1}, \emph{2}, and \emph{3} because they are all subspace embedding matrices. It can be shown that the relaxed stable rank guarantee is sufficient as well \cite{jelaniDiscussion}. We include an alternative proof for families \emph{1}, \emph{2}, and \emph{3} under Theorem \ref{rp_theorem} that
gives a slightly worse $\delta$ dependence for some constructions but does not rely on these stable rank results.

For family \emph{4}, the $\norm{\bv{M}^\top \bv{R}^\top \bv R \bv M - \bv{M}^\top \bv M}_2 \leq \epsilon$ result follows from Example 4.3 in \cite{stablechernoff}. Family \emph{5} uses a variation on the algorithm introduced in \cite{batson2012twice} and extended in \cite{jelaniDiscussion} to the stable rank case.

Since $\norm{\bv{M}^\top \bv{M}}_2 \le 1$, our stable rank requirement ensures that $\tr(\bv M^\top \bv M) = \norm{\bv M}_F^2 \le k$. Thus, the $\left|\tr(\bv{M}^\top \bv{R}^\top \bv R \bv M) - \tr(\bv{M}^\top \bv M)\right| \leq \epsilon k$ bound holds as long as $\left|\norm{\bv R \bv M}_F^2 - \norm{\bv M}_F^2\right| \leq \epsilon \norm{\bv M}_F^2$. This Frobenius norm bound is standard for embedding matrices and can be proven via the JL-moment property (see Lemma 2.6 in \cite{clarkson2009numerical} or Problem 2(c) in \cite{jelanipset}). 
For family \emph{1}, a proof of the required moment bounds can be found in Lemma 2.7 of \cite{clarkson2009numerical}. 
For family \emph{2} see Remark 23 in \cite{kane2014sparser}. 
For family \emph{3} see Section 6 in \cite{kane2014sparser}. 
For family \emph{4}, the $\left|\norm{\bv R \bv M}_F^2 - \norm{\bv M}_F^2\right| \leq \epsilon \norm{\bv M}_F^2$ bound follows from applying the Chernoff bound. 
For family \emph{5}, the Frobenius norm condition is met by computing $\bv{R}$ using a matrix $\bv M'$. $\bv M'$ is formed by appending a column to $\bv{M}$ whose $i^{th}$ entry is equal to $\norm{\bv{M}^i}_2$ -- the $\ell_2$ norm of the $i^\text{th}$ row of $\bv{M}$. With this column appended, if $\bv{R}$ preserves the spectral norm of $\bv{M'}^\top \bv M'$ up to $\epsilon$ error, it must also preserve the spectral norm of $\bv M^\top \bv M$. Additionally, it must preserve $(\bv{M'}^\top \bv M')_{ii}  = \norm{\bv{M}}_F^2$. The stable rank condition still holds for $\bv M'$ with $k' = 2k$ since appending the column doubles the squared Frobenius norm and does not decrease spectral norm. 


To apply the matrix families from Lemma \ref{matapprox} to Lemma~\ref{blocklem},  we first set $\bv M$ to $\frac{1}{2} \bv B^\top$ and use the sketch matrix $\bv R^\top$. Applying Lemma \ref{matapprox} with $\epsilon' = \epsilon/4$ gives requirement \eqref{spectral_cond} with probability $1-\delta$.  
For families \emph{1}, \emph{2}, and \emph{3}, \eqref{trace_cond} follows from applying Lemma \ref{matapprox} separately with $\bv M = \frac{1}{2}\bv{B}_2^\top$ and $\epsilon' = \epsilon/4$. 
For family \emph{4}, the trace condition follows from noting that sampling probabilities computed using $\bv{B}$ upper bound the correct probabilities for $\bv{B}_2$ and are thus sufficient. 
For family \emph{5}, to get the trace condition we can use the procedure described above, except $\bv B'$ has a row with the column norms of $\bv{B}_2$ as its entries, rather than the column norms of $\bv B$.

\subsection{Random Projection}
Since the first three matrix families listed are all oblivious (do not depend on $\bv{M}$) we can apply Lemma \ref{blocklem} with any suitable $\bv B$, including the one coming from the exact SVD with $\bv Z = \bv V_{2k}$. Note that $\bv B$ \emph{does not} need to be computed at all to apply these oblivious reductions -- it is purely for the analysis. This gives our main random projection result:
\begin{theorem}\label{rp_theorem}
Let $\bv R \in \R^{d' \times d}$ be drawn from any of the first three matrix families from Lemma \ref{matapprox}.  Then, for any matrix $\bv A \in \mathbb{R}^{n \times d}$, with probability at least $1-O(\delta)$, $\bv A \bv R^\top$ is a rank $k$ projection-cost preserving sketch of $\bv A$ (i.e. satisfies Definition \ref{def:2sidedsketch}) with error $O(\epsilon)$.
\end{theorem}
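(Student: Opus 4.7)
The plan is to compose Lemma \ref{blocklem} with Lemma \ref{matapprox}. The three matrix families named in the theorem are all \emph{oblivious}, meaning the distribution of $\bv{R}$ does not depend on $\bv{A}$. This gives us complete freedom in choosing the auxiliary matrix $\bv{Z}$ required by Lemma \ref{blocklem}, since $\bv{Z}$ only appears in the analysis and never in the sketch itself. The convenient choice is $\bv{Z} = \bv{V}_{2k}$ (so $m=2k$): the Frobenius hypothesis $\norm{\bv{A}-\bv{A}\bv{V}_{2k}\bv{V}_{2k}^\top}_F^2 = \norm{\bv{A}_{r\setminus 2k}}_F^2 \le \norm{\bv{A}_{r\setminus k}}_F^2$ is immediate, and the spectral hypothesis follows from the pigeonhole observation $\sigma_{2k+1}^2(\bv{A}) \le \tfrac{1}{k}\sum_{i=k+1}^{2k}\sigma_i^2(\bv{A}) \le \tfrac{1}{k}\norm{\bv{A}_{r\setminus k}}_F^2$.

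Second, I would verify the two hypotheses \eqref{spectral_cond} and \eqref{trace_cond} of Lemma \ref{blocklem} via two separate applications of Lemma \ref{matapprox}. For \eqref{spectral_cond}, set $\bv{M} = \tfrac{1}{2}\bv{B}^\top$; by the properties of $\bv{B}$ shown inside Lemma \ref{blocklem}, we have $\norm{\bv{B}\bv{B}^\top}_2 \in [1,2]$ and $\tr(\bv{B}\bv{B}^\top)\le 3k$, so $\norm{\bv{M}^\top\bv{M}}_2 \le 1$ and the stable rank is at most $O(k)$. Applying Lemma \ref{matapprox} with $\bv{R}^\top$ playing the role of the sketch matrix (so that the column count $d$ of $\bv{R}$ matches the row count of $\bv{M}$) and accuracy $\epsilon/4$ gives $\norm{\bv{B}\bv{R}^\top\bv{R}\bv{B}^\top - \bv{B}\bv{B}^\top}_2 \le \epsilon$, which is \eqref{spectral_cond}. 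For \eqref{trace_cond}, I would apply Lemma \ref{matapprox} again with $\bv{M} = \tfrac{1}{2}\bv{B}_2^\top$; as noted in the discussion just before the theorem, this condition reduces to Frobenius-norm preservation, which is guaranteed by the JL-moment property of each of the three oblivious families.

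Finally, a union bound over the two applications of Lemma \ref{matapprox} keeps the total failure probability at $O(\delta)$. Conditioned on both events, Lemma \ref{blocklem} certifies that $\bv{A}\bv{R}^\top$ is a projection-cost preserving sketch with two-sided error $6\epsilon$; rescaling $\epsilon$ by a constant at the start of the argument yields the claimed $O(\epsilon)$ bound without affecting the asymptotic sketching dimension from Lemma \ref{matapprox}. There is no substantive obstacle: all of the heavy lifting has already been done, in Lemma \ref{blocklem} (the reduction to spectral and trace approximation) and in the cited concentration bounds behind Lemma \ref{matapprox}. The only minor care needed is the $\tfrac{1}{2}$ rescaling to meet the hypothesis $\norm{\bv{M}^\top\bv{M}}_2\le 1$ and the stable-rank check for $\bv{B}$ and $\bv{B}_2$, both of which fall out of the bounds $\norm{\bv{B}\bv{B}^\top}_2\le 2$, $\tr(\bv{B}\bv{B}^\top)\le 3k$, and $\tr(\bv{B}_2\bv{B}_2^\top)\le 2k$ established inside Lemma \ref{blocklem}. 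Obliviousness is what makes this theorem almost a one-line consequence of the earlier work; the non-oblivious sampling results later on will require the extra step of actually computing a good approximation to $\bv{V}_{2k}$ in order to specify $\bv{R}$.
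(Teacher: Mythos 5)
Your proposal is correct and follows essentially the same route as the paper: choose $\bv{Z} = \bv{V}_{2k}$ (valid purely for analysis by obliviousness), verify conditions \eqref{spectral_cond} and \eqref{trace_cond} by applying Lemma \ref{matapprox} with $\bv{M} = \tfrac{1}{2}\bv{B}^\top$ and $\bv{M} = \tfrac{1}{2}\bv{B}_2^\top$ at accuracy $\epsilon/4$, union bound, and invoke Lemma \ref{blocklem} with a constant rescaling of $\epsilon$. The only difference is cosmetic: the paper additionally sketches an alternative argument (with $\bv{Z}=\bv{V}_k$, bounding the $\bv{E}_i$ terms directly via approximate matrix multiplication) to avoid the stable-rank extension cited in Lemma \ref{matapprox}, but your argument matches the paper's primary proof.
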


Family \emph{1} gives oblivious reduction to $O(k/\epsilon^2)$ dimensions, while family \emph{2} achieves $O(k^2/\epsilon^2)$ dimensions with the advantage of being faster to apply to $\bv A$, especially when our data is sparse.  Family \emph{3} allows a tradeoff between output dimension and computational cost.

A simple proof of Theorem \ref{rp_theorem} can be obtained that avoids work in \cite{jelaniDiscussion} and only depends on more well establish Johnson-Lindenstrauss properties. We set $\bv{Z} = \bv{V}_{k}$ and bound the error terms from Lemma \ref{blocklem} directly (without going through Lemma \ref{matapprox}). The bound on $\bv{E}_1$ \eqref{block_1} follows from noting that $\bv{W}_1 \bv{B} = \bv{A}\bv{V}_k\bv{V}_k^\top$ only has rank $k$. Thus, we can apply the fact that families \emph{1}, \emph{2}, and \emph{3} are subspace embeddings to claim that $\tr(\bv{W}_1 \bv{B} \bv{R} \bv{R}^\top \bv{B}^\top \bv{W}_1^\top - \bv{W}_1 \bv{B} \bv{B}^\top \bv{W}_1^\top) \leq \epsilon\tr(\bv{W}_1 \bv{B} \bv{B}^\top \bv{W}_1^\top)$. 

The bound on $\bv{E}_2$ \eqref{block_2_2} follows from first noting that, since we set  $\bv{Z} = \bv{V}_{k}$, $\bv{E}_2 = (\bv{A}_{r \setminus k} \bv{R} \bv{R}^\top \bv{A}_{r \setminus k}^\top - \bv{A}_{r \setminus k} \bv{A}_{r \setminus k}^\top)$. Applying Theorem 21 of \cite{kane2014sparser} (approximate matrix multiplication) along with the referenced JL-moment bounds for our first three families gives $\norm{\bv{E}_2}_F \le \frac{\epsilon}{\sqrt{k} } \norm{\bv{A}_{r \setminus k}}_F^2$. Since $\sum_{i=1}^k \left | \lambda_i(\bv{E}_2) \right | \le \sqrt{k} \norm{\bv{E}_2}_F$, \eqref{block_2_2} follows. Note that \eqref{block_2_1} did not require the stable rank generalization, so we do not need any modified analysis.

Finally, the bounds on $\bv{E}_3$ and $\bv{E}_4$, \eqref{block_3} and \eqref{block_4}, follow from the fact that:
\begin{align*}
\tr(\bv{E}_3^\top \bv{C}^+ \bv{E}_3) = \norm{\bv{\Sigma}^{-1} \bv{U}^\top (\bv{W}_1 \bv{B} \bv{R} \bv{R}^\top \bv{B}^\top \bv{W}_2^\top - \bv{W}_1 \bv{B} \bv{B}^\top \bv{W}_2^\top)}_F^2 = \norm{\bv{V}_k \bv{RR}^\top \bv{A}_{r \setminus k}^\top}_F^2 \le \epsilon^2 \norm{\bv{A}_{r \setminus k}}_F^2
\end{align*}
again by Theorem 21 of \cite{kane2014sparser} and the fact that $\norm{\bv{V}_k}_F^2 = k$. In both cases, we apply the approximate matrix multiplication result with error $\epsilon/\sqrt{k}$. For family \emph{1}, the required moment bound needs a sketch with dimension $O\left(\frac{k\log(1/\delta)}{\epsilon^2}\right)$ (see Lemma 2.7 of \cite{clarkson2009numerical}). Thus, our alternative proof slightly increases the $\delta$ dependence stated in Lemma \ref{matapprox}.

\subsection{Column Sampling}
\emph{Feature selection} methods like column sampling are often preferred to \emph{feature extraction} methods like random projection or SVD reduction. Sampling produces an output matrix that is easier to interpret, indicating which original data dimensions are most `important'. Furthermore, the output sketch often maintains characteristics of the input data (e.g. sparsity) that may have substantial runtime and memory benefits when performing final data analysis.

The guarantees of family \emph{4} immediately imply that feature selection via column sampling suffices for obtaining a $(1+\epsilon)$ error projection-cost preserving sketch.
However, unlike the first three families, family \emph{4} is non-oblivious -- our column sampling probabilities and new column weights are computed using $\bv{B}$ and hence a low rank subspace $\bv Z$ satisfying the conditions of Lemma \ref{blocklem}. Specifically, the sampling probabilities in Lemma \ref{matapprox} are equivalent to the column norms of $\bv{Z}^\top$ added to a constant multiple of those of $\bv{A} - \bv{A}\bv{Z}\bv{Z}^\top$.  If $\bv{Z}$ is chosen to equal $\bv{V}_{2k}$ (as suggested for Lemma \ref{blocklem}), computing the subspace alone could be costly. So, we specifically structured Lemma \ref{blocklem} to allow for the use of an approximation to $\bv{V}_{2k}$. 
Additionally, we show that, once a suitable $\bv Z$ is identified, for instance using an approximate SVD algorithm, sampling probabilities can be approximated in nearly input-sparsity time, without having to explicitly compute $\bv B$. Formally, letting $nnz(\bv{A})$ be the number of non-zero entries in our data matrix $\bv{A}$,

\begin{lemma}
For any $\bv{A} \in \mathbb{R}^{n \times d}$, given an orthonormal basis $\bv{Z} \in \mathbb{R}^{d \times m}$ for a rank $m$ subspace of $\R^d$, for any $\delta$, there is an algorithm that can compute constant factor approximations of the column norms of $\bv{A} - \bv{A}\bv{Z}\bv{Z}^\top$ in time $O(nnz(\bv{A}) \log (d / \delta) + md \log (d / \delta))$ time, succeeding with probability $1 - \delta$.
\end{lemma}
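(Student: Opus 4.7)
The plan is to apply a Johnson--Lindenstrauss sketch on the left to compress the ambient dimension from $n$ down to $t = O(\log(d/\delta))$, then read off the column norms of the compressed matrix. Let $\bv{\Pi} \in \R^{t \times n}$ be a Johnson--Lindenstrauss matrix (for concreteness, a sign matrix scaled by $1/\sqrt{t}$) with $t = O(\log(d/\delta))$ rows, chosen so that for any fixed vector $\bv{v} \in \R^n$ we have $\tfrac{1}{2}\|\bv{v}\|_2 \le \|\bv{\Pi} \bv{v}\|_2 \le \tfrac{3}{2}\|\bv{v}\|_2$ with probability at least $1 - \delta/d$. Applying a union bound over the $d$ columns of $\bv{A} - \bv{A}\bv{Z}\bv{Z}^\top$, with probability at least $1-\delta$ the vector $\|\bv{\Pi}(\bv{A}-\bv{A}\bv{Z}\bv{Z}^\top)_i\|_2$ is simultaneously a constant factor approximation to $\|(\bv{A}-\bv{A}\bv{Z}\bv{Z}^\top)_i\|_2$ for every column $i$. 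This reduces the task to computing the column norms of the $t \times d$ matrix $\bv{\Pi}(\bv{A}-\bv{A}\bv{Z}\bv{Z}^\top)$.

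The second step is to carry out this computation without ever forming $\bv{A}-\bv{A}\bv{Z}\bv{Z}^\top$ explicitly, which would cost $\Omega(nd)$. I would compute $\bv{\Pi}(\bv{A} - \bv{A}\bv{Z}\bv{Z}^\top) = (\bv{\Pi}\bv{A}) - (\bv{\Pi}\bv{A})\bv{Z}\bv{Z}^\top$ associatively, from left to right. First form $\bv{M}_1 \eqdef \bv{\Pi}\bv{A}$ by iterating over the nonzeros of $\bv{A}$; since each nonzero contributes to $t$ entries of $\bv{M}_1$, this costs $O(nnz(\bv{A}) \cdot t) = O(nnz(\bv{A}) \log(d/\delta))$ time. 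Next form $\bv{M}_2 \eqdef \bv{M}_1 \bv{Z} \in \R^{t \times m}$ in time $O(tdm) = O(md\log(d/\delta))$. Finally, for each column index $i$, compute $(\bv{M}_1)_i - \bv{M}_2 (\bv{Z}^\top)_i \in \R^t$ in $O(tm)$ time and take its squared $\ell_2$-norm in an additional $O(t)$ time, for a total across all $d$ columns of $O(tdm) = O(md\log(d/\delta))$. Summing the three stages gives the stated runtime $O(nnz(\bv{A}) \log(d/\delta) + md \log(d/\delta))$.

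There is essentially no hard analytic step; the only conceptual point is to pull $\bv{\Pi}$ through the residual and estimate each column norm in the compressed space, and the only computational subtlety is to respect the associativity above so that the $\bv{Z}\bv{Z}^\top$ factor never materializes as a $d \times d$ matrix. The probabilistic guarantee follows from the standard Johnson--Lindenstrauss lemma plus a union bound over the $d$ columns, and standard constructions of $\bv{\Pi}$ (e.g., random sign matrices with $O(\log(d/\delta))$ rows) give the correct failure probability without losing any factors in the row count.
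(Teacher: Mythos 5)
Your proposal is correct and follows essentially the same approach as the paper: sketch on the left with a Johnson--Lindenstrauss sign matrix of $O(\log(d/\delta))$ rows, union bound over the $d$ columns, and compute $\bv{\Pi}\bv{A}(\bv{I}-\bv{Z}\bv{Z}^\top)$ by forming $\bv{\Pi}\bv{A}$, then $\bv{\Pi}\bv{A}\bv{Z}$, then multiplying by $\bv{Z}^\top$ and subtracting, never materializing the $d\times d$ projector. The only cosmetic difference is that you carry out the final subtraction column-by-column rather than forming $\bv{\Pi}\bv{A}\bv{Z}\bv{Z}^\top$ in full first, but the runtime is identical.
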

Note that, as indicated in the statement of Lemma \ref{matapprox}, the sampling routine analyzed in \cite{stablechernoff} is robust to using norm overestimates. Scaling norms up by our constant approximation factor (to obtain strict overestimates) at most multiplies the number of columns sampled by a constant.

\begin{proof}
The approximation is obtained via a Johnson-Lindenstrauss transform.  To approximate the column norms of
$\bv{A} - \bv{A}\bv{Z}\bv{Z}^\top = \bv{A} (\bv{I} - \bv{Z}\bv{Z}^\top)$, we instead compute $\bs \Pi \bv{A} (\bv{I} - \bv{Z}\bv{Z}^\top)$, where $\bv \Pi$ is a Johnson-Lindenstrauss matrix with $O(\log (d/\delta) / \epsilon^2)$ rows drawn from, for example, family \emph{1} of Lemma \ref{matapprox}. By the standard Johnson-Lindenstrauss lemma \cite{Achlioptas:2003}, with probability at least $1-\delta$, every column norm will be preserved to within $1 \pm \epsilon$. We may fix $\epsilon = 1/2$.

Now, $\bv \Pi \bv{A} (\bv{I} - \bv{Z}\bv{Z}^\top)$ can be computed in steps.  First, compute $\bs \Pi \bv{A}$ by explicitly multiplying the matrices.  Since $\bv \Pi$ has $O(\log (d/\delta))$ rows, this takes time $O(nnz(\bv{A}) \log (d / \delta))$.  Next, multiply this matrix on the right by $\bv{Z}$ in time $O(md \log (d / \delta))$, giving $\bv \Pi \bv{A} \bv{Z}$, with $O(\log (d / \delta))$ rows and $m$ columns.  Next, multiply on the right by $\bv{Z}^\top$, giving $\bv \Pi \bv{A} \bv{Z} \bv{Z}^\top$, again in time $O(md \log(d / \delta))$.  Finally, subtracting from $\bv \Pi \bv{A}$ gives the desired matrix; the column norms can then be computed with a linear scan in time $O(d \log(d / \delta))$.
\end{proof}

Again, the sampling probabilities required for family \emph{4} are proportional to the sum of the column norms of $\bv{Z}^\top$ and a constant multiple of those of $\bv{A} - \bv{A}\bv{Z}\bv{Z}^\top$. Column norms of $\bv{Z}^\top$ take only linear time in the size of $\bv{Z}$ to compute, so the total runtime of computing sampling probabilities is $O(nnz(\bv{A}) \log (d / \delta) + md \log (d / \delta))$.

Finally, we address a further issue regarding the computation of $\bv{Z}$:  a generic approximate SVD algorithm may not satisfy the \emph{spectral norm} requirement on $\bv{A}-\bv{AZZ}^\top$ from Lemma \ref{blocklem}. Our analysis in Appendix \ref{spectral_version} can be used to obtain fast algorithms for approximate SVD that \emph{do} give the required spectral guarantee -- i.e. produce a $\bv{Z} \in \mathbb{R}^{d \times 2k}$ with $\norm{\bv{A}-\bv{A}\bv{Z}\bv{Z}^\top}_F^2 \leq \frac{2}{k}\norm{\bv{A}_{r \setminus k}}_F^2$. Nevertheless, it is possible to argue that even a conventional Frobenius norm error guarantee suffices.

The trick is to use a $\bv{Z}'$ in Lemma \ref{blocklem} that differs from the $\bv{Z}$ used to compute sampling probabilities. Specifically, we will choose a $\bv{Z}'$  that represents a potentially larger subspace.  Given a $\bv{Z}$ satisfying the Frobenius norm guarantee, consider the SVD of $\bv{A} - \bv{A}\bv{Z}\bv{Z}^\top$ and create $\bv{Z}'$ by appending to $\bv{Z}$ all singular directions with squared singular value $>\frac{2}{k} \norm{\bv{A}_{r \setminus k}}_F^2$.  This ensures that the spectral norm of the newly defined $\bv{A} - \bv{A}\bv{Z}'\bv{Z}'^\top$ is $\le \frac{2}{k} \norm{\bv{A}_{r \setminus k}}_F^2$. Additionally, we append at most $k$ rows to $\bv Z$. Since a standard approximate SVD can satisfy the Frobenius guarantee with a rank $k$ $\bv{Z}$, $\bv{Z}'$ has rank $\le 2k$, which is sufficient for Lemma \ref{blocklem}. 
 Furthermore, this procedure can only decrease column norms for the newly defined $\bv{B}'$: effectively, $\bv{B}'$ has all the same right singular vectors as $\bv{B}$, but with some squared singular values decreased from $> 2$ to 1. So, the column norms we compute will still be valid over estimates for the column norms of $\bv B$. 
Putting everything together gives:
\begin{theorem}
\label{sample}
For any $\bv{A} \in \mathbb{R}^{n \times d}$, given an orthonormal basis $\bv{Z} \in \mathbb{R}^{d \times k}$ satisfying $\norm{\bv{A}-\bv{A}\bv{Z}\bv{Z}^\top}_F^2 \le 2 \norm{\bv{A}_{r \setminus k}}_F^2$, for any $\epsilon < 1$ and $\delta$, there is an algorithm running in time $O(nnz(\bv{A}) \log (d / \delta) + kd \log (d / \delta))$ returning $\bv{\tilde A}$ containing $O(k \log(k / \delta) / \epsilon^2)$ reweighted columns of $\bv A$, such that, with probability at least $1-\delta$, $\bv{\tilde A}$ is a rank $k$ projection-cost preserving sketch for $\bv{A}$ (i.e. satisfies Definition \ref{def:2sidedsketch}) with error $\epsilon$.
\end{theorem}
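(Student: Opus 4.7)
The plan is to assemble an algorithm out of three ingredients already established in the text: the JL-based column-norm estimator from the preceding lemma, the column-sampling guarantee (family \emph{4}) of Lemma \ref{matapprox}, and the spectral-to-projection-cost reduction of Lemma \ref{blocklem}. First I would handle runtime: the column norms of $\bv{Z}^\top$ are computable in $O(kd)$ time, and by the preceding lemma (with $m = k$) constant-factor overestimates of the column norms of $\bv{A}-\bv{AZZ}^\top$ can be obtained in $O(\mathrm{nnz}(\bv{A})\log(d/\delta) + kd\log(d/\delta))$ time with failure probability at most a constant times $\delta$. Setting sampling probabilities proportional to the sum of these two sets of squared norms gives strict overestimates of the squared column norms of the matrix $\bv{B}$ defined from $\bv{Z}$ in Lemma \ref{blocklem}; by the robustness of family \emph{4} to overestimation, drawing $O(k\log(k/\delta)/\epsilon^2)$ reweighted columns according to these probabilities yields the sketch $\bv{\tilde A}$ in the claimed total time.

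The correctness argument faces one real obstacle: Lemma \ref{blocklem} requires the approximating subspace to satisfy a \emph{spectral} norm guarantee $\norm{\bv{A}-\bv{AZZ}^\top}_2^2 \le \tfrac{2}{k}\norm{\bv{A}_{r\setminus k}}_F^2$, whereas we are handed only the Frobenius bound. My plan is to apply Lemma \ref{blocklem} not to the given $\bv{Z}$ but to a silently extended $\bv{Z}'$, defined by appending to $\bv{Z}$ all right singular directions of $\bv{A}-\bv{AZZ}^\top$ whose squared singular value exceeds $\tfrac{2}{k}\norm{\bv{A}_{r\setminus k}}_F^2$. Since the total squared Frobenius mass of the residual is at most $2\norm{\bv{A}_{r\setminus k}}_F^2$, at most $k$ such directions exist, so $\bv{Z}'$ has orthonormal columns and rank at most $2k$, as required by Lemma \ref{blocklem}. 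By construction, the residual $\bv{A}-\bv{A}\bv{Z}'\bv{Z}'^\top$ now satisfies both the Frobenius and the spectral hypotheses of that lemma.

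Next I would verify that the sampling probabilities computed from $\bv{Z}$ remain valid overestimates for the matrix $\bv{B}'$ built from $\bv{Z}'$. The two matrices share the same right singular vectors; passing from $\bv{B}$ to $\bv{B}'$ only replaces, on the moved directions, a squared top-block contribution of $1$ together with a squared bottom-block contribution exceeding $2$ by a single squared top-block contribution of $1$. Thus every squared column norm of $\bv{B}'$ is pointwise bounded by the corresponding squared column norm of $\bv{B}$, so our probabilities remain overestimates (up to the constant from the JL approximation). Family \emph{4} of Lemma \ref{matapprox}, applied to $\bv{B}'$ with error parameter $\epsilon/c$ for a suitable constant $c$, then yields both the spectral condition \eqref{spectral_cond} and the trace condition \eqref{trace_cond} with probability $1-\delta$. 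Lemma \ref{blocklem} applied to $\bv{Z}'$ concludes that $\bv{\tilde A}=\bv{AR}$ is projection-cost preserving with error $O(\epsilon)$, and rescaling $\epsilon$ by an absolute constant gives the stated bound.

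The hard part is conceptual rather than computational: it is the observation that one can decouple the subspace used to \emph{compute} sampling probabilities (which must be cheap to obtain, so only a Frobenius-accurate $\bv{Z}$ is available) from the subspace used in the \emph{analysis} (which must also satisfy the spectral bound). Once the augmentation $\bv{Z}\mapsto\bv{Z}'$ is in place and the pointwise column-norm domination is verified, all remaining steps are routine applications of results already assembled in Sections \ref{block_section} and \ref{spectral_norm_proofs}.
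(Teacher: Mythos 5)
Your proposal is correct and follows essentially the same route as the paper: fast JL-based estimation of the residual column norms, sampling via family \emph{4} of Lemma \ref{matapprox} with overestimated probabilities, and the key decoupling trick of augmenting $\bv{Z}$ to a rank-$\le 2k$ subspace $\bv{Z}'$ (by appending residual singular directions with squared singular value above $\tfrac{2}{k}\norm{\bv{A}_{r\setminus k}}_F^2$) so that Lemma \ref{blocklem}'s spectral condition holds while the column norms of $\bv{B}'$ remain pointwise dominated by those of $\bv{B}$. This matches the paper's argument in Section \ref{spectral_norm_proofs}, including the observation that the same reasoning covers the trace condition for $\bv{B}_2$.
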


It is worth noting the connection between our column sampling procedure and recent work on column based matrix reconstruction \cite{deshpande2006matrix,guruswami2012optimal,boutsidis2014near,boutsidis2014optimal}. Our result shows that it is possible to start with a constant factor approximate SVD of $\bv{A}$ and sample the columns of $\bv{A}$ by a combination of the row norms of $\bv{Z}$ and and the column norms of $\bv{A-AZZ}^T$. In other words, to sample by a combination of the \emph{leverage scores} with respect to $\bv{Z}$ and the \emph{residuals} after projecting the rows of $\bv{A}$ onto the subspace spanned by $\bv{Z}$.  In \cite{boutsidis2014optimal}, a very similar technique is used in Algorithm $1$. $\bv{A}$ is first sampled according to the leverage scores with respect to $\bv{Z}$. Then, in the process referred to as \emph{adaptive sampling}, $\bv{A}$ is sampled by the column norms of the residuals after $\bv{A}$ is projected to the columns selected in the first round (see Section 3.4.3 of \cite{boutsidis2014optimal} for details on the adaptive sampling procedure). Intuitively, our single-shot procedure avoids this adaptive step by incorporating residual probabilities into the initial sampling probabilities. 

Additionally, note that our procedure recovers a projection-cost preserving sketch with $\tilde O(k/\epsilon^2)$ columns. In other words, if we compute the top $k$ singular vectors of our sketch, projecting to these vectors will give a $(1+\epsilon)$ approximate low rank approximation to $\bv{A}$.  In \cite{boutsidis2014optimal}, the $1/\epsilon$ dependence is linear, rather than quadratic, but the selected columns satisfy a weaker notion: that there exists some good $k$-rank approximation falling within the span of the selected columns.

\subsection{Deterministic Column Selection}
Finally, family \emph{5} gives an algorithm for feature selection that produces a $(1+\epsilon)$ projection-cost preserving sketch with just $O(k/\epsilon^2)$ columns.
The \emph{BSS Algorithm} is a deterministic procedure introduced in \cite{batson2012twice} for selecting rows from a matrix $\bv{M}$ using a selection matrix $\bv{R}$ so that $\norm{\bv{M}^\top \bv{R}^\top \bv R \bv M - \bv{M}^\top \bv M}_2 \leq \epsilon$. The algorithm is slow -- it runs in $poly(n,q,\epsilon)$ time for an $\bv{M}$ with $n$ columns and $q$ rows. However, the procedure can be advantageous over sampling methods like family \emph{4} because it reduces a rank $k$ matrix to $O(k)$ dimensions instead of $O(k\log k)$. \cite{jelaniDiscussion} extends this result to matrices with stable rank $\leq k$.

Furthermore, it is possible to substantially reduce runtime of the procedure in practice. $\bv{A}$ can first be sampled down to $O(k \log k / \epsilon^2)$ columns using Theorem \ref{sample} to produce $\bv{\ol A}$. Additionally, as for family \emph{4}, instead of fully computing $\bv{\ol B}$, we can compute $\bv{\Pi \ol B}$ where $\bv \Pi$ is a sparse subspace embedding (for example from family \emph{2}).  $\bv{\Pi \ol B}$ will have dimension just $O((k\log k)^2/\epsilon^6) \times O(k\log k /\epsilon^2)$. As $\bv{\Pi}$ will preserve the spectral norm of $\bv{\ol B}$, it is clear that the column subset chosen for $\bv{\Pi \ol B}$ will also be a valid subset for $\bv{\ol B}$. Overall this strategy gives:
\begin{theorem}
\label{bss_theorem}
For any $\bv{A} \in \mathbb{R}^{n \times d}$ and any $\epsilon < 1$, $\delta > 0$, there is an algorithm running in time $O(nnz(\bv{A}) \log (d / \delta) + poly(k, \epsilon, \log(1/\delta))d)$ which returns $\bv{\tilde A}$ containing $O(k/ \epsilon^2)$ reweighted columns of $\bv A$, such that, with probability at least $1-\delta$, $\bv{\tilde A}$ is a rank $k$ projection-cost preserving sketch for $\bv{A}$ (i.e. satisfies Definition \ref{def:2sidedsketch}) with error $\epsilon$.
\end{theorem}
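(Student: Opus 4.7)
The plan is to combine Theorem \ref{sample} (to do a cheap initial sparsification) with family \emph{5} of Lemma \ref{matapprox} (to do the final, tight column selection), and to make the second step efficient by first reducing the dimension of the associated matrix $\bv{B}$ with a sparse embedding. Essentially, I would follow the outline already sketched just above the theorem statement.

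First, I would apply Theorem \ref{sample} with error parameter $\epsilon/3$ and failure probability $\delta/3$, producing a reweighted column subset $\bv{\bar A}$ of $\bv{A}$ with $q = O(k \log k / \epsilon^2)$ columns that is a rank-$k$ projection-cost preserving sketch of $\bv{A}$ (two-sided, error $\epsilon/3$). This step runs in $O(\mathrm{nnz}(\bv{A}) \log(d/\delta) + kd \log(d/\delta))$ time, assuming access to an orthonormal $\bv{Z}$ with $\norm{\bv{A} - \bv{A}\bv{Z}\bv{Z}^\top}_F^2 \le 2\norm{\bv{A}_{r \setminus k}}_F^2$; such a $\bv{Z}$ can be produced by a standard approximate SVD in time that is polynomial in $k$, $1/\epsilon$, $\log(1/\delta)$, and linear in $nnz(\bv A)$.

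Second, I would apply family \emph{5} of Lemma \ref{matapprox} to $\bv{\bar A}$ (with error $\epsilon/3$) to pick $O(k/\epsilon^2)$ reweighted columns of $\bv{\bar A}$, which are themselves columns of $\bv{A}$. Composition with the first step gives a $(1 + \epsilon/3)^2 \le 1 + \epsilon$ two-sided projection-cost preserving sketch of $\bv{A}$, as needed. The only subtlety is making the BSS selection fast. By the reduction of Lemma \ref{blocklem}, the selection rule is driven by a matrix $\bv{\bar B} \in \R^{(q+2k) \times q}$ whose spectrum we must approximate within $\epsilon$. Rather than passing $\bv{\bar B}$ to the BSS routine, I would precondition by a sparse subspace embedding $\bv{\Pi}$ from family \emph{2} (with stable-rank parameter $3k$ and error $\epsilon/3$), forming $\bv{\Pi \bar B}$ with only $O(k^2/\epsilon^4)$ rows. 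Since $\bv{\Pi}$ preserves $\bv{\bar B}^\top \bv{\bar B}$ up to $\epsilon/3$ spectrally, any diagonal selector $\bv{R}$ for which $\norm{(\bv{\Pi \bar B}) \bv{R}\bv{R}^\top (\bv{\Pi \bar B})^\top - (\bv{\Pi \bar B})(\bv{\Pi \bar B})^\top}_2 \le \epsilon/3$ will, by the triangle inequality, also satisfy $\norm{\bv{\bar B} \bv{R}\bv{R}^\top \bv{\bar B}^\top - \bv{\bar B}\bv{\bar B}^\top}_2 \le \epsilon$, and thus Lemma \ref{blocklem} applies to the selection for $\bv{\bar A}$. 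The trace condition \eqref{trace_cond} is handled by the standard appended-column trick described for family \emph{5} right after Lemma \ref{matapprox}.

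For the runtime: the initial sampling and the approximate SVD together contribute the $O(\mathrm{nnz}(\bv{A}) \log(d/\delta))$ term plus a $\mathrm{poly}(k, 1/\epsilon, \log(1/\delta)) \cdot d$ additive cost. The sparse embedding $\bv{\Pi}$ can be applied to $\bv{\bar B}$ in time linear in its sparsity, and BSS on the resulting $O(k^2/\epsilon^4) \times O(k \log k/\epsilon^2)$ matrix is polynomial in $k$ and $1/\epsilon$. All remaining manipulations touch only the $q$ surviving columns, giving the claimed total runtime.

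The main obstacle, as far as I can see, is not the error arithmetic but making sure that the $\bv{\Pi}$-preconditioned BSS step yields a selector $\bv{R}$ that still certifies the spectral bound on the original $\bv{\bar B}$ (rather than just on $\bv{\Pi \bar B}$), and that the same $\bv{R}$ simultaneously controls the trace quantity $\tr(\bv{\bar B}_2 \bv{RR}^\top \bv{\bar B}_2^\top)$ required by \eqref{trace_cond}. Both follow from using a $\bv{\Pi}$ that satisfies the stable-rank subspace embedding and JL-moment properties from Lemma \ref{matapprox} (families \emph{2}/\emph{3}) applied to the two matrices $\bv{\bar B}$ and $\bv{\bar B}_2'$ (with the appended norms column), but it requires some care to pick the parameters of $\bv{\Pi}$ and of BSS so that the two $\epsilon/3$ losses compose into a clean $\epsilon$ bound while keeping the final column count at $O(k/\epsilon^2)$.
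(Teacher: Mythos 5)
Your proposal is essentially the paper's own construction: sample down to $O(k\log k/\epsilon^2)$ columns via Theorem \ref{sample}, then run the family-\emph{5} BSS selection on $\bv{\Pi \ol B}$ for a sparse embedding $\bv{\Pi}$, handle the trace condition \eqref{trace_cond} with the appended-norms column, and compose the two projection-cost preservation errors. The one parameter nuance is that the paper takes $\bv{\Pi}$ to be a subspace embedding for the full rank-$O(k\log k/\epsilon^2)$ column span of $\bv{\ol B}$ (hence the $O((k\log k)^2/\epsilon^6)$ rows), which preserves every direction of that span multiplicatively and makes the transfer of the BSS spectral bound from $\bv{\Pi\ol B}$ back to $\bv{\ol B}$ immediate, whereas a stable-rank-$3k$ guarantee alone gives only an additive bound that can fail to control directions where $\bv{\ol B}$ has small singular values -- this choice still costs only $\mathrm{poly}(k,1/\epsilon)$ rows, so your runtime and column-count claims are unaffected.
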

\section{Non-Oblivious Random Projection}\label{squish}

In this section, we show how to obtain projection-cost preserving sketches using a non-oblivious random projection technique that is standard for approximate SVD algorithms \cite{sarlos2006improved,Clarkson:2013}. To obtain a sketch of $\bv{A}$, we first multiply on the left by a Johnson-Lindenstrauss matrix with $O(k/\epsilon)$ rows. We then project the rows of $\bv{A}$ onto the row span of this much shorter matrix to obtain $\bv{\tilde A}$. In this way, we have projected $\bv{A}$ to a random subspace, albeit one that depends on the rows of $\bv{A}$ (i.e. non-obliviously chosen). This method gives an improved $\epsilon$ dependence over the oblivious approach of multiplying $\bv{A}$ on the right by a single Johnson-Lindenstrauss matrix (Theorem \ref{rp_theorem}). 
Specifically, we show:

\begin{theorem}\label{non_oblivious_full_theorem}
For $0 \le \epsilon < 1$, let $\bv{R}$ be drawn from one of the first three Johnson-Lindenstrauss distributions of Lemma \ref{matapprox} with $\epsilon' = O(1)$ and $k' = O(k/\epsilon)$. Then, for any $\bv{A} \in \mathbb{R}^{n \times d}$,  let $\bv{\ol A} = \bv{R A}$ and let $\bv{Z}$ be a matrix whose columns form an orthonormal basis for the rowspan of $\bv{\ol A}$. With probability $1-\delta$, $\bv{\tilde A} = \bv{A}\bv{Z}$ is a projection-cost preserving sketch for $\bv{A}$ satisfying the conditions of Definition \ref{def:1sidedsketch} with error $\epsilon$.
\end{theorem}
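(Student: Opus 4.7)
My plan is to invoke Lemma~\ref{error_improved} with the sketch $\bv{\tilde A} = \bv{AZ}$. Since $\bv{Z}$ has orthonormal columns, $\bv{\tilde A}\bv{\tilde A}^\top = \bv{A}\bv{ZZ}^\top\bv{A}^\top$, so the error matrix is
\[
\bv{E} \;=\; \bv{\tilde A}\bv{\tilde A}^\top - \bv{AA}^\top \;=\; -\bv{A}(\bv{I}-\bv{ZZ}^\top)\bv{A}^\top.
\]
This is manifestly symmetric, and since $\bv{I}-\bv{ZZ}^\top$ is an orthogonal projection and therefore PSD, we have $\bv{E}\preceq \bv{0}$ as required. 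The projection-independent additive constant will be $c = -\tr(\bv{E}) = \|\bv{A}-\bv{AZZ}^\top\|_F^2$.

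The only remaining hypothesis of Lemma~\ref{error_improved} is the eigenvalue bound. Writing $|\lambda_i(\bv{E})|=\sigma_i^2(\bv{A}-\bv{AZZ}^\top)$, the condition I must verify becomes
\[
\sum_{i=1}^{k}\sigma_i^2(\bv{A}-\bv{AZZ}^\top) \;=\; \bigl\|(\bv{A}-\bv{AZZ}^\top)_k\bigr\|_F^2 \;\le\; \epsilon \,\|\bv{A}-\bv{A}_k\|_F^2,
\]
or equivalently, that $\|\bv{P}(\bv{A}-\bv{AZZ}^\top)\|_F^2 \le \epsilon\|\bv{A}-\bv{A}_k\|_F^2$ for every rank-$k$ orthogonal projection $\bv{P}$. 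Note that the Frobenius Sarlos bound $\|\bv{A}-\bv{AZZ}^\top\|_F^2 \le (1+\epsilon)\|\bv{A}-\bv{A}_k\|_F^2$ alone is too weak here: it controls the total mass of the residual but not its top-$k$ spectral mass. This is also why Theorem~\ref{approx_svd} cannot be applied as a black box, since non-oblivious projection with $O(k/\epsilon)$ rows does not in general give a $(1+O(1))$-approximation to the best rank-$\lceil k/\epsilon\rceil$ approximation.

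The main obstacle is thus establishing the spectral-type bound above. The plan is to adapt the classical Sarlos / Clarkson--Woodruff construction of the explicit rank-$k$ matrix $\bv{B} = \bv{AV}_k(\bv{RAV}_k)^+\bv{RA}$, whose rows lie in $\text{rowspan}(\bv{RA})=\text{rowspan}(\bv{Z}^\top)$. With $k' = O(k/\epsilon)$ rows, two facts from Lemma~\ref{matapprox} apply with constant $\epsilon'$ and stable rank parameter $k$: (i) $\bv{R}$ is an $O(1)$-subspace embedding for the rank-$k$ column span of $\bv{U}_k$, so every singular value of $\bv{RU}_k$ lies in $[\tfrac{1}{2},2]$ and $(\bv{RU}_k)^+$ is well-conditioned; and (ii) $\bv{R}$ satisfies approximate matrix multiplication with parameter $\sqrt{\epsilon/k}$, giving $\|\bv{U}_k^\top \bv{R}^\top \bv{R}\,\bv{A}_{r\setminus k}\|_F^2 = O(\epsilon)\|\bv{A}-\bv{A}_k\|_F^2$. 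Standard manipulations combining (i) and (ii) recover the usual Sarlos Frobenius guarantee $\|\bv{A}-\bv{B}\|_F^2 \le (1+\epsilon)\|\bv{A}-\bv{A}_k\|_F^2$.

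To extract the required spectral bound, I will exploit the Pythagorean split arising from the orthogonality of the rows of $\bv{A}-\bv{AZZ}^\top$ (orthogonal to $\text{rowspan}(\bv{Z})$) and those of $\bv{AZZ}^\top-\bv{B}$ (lying in $\text{rowspan}(\bv{Z})$),
\[
\|\bv{A}-\bv{B}\|_F^2 \;=\; \|\bv{A}-\bv{AZZ}^\top\|_F^2 + \|\bv{AZZ}^\top-\bv{B}\|_F^2.
\]
The key observation is that for any rank-$k$ projection $\bv{P}$, we can replace $\bv{B}$ in the Sarlos argument by the rank-$k$ matrix $\bv{P}\bv{AZZ}^\top + (\bv{I}-\bv{P})\bv{B}$ (which still has rows in $\text{rowspan}(\bv{Z})$ and rank at most $k$) and re-run the argument to bound $\|\bv{P}(\bv{A}-\bv{AZZ}^\top)\|_F^2$ against $\epsilon\|\bv{A}-\bv{A}_k\|_F^2$ using (i)--(ii). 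The main technical work is checking that this modification does not break the approximate matrix multiplication step, which is the part I expect to be most delicate. Once the spectral bound is in hand, Lemma~\ref{error_improved} immediately yields that $\bv{\tilde A}=\bv{AZ}$ is a rank-$k$ projection-cost preserving sketch with one-sided error $\epsilon$, and the stated failure probability $\delta$ comes directly from the JL-type concentration underlying (i) and (ii).
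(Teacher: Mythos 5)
Your reduction to Lemma~\ref{error_improved} with $\bv{E} = -\bv{A}(\bv{I}-\bv{ZZ}^\top)\bv{A}^\top$ is exactly right, as is your observation that the plain Sarl\'os Frobenius bound is too weak and that Theorem~\ref{approx_svd} cannot be invoked as a black box (since $\bv{R}$ with $O(k/\epsilon)$ rows does not give a relative-error best rank-$\lceil k/\epsilon\rceil$ approximation). The gap is in your ``key observation.'' The matrix $\bv{P}\bv{AZZ}^\top + (\bv{I}-\bv{P})\bv{B}$ is \emph{not} rank $\le k$: $\bv{P}\bv{AZZ}^\top$ and $(\bv{I}-\bv{P})\bv{B}$ live in the orthogonal column spaces $\mathrm{range}(\bv{P})$ and $\mathrm{range}(\bv{I}-\bv{P})$ respectively, each of dimension up to $k$, so the sum has rank up to $2k$. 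More seriously, even if you patch the rank to $2k$, the accounting does not close. Writing $\bv{A}-\bv{B}' = \bv{P}(\bv{A}-\bv{AZZ}^\top) + (\bv{I}-\bv{P})(\bv{A}-\bv{B})$ and using the Pythagorean theorem, a Sarl\'os-type bound $\|\bv{A}-\bv{B}'\|_F^2\le(1+\epsilon)\|\bv{A}_{r\setminus k}\|_F^2$ would give $\|\bv{P}(\bv{A}-\bv{AZZ}^\top)\|_F^2 \le (1+\epsilon)\|\bv{A}_{r\setminus k}\|_F^2 - \|(\bv{I}-\bv{P})(\bv{A}-\bv{B})\|_F^2$, and to finish you would need $\|(\bv{I}-\bv{P})(\bv{A}-\bv{B})\|_F^2 \ge (1-O(\epsilon))\|\bv{A}_{r\setminus k}\|_F^2$, i.e.\ that $\|\bv{P}(\bv{A}-\bv{B})\|_F^2$ is small --- which is essentially the quantity you set out to bound. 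Additionally, the Sarl\'os argument does not give a bound on $\|\bv{A}-\bv{B}'\|_F^2$ for a $\bv{P}$-dependent $\bv{B}'$ you hand it; it exhibits a single good witness $\bv{B}$ and concludes about the optimum. So the step you flag as ``most delicate'' is in fact not a technical detail but a genuine missing idea.

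The paper sidesteps all of this by proving the \emph{stronger} spectral-norm guarantee $\norm{\bv{A}-\bv{AZZ}^\top}_2^2 \le O(\epsilon/k)\norm{\bv{A}_{r\setminus k}}_F^2$ (Lemma~\ref{spectral_norm_nonoblivious}), from which your target $\sum_{i=1}^k\sigma_i^2(\bv{A}-\bv{AZZ}^\top)\le\epsilon\norm{\bv{A}_{r\setminus k}}_F^2$ follows trivially by $\sum_{i=1}^k\sigma_i^2 \le k\sigma_1^2$ (this trivial step is the content of Lemma~\ref{spectral_approx_svd}). The spectral bound itself is established by (a) noting that one can only increase the residual by replacing $\bv{Z}$ with an orthonormal basis $\bv{Z}'$ for the rows of the best rank-$k'$ approximation of $\bv{RA}$, $k'=\lceil k/\epsilon\rceil+k-1$; (b) invoking a spectral-norm approximate-SVD guarantee of the form $\norm{\bv{A}-\bv{AZ'Z'}^\top}_2^2 \le O(1)\bigl(\norm{\bv{A}_{r\setminus k'}}_2^2 + \tfrac{1}{k'}\norm{\bv{A}_{r\setminus k'}}_F^2\bigr)$, which is Theorem~10.8 of \cite{Halko:2011} or the spectral-norm version of the paper's machinery in Appendix~\ref{spectral_version}; and (c) elementary singular-value averaging to show both terms on the right are $O(\epsilon/k)\norm{\bv{A}_{r\setminus k}}_F^2$. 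If you want to salvage your route, the natural fix is precisely to aim for the spectral-norm bound on $\bv{A}-\bv{AZZ}^\top$ rather than the top-$k$ Frobenius mass, since the former is what the non-oblivious sketch actually delivers.
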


As an example, if $\bv R$ is a dense Johnson-Lindenstrauss matrix (family \emph{1} in Lemma \ref{matapprox}), it will reduce $\bv{A}$ to $O(\frac{k' + \log(1/\delta)}{\epsilon'^2}) = O(k/\epsilon + \log(1/\delta))$ rows and thus $\bv{A Z}$ will have dimension $O(k/\epsilon + \log(1/\delta))$.

As usual, we actually show that $\bv{A}\bv{Z}\bv{Z}^\top$ is a projection-cost preserving sketch and note that $\bv{AZ}$ is as well since it is simply a rotation. Our proof requires two steps. In Theorem \ref{approx_svd}, we showed that any rank $\lceil k/\epsilon \rceil$  approximation for $\bv{A}$ with Frobenius norm cost at most $(1+\epsilon)$ from optimal yields a projection-cost preserving sketch. Here we start by showing that any low rank approximation with small $\emph{spectral norm}$ cost also suffices as a projection-cost preserving sketch. We then show that non-oblivious random projection to $O(k/\epsilon)$ dimensions gives such a low rank approximation, completing the proof. The spectral norm low rank approximation result follows:

\begin{lemma}\label{spectral_approx_svd}
For any $\bv{A} \in \mathbb{R}^{n \times d}$ and any orthonormal matrix $\bv{Z} \in \mathbb{R}^{d \times m}$ satisfying $\norm{\bv{A}-\bv{A}\bv{Z}\bv{Z}^\top}_2^2 \le \frac{\epsilon}{k} \norm{\bv{A}_{r \setminus k}}_F^2$, the sketch $\bv{\tilde A} = \bv{A}\bv{Z}\bv{Z}^\top$ satisfies the conditions of Definition \ref{def:1sidedsketch}. Specifically, for all rank $k$ orthogonal projections $\bv{P}$,
\begin{align*}
\norm{\bv{A} - \bv{P}\bv{A}}_F^2 \le \norm{\bv{\tilde A-P\tilde A}}_F^2 + c \le (1+\epsilon)  \norm{\bv{A} - \bv{P}\bv{A}}_F^2.
\end{align*}
\end{lemma}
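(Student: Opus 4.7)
The plan is to apply Lemma \ref{error_improved} directly. The key is to show that the error matrix $\bv{E} = \bv{\tilde{A}}\bv{\tilde{A}}^\top - \bv{A}\bv{A}^\top$ is symmetric, negative semidefinite, and has its top $k$ eigenvalues (in absolute value) summing to at most $\epsilon \norm{\bv{A}_{r \setminus k}}_F^2$.

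First, I would compute $\bv{\tilde{A}}\bv{\tilde{A}}^\top$ explicitly. Since $\bv{Z}$ has orthonormal columns, $\bv{Z}^\top \bv{Z} = \bv{I}$, so $\bv{\tilde{A}}\bv{\tilde{A}}^\top = \bv{A}\bv{Z}\bv{Z}^\top \bv{Z}\bv{Z}^\top \bv{A}^\top = \bv{A}\bv{Z}\bv{Z}^\top \bv{A}^\top$. Expanding $(\bv{A} - \bv{A}\bv{Z}\bv{Z}^\top)(\bv{A} - \bv{A}\bv{Z}\bv{Z}^\top)^\top$ using the same identity collapses to $\bv{A}\bv{A}^\top - \bv{A}\bv{Z}\bv{Z}^\top \bv{A}^\top$. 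Hence
\[
\bv{E} = -(\bv{A} - \bv{A}\bv{Z}\bv{Z}^\top)(\bv{A} - \bv{A}\bv{Z}\bv{Z}^\top)^\top,
\]
which is manifestly symmetric and satisfies $\bv{E} \preceq \bv{0}$, fulfilling the first two hypotheses of Lemma \ref{error_improved}.

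Second, I would bound the eigenvalues. The nonzero eigenvalues of $\bv{E}$ are exactly the negatives of the squared singular values of $\bv{A} - \bv{A}\bv{Z}\bv{Z}^\top$, so $|\lambda_i(\bv{E})| = \sigma_i^2(\bv{A} - \bv{A}\bv{Z}\bv{Z}^\top) \le \norm{\bv{A} - \bv{A}\bv{Z}\bv{Z}^\top}_2^2$ for every $i$. Summing over the top $k$ indices and invoking the spectral hypothesis gives
\[
\sum_{i=1}^k |\lambda_i(\bv{E})| \;\le\; k \cdot \norm{\bv{A} - \bv{A}\bv{Z}\bv{Z}^\top}_2^2 \;\le\; k \cdot \frac{\epsilon}{k} \norm{\bv{A}_{r \setminus k}}_F^2 \;=\; \epsilon \norm{\bv{A}_{r \setminus k}}_F^2,
\]
which is exactly the third hypothesis of Lemma \ref{error_improved}. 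Applying that lemma yields the claimed one-sided projection-cost preservation guarantee.

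There is no real obstacle here once the decomposition $\bv{E} = -(\bv{A} - \bv{A}\bv{Z}\bv{Z}^\top)(\bv{A} - \bv{A}\bv{Z}\bv{Z}^\top)^\top$ is observed; the whole argument is a clean restatement in the language of Lemma \ref{error_improved}. The only mildly delicate point is the factor-of-$k$ step: trading spectral norm for a sum of $k$ eigenvalues is exactly what forces the hypothesis to carry a $1/k$ on the right-hand side, and this is tight because if $\bv{A} - \bv{A}\bv{Z}\bv{Z}^\top$ has a flat spectrum across its top $k$ directions, each of the $k$ terms contributes equally.
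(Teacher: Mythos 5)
Your proof is correct and matches the paper's argument exactly: both set $\bv{E} = -(\bv{A} - \bv{A}\bv{Z}\bv{Z}^\top)(\bv{A} - \bv{A}\bv{Z}\bv{Z}^\top)^\top$, observe symmetry and negative semidefiniteness, bound $\sum_{i=1}^k |\lambda_i(\bv{E})| \le k\norm{\bv{A} - \bv{A}\bv{Z}\bv{Z}^\top}_2^2 \le \epsilon \norm{\bv{A}_{r\setminus k}}_F^2$, and invoke Lemma \ref{error_improved}. Your write-up merely spells out the algebraic cancellation and the eigenvalue-to-singular-value identification a bit more explicitly than the paper does.
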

\begin{proof}
As in the original approximate SVD proof (Theorem \ref{approx_svd}), we set $\bv{E} = - (\bv{A} - \bv{A}\bv{Z}\bv{Z}^\top)(\bv{A} - \bv{A}\bv{Z}\bv{Z}^\top)^\top$. $\bv{\tilde C} = \bv{C} + \bv{E}$, $\bv{E}$ is symmetric, and $\bv{E} \preceq \bv{0}$. Furthermore, by our spectral norm approximation bound,
\begin{align*}
\sum_{i=1}^k |\lambda_i(\bv{E})| \le k  \norm{\bv{A} - \bv{A}\bv{Z}\bv{Z}^\top}_2^2 \le \epsilon \norm{\bv{A}_{r \setminus k}}_F^2.
\end{align*}
The result then follows directly from Lemma \ref{error_improved}.
\end{proof}

Next we show that the non-oblivious random projection technique described satisfies the spectral norm condition required for Lemma \ref{spectral_approx_svd}. Combining these results gives us Theorem \ref{non_oblivious_full_theorem}.

\begin{lemma} 
\label{spectral_norm_nonoblivious}
For $0 \le \epsilon < 1$, let $\bv{R}$ be drawn from one of the first three distributions of Lemma \ref{matapprox} with $\epsilon' = O(1)$ and $k' = \lceil k/\epsilon \rceil + k -1$. Then, for any $\bv{A} \in \mathbb{R}^{n \times d}$,  let $\bv{\ol A} = \bv{R A}$ and let $\bv{Z}$ be a matrix whose columns form an orthonormal basis for the rowspan of $\bv{\ol A}$. Then, 
with probability $1-\delta$,
\begin{align}\label{small_spectral_vs_frob}
\norm{\bv{A}-\bv{A}\bv{Z}\bv{Z}^\top}_2^2 \le O\left( \frac{\epsilon}{k} \right ) \norm{\bv{A}_{r \setminus k}}_F^2.
\end{align}
\end{lemma}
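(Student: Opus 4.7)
The plan is to reduce to two quantities: a Halko--Martinsson--Tropp-style structural inequality
\[
\norm{\bv{A}-\bv{A}\bv{Z}\bv{Z}^{\top}}_{2}\;\le\;\sigma_{k'+1}(\bv{A})\;+\;\norm{\bs{\Omega}_{1}^{+}\bs{\Omega}_{2}\bs{\Sigma}_{2}}_{2},
\]
and then two applications of Lemma \ref{matapprox} that control each piece. Writing $\bv{A}=\bv{U}\bs{\Sigma}\bv{V}^{\top}$ split at rank $k'$ as $\bv{U}=[\bv{U}_{1},\bv{U}_{2}]$, $\bv{V}=[\bv{V}_{1},\bv{V}_{2}]$, and $\bs{\Sigma}$ block-diagonal with $\bs{\Sigma}_{1},\bs{\Sigma}_{2}$, I set $\bs{\Omega}_{i}=\bv{R}\bv{U}_{i}$.

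For the structural step, I would first argue that $\bv{A}\bv{Z}\bv{Z}^{\top}$ actually minimizes $\norm{\bv{A}-\bv{B}}_{2}$ (not merely $\norm{\cdot}_{F}$) over all $\bv{B}$ with rows in the rowspan of $\bv{R}\bv{A}$: since the rows of $\bv{A}-\bv{A}\bv{Z}\bv{Z}^{\top}$ are orthogonal to that rowspan, a Pythagorean calculation gives $(\bv{A}-\bv{B})(\bv{A}-\bv{B})^{\top}\succeq (\bv{A}-\bv{A}\bv{Z}\bv{Z}^{\top})(\bv{A}-\bv{A}\bv{Z}\bv{Z}^{\top})^{\top}$, which Weyl-monotonicity transfers to spectral norms. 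It therefore suffices to bound $\norm{\bv{A}-\bv{X}_{0}\bv{R}\bv{A}}_{2}$ for one convenient $\bv{X}_{0}$; taking $\bv{X}_{0}=\bv{U}_{1}\bs{\Omega}_{1}^{+}$ and simplifying via $\bs{\Omega}_{1}^{+}\bs{\Omega}_{1}=\bv{I}_{k'}$ (valid once Lemma \ref{matapprox} gives $\bs{\Omega}_{1}$ full column rank) yields $\bv{A}-\bv{X}_{0}\bv{R}\bv{A}=(\bv{U}_{2}-\bv{U}_{1}\bs{\Omega}_{1}^{+}\bs{\Omega}_{2})\bs{\Sigma}_{2}\bv{V}_{2}^{\top}$, and the triangle inequality together with orthonormality gives the structural bound.

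The first summand is handled by the same pigeonhole used in Section \ref{svds}: $\sigma_{k'+1}(\bv{A})^{2}\le \tfrac{1}{k'-k+1}\sum_{i=k+1}^{k'+1}\sigma_{i}^{2}(\bv{A})\le \tfrac{\epsilon}{k}\norm{\bv{A}_{r\setminus k}}_{F}^{2}$. For the second, I would stack
\[
\bv{B}=\begin{pmatrix}\bv{U}_{1}^{\top}\\[2pt] \dfrac{\sqrt{k/\epsilon}}{\norm{\bv{A}_{r\setminus k}}_{F}}\,\bv{A}_{r\setminus k'}^{\top}\end{pmatrix}\in\R^{(k'+d)\times n},
\]
and apply Lemma \ref{matapprox} with $\bv{M}=\bv{B}^{\top}$. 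Orthogonality of $\bv{U}_{1}$ and $\bv{U}_{2}$ makes $\bv{B}\bv{B}^{\top}$ block diagonal; the pigeonhole bound on $\sigma_{k'+1}^{2}$ combined with the chosen normalization gives $\norm{\bv{B}\bv{B}^{\top}}_{2}\le 1$ and $\tr(\bv{B}\bv{B}^{\top})\le k'+k/\epsilon=O(k')$, so $\bv{B}^{\top}$ has stable rank $O(k/\epsilon)$. Lemma \ref{matapprox} then yields $\norm{\bv{B}\bv{R}^{\top}\bv{R}\bv{B}^{\top}-\bv{B}\bv{B}^{\top}}_{2}\le \epsilon'=O(1)$. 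The top-left $k'\times k'$ sub-block gives $\norm{\bs{\Omega}_{1}^{\top}\bs{\Omega}_{1}-\bv{I}}_{2}\le O(1)$, hence $\norm{\bs{\Omega}_{1}^{+}}_{2}\le O(1)$; the top-right block, after unwinding the scale factor and using $\bv{R}\bv{A}_{r\setminus k'}=\bs{\Omega}_{2}\bs{\Sigma}_{2}\bv{V}_{2}^{\top}$, gives $\norm{\bs{\Omega}_{1}^{\top}\bs{\Omega}_{2}\bs{\Sigma}_{2}}_{2}\le O(\sqrt{\epsilon/k})\,\norm{\bv{A}_{r\setminus k}}_{F}$.

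The main obstacle is engineering this normalization: I need $\bv{B}$ to have stable rank only $O(k/\epsilon)$ while still encoding both $\bv{U}_{1}$ and the (potentially full-rank) tail $\bv{A}_{r\setminus k'}$, and it is exactly the $\sigma_{k'+1}^{2}\le \tfrac{\epsilon}{k}\norm{\bv{A}_{r\setminus k}}_{F}^{2}$ slack that makes $\tfrac{\sqrt{k/\epsilon}}{\norm{\bv{A}_{r\setminus k}}_{F}}\bv{A}_{r\setminus k'}^{\top}$ fit in both spectral norm and trace simultaneously. With that in place the proof finishes routinely via $\norm{\bs{\Omega}_{1}^{+}\bs{\Omega}_{2}\bs{\Sigma}_{2}}_{2}\le \norm{(\bs{\Omega}_{1}^{\top}\bs{\Omega}_{1})^{-1}}_{2}\cdot\norm{\bs{\Omega}_{1}^{\top}\bs{\Omega}_{2}\bs{\Sigma}_{2}}_{2}=O(\sqrt{\epsilon/k})\norm{\bv{A}_{r\setminus k}}_{F}$, and combining with the first-summand bound squares out to $\norm{\bv{A}-\bv{A}\bv{Z}\bv{Z}^{\top}}_{2}^{2}\le O(\epsilon/k)\norm{\bv{A}_{r\setminus k}}_{F}^{2}$.
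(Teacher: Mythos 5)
Your proof is correct and takes a genuinely different route from the paper's. You reconstruct the classical Halko--Martinsson--Tropp analysis (indeed the paper's own proof notes that ``a sufficient bound is given in Theorem 10.8 of [HMT11]'' and then opts for an independent argument): you establish the spectral-norm optimality of $\bv{A}\bv{Z}\bv{Z}^{\top}$ among all matrices with rows in $\text{rowspan}(\bv{RA})$ via a PSD-ordering argument, invoke the structural bound $\norm{\bv{A}-\bv{A}\bv{Z}\bv{Z}^{\top}}_{2}\le\sigma_{k'+1}(\bv{A})+\norm{\bs{\Omega}_{1}^{+}\bs{\Omega}_{2}\bs{\Sigma}_{2}}_{2}$, control $\sigma_{k'+1}$ by the pigeonhole bound, and then bound the correction term by applying Lemma \ref{matapprox} to a stacked matrix of stable rank $O(k/\epsilon)$ that pairs $\bv{U}_{1}^{\top}$ with the tail $\bv{A}_{r\setminus k'}^{\top}$ rescaled by $\sqrt{k/\epsilon}/\norm{\bv{A}_{r\setminus k}}_{F}$. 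The paper's proof instead first argues (as you do) that the spectral-norm error is dominated by that of the smaller projection $\bv{Z}'$ onto the rows of $\bv{\ol A}_{k'}$, and then reduces to a spectral-norm approximate-SVD guarantee proved in Appendix \ref{spectral_version} by extending the Frobenius-norm projection-cost-preserving machinery (Lemma \ref{error}/Lemma \ref{blocklem}) to spectral norm, ultimately also calling Lemma \ref{matapprox} on a stacked matrix of comparable structure. The two routes are therefore cousins: both reduce to the same concentration lemma via a stacked matrix of stable rank $O(k/\epsilon)$, but you do so through the direct $\bs{\Omega}_{1}^{+}$-based formula and the paper does so through the spectral-norm version of its $\bv{E}_{1},\ldots,\bv{E}_{4}$ error decomposition. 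Your argument is shorter and more self-contained (it avoids Appendix \ref{spectral_version} entirely); the paper's buys a reusable spectral-norm analogue of Theorems \ref{rp_theorem}, \ref{sample}, and \ref{bss_theorem} (Theorem \ref{rp_theorem_spectral}), which it states is of independent interest. One small remark on your writeup: the step $\norm{\bs{\Omega}_{1}^{+}}_{2}=O(1)$ requires the constant $\epsilon'$ in Lemma \ref{matapprox} to be chosen strictly less than $1$ so that $\bs{\Omega}_{1}^{\top}\bs{\Omega}_{1}\succeq(1-\epsilon')\bv{I}$ is invertible; your ``$\epsilon'=O(1)$'' should be read with that caveat (as the paper's statement also implicitly requires), but with that understood the argument is complete.
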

\begin{proof}
To prove this Lemma, we actually consider an alternative projection technique: multiply $\bv{A}$ on the left by $\bv R$ to obtain $\bv{\ol A}$, find its best rank $k'$ approximation $\ol{\bv{A}}_{k'}$, then project the rows of $\bv{A}$ onto the rows of $\ol{\bv{A}}_{k'}$. Letting $\bv{Z}'$ be a matrix whose columns are an orthonormal basis for the rows of $\ol{\bv{A}}_{k'}$, it is clear that
\begin{align}\label{only_better}
\norm{\bv{A}-\bv{A}\bv{Z}\bv{Z}^\top}_2^2 \le \norm{\bv{A}-\bv{A}\bv{Z'}\bv{Z'}^\top}_2^2.
\end{align}
$\ol{\bv{A}}_{k'}$'s rows fall within the row span of $\bv{\ol A}$, so the result of projecting to the orthogonal complement of $\bv{\ol A}$'s rows is unchanged if we first project to the orthogonal complement of $\ol{\bv{ A}}_{ k'}$'s rows. Then, since projection can only decrease spectral norm,
\begin{align*}
\norm{\bv{A}(\bv{I} -\bv{Z}\bv{Z}^\top)}_2^2 = \norm{ \bv{A}(\bv{I} -\bv{Z'}\bv{Z'}^\top)( \bv{I} -\bv{Z}\bv{Z}^\top)}_2^2 \le \norm{\bv{A} ( \bv{I} -\bv{Z'}\bv{Z'}^\top)}_2^2,
\end{align*}
giving Equation \eqref{only_better}.

So we just need to show that $\norm{\bv{A}-\bv{A}\bv{Z'}\bv{Z'}^\top}_2^2 \le  \frac{\epsilon}{k} \norm{\bv{A}_{r \setminus k}}_F^2$. Note that, since $k' = \lceil k/\epsilon \rceil + k -1$,
\begin{align*}
\norm{\bv{A}_{r \setminus k'}}_2^2 = \sigma_{k'+1}^2(\bv{A}) \le \frac{1}{k} \sum_{i=k'+2-k}^{k'+1} \sigma_i^2(\bv{A}) \le \frac{\epsilon}{k} \sum_{i=k+1}^{k'+2-k} \sigma_i^2(\bv{A}) \le \frac{\epsilon}{k} \norm{\bv{A}_{r \setminus k}}_F^2.
\end{align*}
Additionally, $\norm{\bv{A}_{r\setminus k'}}_F^2 \le \norm{\bv{A}_{r\setminus k}}_F^2$ and $\frac{1}{k'} \le \frac{k}{\epsilon}$. So to prove \eqref{small_spectral_vs_frob} it suffices to show:
\begin{align*}
\norm{\bv{A}-\bv{A}\bv{Z'}\bv{Z'}^\top}_2^2 \le O(1) \left(\norm{\bv{A}_{r\setminus k'}}_2^2 + \frac{1}{k'}\norm{\bv{A}_{r\setminus k'}}_F^2 \right).
\end{align*}
In fact, this is just a just an approximate SVD with a spectral norm guarantee, similar to what we have already shown for the Frobenius norm! Specifically, $\bv{Z'}$ is an approximate $k'$ SVD, computed using a projection-cost preserving sketch as given in Theorem \ref{rp_theorem} with $\epsilon' = O(1)$. 
Here, rather than a multiplicative error on the Frobenius norm, we require a multiplicative error on the spectral norm, plus a small additive Frobenius norm error.
Extending our Frobenius norm approximation guarantees to give this requirement is
straightforward but tedious. The result is included in Appendix \ref{spectral_version}, giving us Lemma \ref{spectral_norm_nonoblivious} and thus Theorem \ref{non_oblivious_full_theorem}. We also note that a sufficient bound is given in Theorem 10.8 of \cite{Halko:2011}, however we include an independent proof for completeness and to illustrate the application of our techniques to spectral norm approximation guarantees.

%

\end{proof}
\section{Constant Factor Approximation with $O(\log k)$ Dimensions}\label{logk}

In this section we show that randomly projecting $\bv{A}$ to just $O(\log k/\epsilon^2)$ dimensions using a Johnson-Lindenstrauss matrix is sufficient for approximating $k$-means up to a factor of $(9+\epsilon)$. To the best of our knowledge, this is the first result achieving a constant factor approximation using a sketch with data dimension independent of the input size ($n$ and $d$) and sublinear in $k$. This result opens up the interesting question of whether is is possible to achieve a $(1+\epsilon)$ relative error approximation to $k$-means using just $O(\log k)$ rather than $O(k)$ dimensions. Specifically, we show:

\begin{theorem}\label{logk_approx} For any $\bv{A} \in \mathbb{R}^{n \times d}$, any $0 \le \epsilon < 1$, and $\bv{R} \in \mathbb{R}^{O (\frac{\log (k/\delta)}{\epsilon^2} ) \times d}$ drawn from a Johnson-Lindenstrauss distribution,  let $\bv{\tilde A} = \bv{AR^\top}$. Let $S$ be the set of all $k$-cluster projection matrices, let $\bv{P^*} = \argmin_{\bv{P} \in S} \norm{\bv{A} - \bv{P}\bv{ A}}_F^2$, and let $\bv{\tilde P}^* = \argmin_{\bv{P} \in S} \norm{\bv{\tilde A} - \bv{P}\bv{\tilde A}}_F^2$. With probability $1-\delta$, for any $\gamma \ge 1$, and $\bv{\tilde P} \in S$,
if $\norm{\bv{\tilde A} - \bv{\tilde P} \bv{\tilde A}}_F^2 \le \gamma \norm{\bv{\tilde A} - \bv{\tilde P}^* \bv{\tilde A}}_F^2$:
\begin{align*}
\norm{\bv{ A} - \bv{\tilde P} \bv{ A}}_F^2 \le (9+\epsilon) \cdot \gamma \norm{\bv{A}-\bv{P^* A}}_F^2.
\end{align*}
\end{theorem}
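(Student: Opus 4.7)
The plan is to exploit the splitting $\bv{A} = \bv{P}^*\bv{A} + (\bv{I}-\bv{P}^*)\bv{A}$ advertised in the road map, where $\bv{P}^*$ is the optimal $k$-means projection for $\bv{A}$. The crucial structural fact is that $\bv{P}^*\bv{A}$ has only $k$ distinct rows -- the optimal centroids $\bs{\mu}^*_1,\ldots,\bs{\mu}^*_k$ -- so clustering costs on $\bv{P}^*\bv{A}$ are determined by just $O(k^2)$ scalar pairwise quantities. This is what lets us break the $\Omega(k)$ barrier inherent in preserving costs on a generic $k$-dimensional subspace.

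First I will argue that, with probability $1-\delta$, the JL map $\bv{R}$ with $d' = O(\log(k/\delta)/\epsilon^2)$ rows simultaneously satisfies (a) $\|\bv{R}(\bs{\mu}^*_i-\bs{\mu}^*_j)\|_2^2 \in (1\pm\epsilon)\|\bs{\mu}^*_i-\bs{\mu}^*_j\|_2^2$ for all $i,j\in[k]$, and (b) $\|(\bv{I}-\bv{P}^*)\bv{\tilde A}\|_F^2 \in (1\pm\epsilon)\|(\bv{I}-\bv{P}^*)\bv{A}\|_F^2$. Property (a) is the standard JL lemma with a union bound over $\binom{k}{2}$ pairs. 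Property (b) follows from the JL moment property applied to the single fixed matrix $(\bv{I}-\bv{P}^*)\bv{A}$ and needs only $O(\log(1/\delta)/\epsilon^2)$ rows. The key consequence of (a), which I will prove as a separate claim, is that $\|(\bv{I}-\bv{P})\bv{P}^*\bv{\tilde A}\|_F^2 \in (1\pm\epsilon)\|(\bv{I}-\bv{P})\bv{P}^*\bv{A}\|_F^2$ holds \emph{uniformly} over every cluster-indicator projection $\bv{P}$. The reason is that the $k$-means cost of clustering $\bv{P}^*\bv{A}$ by $\bv{P}$ can be rewritten, using the weighted version of the identity $\sum_{i\in C}\|x_i-\bar x\|^2 = \frac{1}{2|C|}\sum_{i,j\in C}\|x_i-x_j\|^2$, as a non-negative linear combination of the $\binom{k}{2}$ squared distances $\|\bs{\mu}^*_a-\bs{\mu}^*_b\|^2$. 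Linearity of $\bv{R}$ ensures the centroids of $\bv{P}^*\bv{\tilde A}$ are $\bv{R}\bs{\mu}^*_j$, so the same linear combination holds in the sketch space using $\|\bv{R}(\bs{\mu}^*_a-\bs{\mu}^*_b)\|^2$, and (a) transfers the $(1\pm\epsilon)$ bound to the full expression regardless of which $\bv{P}$ is considered.

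With these two preservation results, the final bound follows from two triangle inequalities. Splitting $\bv{A}-\bv{\tilde P}\bv{A} = (\bv{I}-\bv{\tilde P})\bv{P}^*\bv{A} + (\bv{I}-\bv{\tilde P})(\bv{I}-\bv{P}^*)\bv{A}$ and using that $\bv{I}-\bv{\tilde P}$ is a projection (spectral norm $\le 1$) yields $\|\bv{A}-\bv{\tilde P}\bv{A}\|_F \le \|(\bv{I}-\bv{\tilde P})\bv{P}^*\bv{A}\|_F + \sqrt{\mathrm{OPT}}$, where $\mathrm{OPT} = \|\bv{A}-\bv{P}^*\bv{A}\|_F^2$. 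Uniform preservation replaces the first summand by $(1+O(\epsilon))\|(\bv{I}-\bv{\tilde P})\bv{P}^*\bv{\tilde A}\|_F$, which is in turn at most $\|(\bv{I}-\bv{\tilde P})\bv{\tilde A}\|_F + \|(\bv{I}-\bv{P}^*)\bv{\tilde A}\|_F$ by a second triangle inequality. The first of these is bounded by $\sqrt{\gamma}\|(\bv{I}-\bv{P}^*)\bv{\tilde A}\|_F$ using $\gamma$-optimality of $\bv{\tilde P}$ against the candidate $\bv{P}^*\in S$, and the second is at most $\sqrt{(1+\epsilon)\mathrm{OPT}}$ by (b). Chaining everything gives $\|\bv{A}-\bv{\tilde P}\bv{A}\|_F \le (\sqrt{\gamma}+2)(1+O(\epsilon))\sqrt{\mathrm{OPT}}$, and since $(\sqrt{\gamma}+2)^2 = \gamma + 4\sqrt{\gamma} + 4 \le 9\gamma$ for all $\gamma\ge 1$ (tight at $\gamma=1$), rescaling $\epsilon$ gives the claimed $(9+\epsilon)\gamma$ factor.

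The main obstacle is the uniform-over-clusterings preservation implied by (a): this is what lets a JL map with only $O(\log k/\epsilon^2)$ rows -- which is far too small to be a subspace embedding for any nontrivial subspace -- nevertheless control the $k$-means cost of the centroid matrix $\bv{P}^*\bv{A}$ against every possible $\bv{P}$. The pairwise-distance rewriting of $k$-means cost is what makes it work, and once it is in hand the remainder is standard triangle-inequality bookkeeping.
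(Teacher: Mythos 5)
Your proof is correct and follows essentially the same route as the paper's: both split $\bv{A} = \bv{P}^*\bv{A} + (\bv{I}-\bv{P}^*)\bv{A}$, both use the pairwise-distance rewriting of $k$-means cost on the $k$-distinct-row matrix $\bv{P}^*\bv{A}$ plus a JL union bound over $\binom{k}{2}$ centroid pairs to get uniform cost preservation, and both then chain two triangle inequalities, $\gamma$-optimality, and Frobenius-norm preservation of $(\bv{I}-\bv{P}^*)\bv{A}$ to arrive at $\|\bv{A}-\bv{\tilde P}\bv{A}\|_F \lesssim (\sqrt{\gamma}+2)\sqrt{\mathrm{OPT}}$, squaring to $(9+\epsilon)\gamma$. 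The only cosmetic difference is that the paper pivots through the sketch-domain quantity $\|\bv{BR}^\top - \bv{\tilde P}\bv{BR}^\top\|_F$ while you organize the same bounds around $\|(\bv{I}-\bv{\tilde P})\bv{P}^*\bv{\tilde A}\|_F$, which are the same object.
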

In other words, if $\bv{\tilde P}$ is a cluster indicator matrix (see Section \ref{subsec:k_means_constrained}) for an approximately optimal clustering of $\bv{\tilde A}$, then the clustering is also within a constant factor of optimal for $\bv{A}$. Note that there are a variety of distributions that are sufficient for choosing $\bv{R}$. For example, we may use the dense Rademacher matrix distribution of family \emph{1} of Lemma \ref{matapprox}, or a sparse family such as those given in \cite{kane2014sparser}.

To achieve the $O(\log k/\epsilon^2)$ bound, we must focus specifically on $k$-means clustering -- it is clear that projecting to $< k$ dimensions is insufficient for solving general constrained $k$-rank approximation as $\bv{\tilde A}$ will not even have rank $k$. Additionally, random projection is the only sketching technique of those studied that can work when $\bv{\tilde A}$ has fewer than O($k$) columns. 
Consider clustering the rows of the $n \times n$ identity into $n$ clusters, achieving cost $0$. An SVD projecting to less than $k=n-1$ dimensions or column selection technique taking less than $k=n-1$ columns will leave at least two rows in $\bv{\tilde A}$ with all zeros. These rows may be clustered together when optimizing the $k$-means objective for $\bv{\tilde A}$, giving a clustering with cost $> 0$ for $\bv{A}$ and hence failing to achieve multiplicative error.

\begin{proof}
As mentioned in Section \ref{subsec:our_approach}, the main idea is to analyze an $O(\log k/\epsilon^2)$ dimension random projection by splitting $\bv{A}$ in a substantially different way than we did in the analysis of other sketches. Specifically, 
%
%
we split it according to its optimal $k$ clustering and the remainder matrix:
\begin{align*}
\bv{A} = \bv{P}^* \bv{A} + (\bv{I-P}^*) \bv{A}.
\end{align*}
For conciseness, write $\bv{B} =  \bv{P}^* \bv{A}$ and $\bv{\ol B} = (\bv{I-P}^*) \bv{A}$. So we have $\bv{A} = \bv{B} + \bv{\ol B}$ and $\bv{\tilde A} =  \bv{B R}^\top + \bv{\ol B}\bv{ R}^\top$.

By the triangle inequality and the fact that projection can only decrease Frobenius norm:
\begin{align}
\label{first_triangle}
\norm{\bv{ A} - \bv{\tilde P} \bv{ A}}_F \le \norm{\bv{B} - \bv{\tilde P} \bv{B}}_F + \norm{\bv{ \ol B} - \bv{\tilde P} \bv{\ol B}}_F \le \norm{\bv{B} - \bv{\tilde P} \bv{B}}_F + \norm{\bv{\ol B}}_F.
\end{align}

Next note that $\bv{B}$ is simply $\bv{A}$ with every row replaced by its cluster center (in the optimal clustering of $\bv{A}$). So $\bv{B}$ has just $k$ distinct rows. Multiplying by a Johnson-Lindenstauss matrix with $O(\log (k/\delta)/\epsilon^2)$ columns will preserve the squared distances between all of these $k$ points with high probability. It is not difficult to see that preserving distances is sufficient to preserve the cost of any clustering of $\bv{B}$ since we can rewrite the $k$-means objection function as a linear function of squared distances alone:
\begin{align*}
\norm{\bv{B} - \bv{X}_C\bv{X}_C^\top \bv{B}}_F^2 = \sum_{j=1}^n \|\bv{b}_{j} - \bs{\mu}_{C(j)}\|_2^2 = \sum_{i = 1}^k \frac{1}{|C_i|} \sum_{\substack{\bv{b}_j,\bv{b}_k \in C_i \\ j \neq k}} \norm {\bv{b}_j - \bv{b}_k}_2^2.
\end{align*}
So, $\norm{\bv{B} - \bv{\tilde P} \bv{B}}_F^2 \le (1+\epsilon)\norm{\bv{BR}^\top - \bv{\tilde P} \bv{BR}^\top}_F^2$. Combining with \eqref{first_triangle} and noting that square rooting can only reduce multiplicative error, we have:
\begin{align*}
\norm{\bv{ A} - \bv{\tilde P} \bv{ A}}_F \le (1+\epsilon)\norm{\bv{B}\bv{R}^\top - \bv{\tilde P} \bv{BR}^\top}_F + \norm{\bv{\ol B}}_F.
\end{align*}
Rewriting $\bv{B}\bv{R}^\top = \bv{\tilde A} - \bv{\ol B}\bv{ R}^\top$ and again applying triangle inequality and the fact the projection can only decrease Frobenius norm, we have:
\begin{align*}
\norm{\bv{ A} - \bv{\tilde P} \bv{ A}}_F &\le (1+\epsilon)\norm{(\bv{\tilde A} - \bv{\ol B}\bv{ R}^\top) - \bv{\tilde P} (\bv{\tilde A} - \bv{\ol B}\bv{ R}^\top) }_F + \norm{\bv{\ol B}}_F\\
&\le (1+\epsilon)\norm{\bv{\tilde A} - \bv{\tilde P} \bv{\tilde A}}_F +  (1+\epsilon)\norm{(\bv{I}-\bv{\tilde P})\bv{\ol B}\bv{R}^\top}_F +  \norm{\bv{\ol B}}_F\\
&\le (1+\epsilon)\norm{\bv{\tilde A} - \bv{\tilde P} \bv{\tilde A}}_F +  (1+\epsilon)\norm{\bv{\ol B}\bv{ R}^\top}_F +  \norm{\bv{\ol B}}_F
\end{align*}

As discussed in Section \ref{spectral_norm_proofs}, multiplying by a Johnson-Lindenstrauss matrix with at least $O(\log(1/\delta)/\epsilon^2)$ columns will preserve the Frobenius norm of any fixed matrix up to $\epsilon$ error so $\norm{\bv{\ol B}\bv{ R}^\top}_F \le (1+\epsilon) \norm{\bv{\ol B}}_F$. Using this and the fact that $\norm{\bv{\tilde A} - \bv{\tilde P} \bv{\tilde A}}_F^2 \le \gamma \norm{\bv{\tilde A} - \bv{\tilde P}^* \bv{\tilde A}}_F^2 \le  \gamma \norm{\bv{\tilde A} - \bv{P}^* \bv{\tilde A}}_F^2$ we have:
\begin{align*}
\norm{\bv{ A} - \bv{\tilde P} \bv{ A}}_F &\le (1+\epsilon)\sqrt{\gamma} \norm{\bv{\tilde A} - \bv{P}^* \bv{\tilde A}}_F + (2+3\epsilon) \norm{\bv{\ol B}}_F.
\end{align*}

Finally, we note that $\bv{\ol B} = \bv{A}-\bv{P^* A}$ and again apply the fact that multiplying by $\bv{R}^\top$ preserves the Frobenius norm of any fixed matrix with high probability. So, $\norm{\bv{\tilde A} - \bv{P}^* \bv{\tilde A}}_F \le (1+\epsilon)\norm{\bv{A} - \bv{P}^* \bv{A}}_F$ and thus:
\begin{align*}
\norm{\bv{ A} - \bv{\tilde P} \bv{ A}}_F  \le (3+6\epsilon) \sqrt{\gamma}\norm{\bv{A} - \bv{P}^* \bv{A}}_F.
\end{align*}
Squaring and adjusting $\epsilon$ by a constant factor gives the desired result.
\end{proof}
\section{Applications to Streaming and Distributed Algorithms}\label{applications}

As mentioned, there has been an enormous amount of work on exact and approximate $k$-means clustering algorithms \cite{Inaba:1994:AWV:177424.178042,kanungo2002local,1366265,Arthur2007,Har-Peled:2007:SCK:1229133.1229135}. While surveying all relevant work is beyond the scope of this paper, applying our dimensionality reduction results black box gives immediate improvements to existing algorithms with runtime dependence on dimension.

Our results also have a variety of applications to distributed and streaming algorithms. The size of coresets for $k$-means clustering typically depend on data dimension, so our relative error sketches with just $\lceil k/\epsilon \rceil$ dimensions and constant error sketches with $O(\log k)$ dimensions give the smallest known constructions. See \cite{har2004coresets,Har-Peled:2007:SCK:1229133.1229135,balcan2013distributed,feldman2013turning} for more information on coresets and their use in approximation algorithms as well as distributed and streaming computation. Aside from these immediate results, we briefly describe two example applications of our work.

\subsection{Streaming Low Rank Approximation}

For any matrix $\bv{A} \in \mathbb{R}^{n \times d}$, consider the problem of finding a basis for an approximately optimal $k$-rank subspace to project the rows of $\bv{A}$ onto -- i.e. computing an approximate SVD like the one required for Theorem \ref{approx_svd}.  That is, we wish to find $\bv{Z} \in \mathbb{R}^{d \times k}$ such that
\begin{align*}
\norm{\bv{A} - \bv{A} \bv{ZZ}^\top }_F^2 \le (1+\epsilon) \norm{\bv{A}_{r \setminus k}}_F^2
\end{align*} 

Building on the work of \cite{liberty2013simple}, \cite{ghashami2014relative} gives a deterministic algorithm for this problem using $O(dk/\epsilon)$ words of space in the row-wise streaming model, when the matrix $\bv{A}$ is presented to and processed by a server one row at a time. \cite{woodruffNew} gives a nearly matching lower bound, showing that $O(dk/\epsilon)$ bits of space is necessary for solving the problem, even using a randomized algorithm with constant failure probability. 

Theorem \ref{rp_theorem} applied to unconstrained $k$-rank approximation allows this problem to be solved using $\tilde O(dk/\epsilon^2)$ words and $\tilde O(\log k \log n)$ bits of space in the general turnstile streaming model where arbitrary additive updates to entries in $\bv{A}$ are presented in a stream. Word size is typically assumed to be $O(\log d \log n)$ bits, giving us an $\tilde O(dk/\epsilon^2)$ word space bound.  Here $\tilde O(\cdot )$ hides $\log$ factors in $k$ and the failure probability.

We simply sketch $\bv{A}$ by multiplying on the left by an $\tilde O(k/\epsilon^2) \times n$ matrix drawn from family \emph{3} of Lemma \ref{matapprox}, which only takes $\tilde O(\log k \log n)$ bits to specify. We then obtain $\bv Z$ by computing the top $k$ singular vectors of the sketch. 
This approach gives the best known bound in the turnstile streaming model using only a single pass over $\bv{A}$, nearly matching the $O(dk/\epsilon)$ lower bound given for the more restrictive row-wise streaming model. Previously approximate SVD algorithms \cite{sarlos2006improved,Clarkson:2013} rely on non-oblivious random projection, so could not give such a result.

\subsection{Distributed $k$-means clustering}

In \cite{balcan2013distributed}, the authors give a distributed $k$-means clustering algorithm for the setting where the rows of the data matrix $\bv{A}$ are arbitrarily partitioned across $s$ servers. Assuming that all servers are able to communicate with a central coordinator in one hop, their algorithm requires total communication $\tilde O(kd+ sk)$ (hiding dependence on error $\epsilon$ and failure probability $\delta$). A recent line of work \cite{liang2013distributed,kannan2014principal,balcan2014improved} seeks to improve the communication complexity of this algorithm by applying the SVD based dimensionality reduction result of \cite{feldman2013turning}. The basic idea is to apply a distributed SVD algorithm (also referred to as distributed PCA) to compute the top right singular vectors of $\bv{A}$. Each server can then locally project its data rows onto these singular vectors before applying the clustering algorithm from  \cite{balcan2013distributed}, which will use $\tilde O(kd'+ sk)$ communication, where $d'$ is the dimension we reduce down to. 

By noting that we can set $d'$ to $\lceil k/\epsilon \rceil$ instead of $O(k/\epsilon^2)$, we can further improve on the $k$-means communication complexity gains in this prior work. 
Additionally, our oblivious random projection result (Theorem \ref{rp_theorem}) can be used to avoid the distributed PCA preprocessing step entirely. Inherently, PCA requires $O(sdk)$ total communication -- see Theorem 1.2 of \cite{kannan2014principal} for a lower bound. Intuitively, the cost stems from the fact that $O(k)$ singular vectors, each in $\mathbb{R}^d$, must be shared amongst the $s$ servers. Using Theorem \ref{rp_theorem}, a central coordinator can instead send out bits specifying a single Johnson-Lindenstrauss matrix to the $s$ servers. Each server can then project its data down to just $\tilde O(k/\epsilon^2)$ dimensions and proceed to run the $k$-means clustering algorithm of \cite{balcan2013distributed}. They could also further reduce down to $\lceil k/\epsilon \rceil$ dimensions using a distributed PCA algorithm or to $O(k/\epsilon)$ dimensions using our non-oblivious random projection technique. Formalizing one possible strategy, we give the first result with communication only logarithmic in the input dimension $d$.

\begin{corollary} Given a matrix $\bv{A} \in \mathbb{R}^{n \times d}$ whose rows are partitioned across $s$ servers that are all connected to a single coordinator server, along with a centralized $\gamma$-approximate algorithm for $k$-means clustering, there is a distributed algorithm computing a $(1+\epsilon)\gamma$-approximation to the optimal clustering that succeeds with probability at least $1-\delta$ and communicates just $\tilde O(s\log d  \log k)$ bits, $\tilde O\left(\frac{sk}{\epsilon}\right)$ vectors in $\mathbb{R}^{\tilde O(k/\epsilon^2)}$, and $O\left (\frac{1}{\epsilon^4} \left (\frac{k^2}{\epsilon} + \log 1/\delta \right ) + sk\log \frac{sk}{\delta} \right )$ vectors in $\mathbb{R}^{O(k/\epsilon)}$.
\end{corollary}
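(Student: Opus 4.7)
The plan is to stage a three-phase distributed protocol that composes two sketches with the coreset-based distributed clustering algorithm of \cite{balcan2013distributed}. First, the coordinator samples a sparse subspace embedding $\bv{R}_1 \in \mathbb{R}^{\tilde O(k/\epsilon^2) \times d}$ from family~3 of Lemma~\ref{matapprox} (e.g.\ an OSNAP matrix fully determined by an $O(\log k)$-wise independent hash on $[d]$) and broadcasts its succinct description to every server. Each server applies $\bv{R}_1^\top$ to its local rows; by Theorem~\ref{rp_theorem}, the vertical concatenation $\bv{A}\bv{R}_1^\top$ is a two-sided projection-cost preserving sketch of $\bv{A}$ with error $\epsilon/3$. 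Second, the servers perform a non-oblivious projection in the style of Theorem~\ref{non_oblivious_full_theorem} on $\bv{A}\bv{R}_1^\top$: the coordinator broadcasts the seed of a second short JL matrix $\bv{R}_2 \in \mathbb{R}^{O(k/\epsilon) \times n}$, each server $i$ computes its contribution $\bv{R}_2^{(i)} \bv{A}_i \bv{R}_1^\top$ (of size $O(k/\epsilon) \times \tilde O(k/\epsilon^2)$) and forwards it, the coordinator sums the contributions into $\bv{R}_2 \bv{A}\bv{R}_1^\top$, computes an orthonormal basis $\bv{Z} \in \mathbb{R}^{\tilde O(k/\epsilon^2) \times O(k/\epsilon)}$ for its rowspan, and broadcasts $\bv{Z}$ back. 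Every server then forms its portion of $\bv{A}\bv{R}_1^\top \bv{Z}$ locally. Third, the servers and coordinator run the protocol of \cite{balcan2013distributed} on these $O(k/\epsilon)$-dimensional vectors using the given centralized $\gamma$-approximate $k$-means solver as the black box.

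For correctness, note that both sketching stages are right-multiplications acting only on the column space, while a $k$-means cluster projection $\bv{X}_C \bv{X}_C^\top$ acts on the row space, so Lemma~\ref{sketch_approximation} and Lemma~\ref{sketch_approximation_improved} can be chained through the composition. Theorem~\ref{rp_theorem} yields a two-sided $(1+\epsilon/3)$ sketch in the first stage with probability $1-\delta/3$, and Theorem~\ref{non_oblivious_full_theorem}, applied with $\bv{A}\bv{R}_1^\top$ playing the role of the input matrix, yields a one-sided $(1+\epsilon/3)$ sketch in the second stage with probability $1-\delta/3$. Invoking Lemma~\ref{sketch_approximation_improved} on the inner stage and then Lemma~\ref{sketch_approximation} on the outer stage, a $\gamma$-approximate clustering of $\bv{A}\bv{R}_1^\top\bv{Z}$ is, after rescaling $\epsilon$ by a constant, a $(1+\epsilon)\gamma$-approximate clustering of $\bv{A}$; a union bound over the three failure events keeps the total failure probability below $\delta$.

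The communication accounting is then direct. Broadcasting the seeds for $\bv{R}_1$ and $\bv{R}_2$ to $s$ servers costs $\tilde O(s \log d \log k)$ bits, since each sparse embedding is determined by a hash whose description has this width. The non-oblivious step sends $O(k/\epsilon)$ vectors in $\mathbb{R}^{\tilde O(k/\epsilon^2)}$ from each of the $s$ servers to the coordinator and $O(k/\epsilon)$ basis vectors of the same dimension back from the coordinator to each server, for a total of $O(sk/\epsilon)$ vectors in $\mathbb{R}^{\tilde O(k/\epsilon^2)}$. Finally, the protocol of \cite{balcan2013distributed} running on data of dimension $d' = O(k/\epsilon)$ transmits exactly the stated count of vectors in $\mathbb{R}^{O(k/\epsilon)}$ once its communication formula is instantiated with this $d'$.

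The main obstacle is justifying the \emph{composition} of two projection-cost preserving sketches, because each stage introduces its own projection-independent additive constant $c$ in Definitions~\ref{def:2sidedsketch} and~\ref{def:1sidedsketch}. The two constants sum into a single constant that is still independent of the chosen rank-$k$ projection $\bv{P}$, so chaining Lemma~\ref{sketch_approximation_improved} and Lemma~\ref{sketch_approximation} remains valid and only multiplies the multiplicative errors. A secondary technical point is that Theorem~\ref{non_oblivious_full_theorem} is stated for an arbitrary input in $\mathbb{R}^{n \times d}$; here it is applied to $\bv{A}\bv{R}_1^\top \in \mathbb{R}^{n \times \tilde O(k/\epsilon^2)}$, which is covered by the theorem as written, and its failure probability depends only on the JL family and the target dimension $O(k/\epsilon)$, not on the original ambient dimension $d$.
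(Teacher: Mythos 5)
Your proposal follows the paper's three-stage architecture essentially exactly: an oblivious family-3 sketch to $\tilde O(k/\epsilon^2)$ columns, a non-oblivious projection via Theorem~\ref{non_oblivious_full_theorem} down to $O(k/\epsilon)$ columns, and then the coreset-based protocol of \cite{balcan2013distributed} on the low-dimensional data, composing the error bounds through Lemmas~\ref{sketch_approximation} and~\ref{sketch_approximation_improved}. Your discussion of why the projection-independent constants add benignly is a correct filling-in of a step the paper leaves implicit.

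Two details of your protocol do not quite line up with the claimed bounds, and both are precisely the points the paper's proof takes care of. First, you have the coordinator broadcast a shared seed for $\bv{R}_2 \in \mathbb{R}^{O(k/\epsilon)\times n}$. A hash-based JL matrix with $n$ input coordinates needs a seed of width $\tilde O(\log n\,\log k)$, not $\tilde O(\log d\,\log k)$, and $n$ is not controlled by $d$. The paper sidesteps this entirely by observing that families~1--3 of Lemma~\ref{matapprox} all have mutually \emph{independent} columns, so each server $i$ can locally sample its own block $\bv{R}_2^{(i)} \in \mathbb{R}^{\tilde O(k/\epsilon)\times n_i}$ with no shared randomness; the coordinator just sums the contributions $\bv{R}_2^{(i)}\bv{A}_i\bv{R}_1^\top$. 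That removes the $\bv{R}_2$ broadcast from the bit count and leaves it at $\tilde O(s\log d\log k)$. Second, you say the coordinator returns an orthonormal basis $\bv{Z}\in\mathbb{R}^{\tilde O(k/\epsilon^2)\times O(k/\epsilon)}$ for the rowspan of $\bv{R}_2\bv{A}\bv{R}_1^\top$, but that rowspan can have dimension $\tilde O(k/\epsilon)$ (log factors included). The paper instead returns a basis for the best rank-$O(k/\epsilon)$ approximation of $\bv{R}_2\bv{A}\bv{R}_1^\top$, justified through the spectral-norm low-rank argument in Lemma~\ref{spectral_norm_nonoblivious}; this is what gets the final data dimension, and hence the third communication term, to exactly $O(k/\epsilon)$ without hidden log factors.
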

\begin{proof}
Here $\tilde O(\cdot)$ hides log factors in the failure probability $\delta$.
For the initial reduction to $\tilde O\left(k/\epsilon^2\right)$ dimensions, we can choose a matrix from family \emph{3} of Lemma \ref{matapprox} that can be specified with $\tilde O(\log d  \log k)$ bits, which must be communicated to all $s$ servers.

We can then use Theorem \ref{non_oblivious_full_theorem} to further reduce to $O(k/\epsilon)$ dimensions. Note that the first three families of Lemma \ref{matapprox} all have independent columns. So, in order to compute $\bv{R A}$ where $\bv{R} \in \mathbb{R}^{\tilde O(k/\epsilon) \times n}$ is drawn from one of these families, each server can simply independently choose $\bv{R}_i \in \mathbb{R}^{\tilde O(k/\epsilon) \times n_i}$ from the same distribution, compute $\bv{R}_i \bv{A}_i$, and send it to the central server. Here $\bv{A}_i$ is the set of rows held by server $i$ and $n_i$ is the number of rows in $\bv{A}_i$. The central server can then just compute $\bv{RA} = \sum_{i=1}^s \bv{R}_i \bv{A}_i$, and send back an orthonormal basis for the rows of $\bv{RA}$ to the servers. To further reduce dimension from $\tilde O\left(k/\epsilon\right)$ to $O(k/\epsilon)$, and to improve constant factors, the central server can actually just return an orthonormal basis for the best rank $O(k/\epsilon)$ approximation of $\bv{RA}$, as described in the proof of Lemma \ref{spectral_norm_nonoblivious}. Each server can then independently project their rows to this basis. The total communication of this procedure is $\tilde O\left(\frac{sk}{\epsilon}\right)$ vectors in $\mathbb{R}^{\tilde O(k/\epsilon^2)}$.

Finally, applying Theorem 3 of \cite{balcan2013distributed} with $h =1$ and $d = O(k/\epsilon)$ and adjusting $\epsilon$ by a constant factor gives a communication cost of $O\left (\frac{1}{\epsilon^4} \left (\frac{k^2}{\epsilon} + \log 1/\delta \right ) + sk\log \frac{sk}{\delta} \right )$ vectors in $\mathbb{R}^{O(k/\epsilon)}$ for solving the final clustering problem to within $(1+\epsilon)\gamma$ error.
\end{proof}

\section{Open Questions}

As mentioned, whether it is possible to improve on our $(9+\epsilon)$ $k$-means approximation guarantee for random projection to $O(\log k/\epsilon^2)$ dimensions is an intriguing open question. 

We are also interested in whether our column sampling results can be used to develop fast low rank approximation algorithms based on sampling. Theorem \ref{sample} requires a constant factor approximate SVD and returns a sketch from which one can compute a $(1+\epsilon)$ factor approximate SVD. In other words, it gives a method for refining a coarse approximate SVD to a relative error one. Is it possible to start with an even coarser approximate SVD or set of sampling probabilities and use this refinement procedure to iteratively obtain better sampling probabilities and eventually a relative error approximate SVD? Such an algorithm would only ever require computing exact SVDs on small column samples of the original matrix, possibly leading to advantages over Johnson-Lindenstrauss type methods if the original matrix, and hence each sample, is sparse or structured. Iterative algorithms of this form have been developed for approximate regression \cite{peng_iterative,cohen2015uniform}. Extending these results to low rank approximation is an interesting open question.

\section{Acknowledgements} We thank Piotr Indyk, Jonathan Kelner, and Aaron Sidford for helpful conversations and Ludwig Schmidt and Yin Tat Lee for valuable edits and comments on the writeup. We also thank Jelani Nelson and David Woodruff for extensive discussion and guidance. This work was partially supported by National Science Foundation awards,
CCF-1111109,
CCF-0937274,  CCF-0939370, CCF-1217506, and IIS-0835652, NSF Graduate Research Fellowship Grant No. 1122374, 
AFOSR grant FA9550-13-1-0042, and DARPA grant FA8650-11-C-7192.

\bibliography{kmeans}{}
\bibliographystyle{alpha}

\appendix
\section{Matching Lower Bound for SVD Based Reduction}\label{lower_bound}

We show that to approximate the $k$-means objective function to within $(1+\epsilon)$ using the singular value decomposition, it is necessary to use at least the best rank $\lceil k/\epsilon \rceil$ approximation to $\mathbf A$. This lower bound clearly also implies that $\lceil k/\epsilon \rceil$ is necessary for all constrained low rank approximation problems, of which $k$-means is a specific instance. Technically, we prove:
\begin{theorem}\label{lb_theorem}
For any $\lambda<1$ and $\epsilon>0$, there exist $n,d,k$ and $\mathbf A\in\R^{n\times d}$ such that, choosing $m=\lceil k/\epsilon \rceil$ and letting $\bv{\tilde P}$ be the $k$ cluster projection matrix minimizing $\norm{\bv{A}_m - \bv{P} \bv{A}_m}_F^2$ and $\bv{P}^*$ be the cluster projection minimizing $\norm{\bv{A} - \bv{P} \bv{A}}_F^2$ then:
\begin{align*}
\norm{\bv{A}- \bv{\tilde P} \bv{A}}_F^2 > (1+\epsilon) \lambda \norm{\bv{A} - \bv{P}^* \bv{A}}_F^2.
\end{align*}
\end{theorem}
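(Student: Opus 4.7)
The plan is to exhibit, for each $\lambda<1$ and $\epsilon>0$, a specific matrix $\mathbf{A}$ whose SVD-truncation to $m=\lceil k/\epsilon\rceil$ columns achieves approximation ratio at least $(1+\epsilon)\lambda$. The idea is to make every inequality used in the proof of Theorem~\ref{exact_svd} simultaneously tight. Two ingredients control this: first, the single inequality~\eqref{svd_head_bound} is tight exactly when $\sigma_{k+1}=\sigma_{k+2}=\cdots=\sigma_{m+k}$ and $\sigma_i=0$ for $i>m+k$ (so the ``dropped mass'' $\sum_{i=m+1}^{m+k}\sigma_i^2$ is exactly $\epsilon\|\mathbf{A}_{r\setminus k}\|_F^2$); second, the relation between the projection cost on $\mathbf{A}$ and on $\mathbf{A}_m$ (via the Pythagorean identity $\|\mathbf{A}-\mathbf{P}\mathbf{A}\|_F^2 = \|\mathbf{A}_m-\mathbf{P}\mathbf{A}_m\|_F^2 + \|\mathbf{A}_{r\setminus m}\|_F^2 - \|\mathbf{P}\mathbf{A}_{r\setminus m}\|_F^2$) is most damaging when $\text{OPT}_A$ equals its absolute lower bound $\|\mathbf{A}_{r\setminus k}\|_F^2$, because then an additive error of $\epsilon\cdot \text{OPT}_A$ translates to a multiplicative $1+\epsilon$.

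Concretely I would take $n=k+m$ rows partitioned into $k$ equal groups of size $l+1$ with $l=\lceil 1/\epsilon\rceil$, and set the rows of cluster $j$ to $\mu_j+r^{(j)}_t$ for $t=0,\dots,l$, where the $\mu_j$'s are pairwise orthogonal ``center'' vectors of equal norm and the $r^{(j)}_t$'s form a regular simplex of equal radius in a subspace orthogonal to all $\mu_{j'}$'s and to the simplex subspaces of the other clusters. The orthogonality of these pieces gives an SVD by inspection: the top $k$ singular values come from the centers (with left singular vectors equal to the normalized cluster indicators $\mathbbm{1}_{C_j}/\sqrt{l+1}$) and the remaining $m=kl$ singular values are all equal by the symmetry of a regular simplex. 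Because the top-$k$ left singular subspace is spanned by cluster indicators, the natural partition already attains the unconstrained rank-$k$ Frobenius optimum, so $\text{OPT}_A=\|\mathbf{A}_{r\setminus k}\|_F^2$ exactly. Since $\sigma_m=\sigma_{m+1}$, the top-$m$ singular subspace of $\mathbf{A}$ is not unique and $\mathbf{A}_m$ can be chosen adversarially by selecting which $k$-dimensional subspace of the residual space to drop. I would choose this subspace so that the dropped $k$ directions align with a non-natural partition: the optimal cluster projection $\bv{\tilde P}^*$ for $\mathbf{A}_m$ then differs from $\mathbf{P}^*$, and by the Pythagorean identity its cost on $\mathbf{A}$ exceeds $\text{OPT}_A$ by a term approaching $\sum_{i=m+1}^{m+k}\sigma_i^2=\epsilon\|\mathbf{A}_{r\setminus k}\|_F^2$. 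A scaling parameter balancing the center norm against the simplex radius gives the family indexed by $\lambda$, with $\lambda\to 1$ in the most pathological regime.

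The main obstacle is Step~3: forcing $\bv{\tilde P}^*$ to actually differ from $\mathbf{P}^*$. In the naive setup, once the centers $\mu_j$ have sufficiently large norm, dropping $k$ residual directions leaves the between-cluster separation dominant in $\mathbf{A}_m$ and the natural partition remains optimal on $\mathbf{A}_m$; mapped back to $\mathbf{A}$ this gives a trivial ratio of $1$ and no lower bound. To avoid this collapse one has to tune the ratio between $\|\mu_j\|$ and the simplex radius $r$ so that, precisely within the adversarial subspace, the within-cluster spread dominates and the optimal partition genuinely moves some points across clusters. Verifying optimality of the alternative partition on $\mathbf{A}_m$ and tracking the exact numerical constants so that the gap approaches $\epsilon\cdot \text{OPT}_A$ (rather than some fraction of it) is the delicate part; the tunable parameter $\lambda<1$ absorbs the remaining slack and lets us push the bound arbitrarily close to $1+\epsilon$.
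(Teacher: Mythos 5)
There is a genuine gap, and it is not merely the ``delicate part'' you flag in Step 3 -- it is fatal to your construction. Using the orthogonal splitting you yourself invoke ($\bv{A}_m\bv{A}_{r\setminus m}^\top = \bv{0}$, so $\norm{(\bv{I}-\bv{P})\bv{A}}_F^2 = \norm{(\bv{I}-\bv{P})\bv{A}_m}_F^2 + \norm{(\bv{I}-\bv{P})\bv{A}_{r\setminus m}}_F^2$ for every projection $\bv{P}$), one gets for the clustering $\bv{\tilde P}$ that is optimal for $\bv{A}_m$:
\begin{align*}
\norm{\bv{A}-\bv{\tilde P}\bv{A}}_F^2 \;\le\; \norm{\bv{A}_m-\bv{\tilde P}\bv{A}_m}_F^2 + \norm{\bv{A}_{r\setminus m}}_F^2 \;\le\; \norm{\bv{A}_m-\bv{P}^*\bv{A}_m}_F^2 + \norm{\bv{A}_{r\setminus m}}_F^2 \;=\; \norm{\bv{A}-\bv{P}^*\bv{A}}_F^2 + \norm{\bv{P}^*\bv{A}_{r\setminus m}}_F^2 .
\end{align*}
So the excess cost of \emph{any} clustering recovered from $\bv{A}_m$ is at most $\norm{\bv{P}^*\bv{A}_{r\setminus m}}_F^2$, the part of the dropped tail captured by the optimal clustering of $\bv{A}$. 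In your construction the dropped $k$ directions lie in the span of the within-cluster simplex directions, and within each cluster the simplex vectors sum to zero; since $\bv{P}^*$ (the natural partition, which you correctly argue is optimal because its indicator vectors span the top-$k$ left singular subspace) averages rows within clusters, $\bv{P}^*\bv{A}_{r\setminus m}=\bv{0}$. Hence $\norm{\bv{A}-\bv{\tilde P}\bv{A}}_F^2 = \norm{\bv{A}-\bv{P}^*\bv{A}}_F^2$ exactly, the ratio is $1$, and no tuning of the center norm against the simplex radius, nor any adversarial tie-breaking among the equal residual singular values, can produce a $(1+\epsilon)\lambda$ gap. The structural requirement for a lower bound is the opposite of what you built: the \emph{cluster-identity-revealing} directions must be the ones that $\bv{A}_m$ drops (so that $\norm{\bv{P}^*\bv{A}_{r\setminus m}}_F^2$ is large, close to $\epsilon$ times the optimum), while the retained top-$m$ directions must be essentially unclusterable.

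That is exactly what the paper's construction does: it places a $(k-1)$-vertex simplex of scale $\lambda'$ slightly below $1$ in $k-1$ coordinates and a large Gaussian cloud in the other $m$ coordinates, with the cloud's smallest singular value exceeding $\lambda'$ (Lemma \ref{gaussian_singular}), so $\bv{A}_m$ collapses the simplex to the origin. The optimal clustering of $\bv{A}$ gives each simplex vertex its own cluster (capturing all of the dropped mass, roughly $k=\epsilon m$) and pays about $m$ on the cloud, while any clustering optimal for $\bv{A}_m$ merges the collapsed simplex points, paying nearly $k$ extra on $\bv{A}$; crucially, Lemma \ref{unclusterable_cloud} shows the freed-up clusters cannot recoup this by splitting the Gaussian cloud, since $k$ clusters reduce the cloud's cost by at most a $1-\lambda$ fraction. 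Your proposal contains no analogue of this ``unclusterable bulk'' ingredient, and without it (and with the roles of retained/dropped directions reversed) the approach cannot yield the theorem.
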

First, we describe the data set $\bv A$ which proves the lower bound. We set $d = \lceil k/\epsilon\rceil + k -1$. In $k-1$ of the dimensions, we place a simplex. Orthogonal to this simplex in the other $\lceil k/\epsilon\rceil=m$ dimensions, we place a large number of random Gaussian vectors, forming a `cloud' of points. (Note: From now on we will drop the ceiling notation and will fix $k$ and $\epsilon$ so $k/\epsilon$ is an integer.) The Gaussian vectors will cluster naturally with only one center, so as one possible $k$ clustering, we can simply place $k-1$ of the cluster centers on the simplex and one of the centers at the centroid of the Gaussians, which will be near the origin. However, we will choose our points such that the largest singular directions will all be in the Gaussian cloud, so $\bv A_m$ will collapse the simplex to the origin and therefore any optimal clustering will keep the points of the simplex in a single cluster. This frees more clusters to use on the Gaussian cloud, but we will show that clustering the Gaussian cloud with $k$ clusters rather than $1$ cluster will not significantly reduce its clustering cost, so the increased cost due to the simplex will induce an error of almost an $\epsilon$ fraction of the optimal cost, giving us the lower bound.

Formally, choose $n=n'+k-1$ large (we will say precisely how large later). Define
\[\bv A=\begin{pmatrix}\lambda'\bv I_{k-1}&0\\0&\bv G\end{pmatrix},\] where $\lambda'$ will be a constant slightly smaller than 1, and $\bv{G} \in\R^{n'\times m}$ is a random Gaussian matrix with independent $\mathcal N(0,1/n')$ entries. The first $k-1$ rows form a simplex, and the remaining rows form the `Gaussian cloud.'

We need two properties of $\bv G$, which are given by the following Lemmas.
\begin{lemma}\label{gaussian_singular}
With high probability, the smallest singular value of $\bv G$ is at least $1-2\sqrt{m/n'}$.
\end{lemma}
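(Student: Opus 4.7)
The plan is to invoke a standard non-asymptotic concentration bound for the extreme singular values of a Gaussian matrix, specifically the Davidson--Szarek bound (see e.g.\ Theorem II.13 of Davidson--Szarek, or Theorem 5.32 of Vershynin's survey on non-asymptotic random matrix theory). Stated in the form convenient for us, if $\bv{H} \in \R^{n' \times m}$ has i.i.d.\ $\mathcal{N}(0,1)$ entries, then for every $t \ge 0$,
\[
\Pr\!\left[\sigma_{\min}(\bv{H}) \le \sqrt{n'} - \sqrt{m} - t\right] \le e^{-t^2/2}.
\]
Since our $\bv{G}$ has i.i.d.\ $\mathcal{N}(0,1/n')$ entries, we can write $\bv{G} = \frac{1}{\sqrt{n'}}\bv{H}$ for such an $\bv{H}$, so $\sigma_{\min}(\bv{G}) = \sigma_{\min}(\bv{H})/\sqrt{n'}$. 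Setting $t = \sqrt{m}$ then yields
\[
\sigma_{\min}(\bv{G}) \ge 1 - 2\sqrt{m/n'}
\]
with probability at least $1 - e^{-m/2}$, which is the claimed high-probability bound (for the regime where $m$ is large, and in any event we are free to take $n'$ as large as needed, so the bound can be made as strong as desired).

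If a self-contained proof is preferred rather than citing Davidson--Szarek, I would use the standard $\epsilon$-net argument: fix a $1/4$-net $\mathcal{N}$ of the unit sphere $S^{m-1}$ of size at most $9^m$; for each fixed unit vector $\bv{x}$, $\|\bv{G}\bv{x}\|_2^2$ is $1/n'$ times a $\chi^2_{n'}$ random variable, so by standard Laurent--Massart tail bounds it concentrates around $1$ within $O(\sqrt{t/n'})$ with probability $1 - e^{-t}$; a union bound over $\mathcal{N}$ with $t = \Theta(m)$ controls $\|\bv{G}\bv{x}\|_2$ on the net, and a standard net-to-sphere approximation upgrades this to a bound on all of $S^{m-1}$, yielding $\sigma_{\min}(\bv{G}) = \min_{\|\bv{x}\|=1}\|\bv{G}\bv{x}\|_2 \ge 1 - 2\sqrt{m/n'}$.

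The main ``obstacle'' is really just picking constants cleanly: the Davidson--Szarek bound gives $1 - \sqrt{m/n'} - t/\sqrt{n'}$, so we need to absorb the slack $t/\sqrt{n'}$ into the factor of $2$ in front of $\sqrt{m/n'}$, which forces $t \le \sqrt{m}$ and hence caps the failure probability at roughly $e^{-m/2}$. Since $n'$ is a free parameter in the construction (chosen large at the end), this is unproblematic; the lemma's qualitative ``with high probability'' statement is immediate. There is no real technical difficulty beyond correctly normalizing the variance.
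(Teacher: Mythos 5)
Your proposal is correct and follows essentially the same route as the paper: both invoke the Davidson--Szarek concentration bound for the smallest singular value of a Gaussian matrix (the paper cites it via Rudelson--Vershynin in the already-normalized form $\Pr[\sigma_m(\bv G)\le 1-\sqrt{m/n'}-t/\sqrt{n'}]\le e^{-t^2/2}$) and set $t=\sqrt{m}$ to absorb the slack into the factor of $2$. The alternative net argument you sketch is a fine fallback but is not needed.
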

\begin{proof}As a random Gaussian matrix, Theorem II.13 in \cite{davidson2001local} shows that the expected value of the smallest singular value of $\bv G$ is $1-\sqrt{m/n'}$. Rudelson and Vershynin \cite{rudelson2009smallest} cite this result and comment that it can be turned into a concentration inequality of the following form:
\[\Pr[\sigma_m(G)\le1-\sqrt{m/n'}-t/\sqrt{n'}]\le e^{-t^2/2}.\]
Taking $t=\sqrt m$ yields the result with probability exponentially close to 1.\end{proof}
Therefore, set $\lambda'=1-2\sqrt{\frac k{n'\epsilon}}$. This lemma guarantees that the $m$ largest singular values are associated with the Gaussian cloud, and therefore, the best rank $m$ approximation to $\bv A$ is $\bv A_m=\begin{pmatrix}0&0\\0&\bv G\end{pmatrix}$.

\begin{lemma}\label{unclusterable_cloud}
With high probability, for large enough $k$ and $n$, the optimal cost of clustering $\bv G$ with $k$ clusters is a $\lambda$-fraction of the optimal cost of clustering $\bv G$ with one cluster.
\end{lemma}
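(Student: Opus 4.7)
The plan is to show that $n'$ iid Gaussian samples in $\mathbb{R}^m$ (with $m = k/\epsilon$) are essentially unclusterable: no $k$ centers can explain more than an $o(m)$ fraction of the total variance. The argument would use the cluster-indicator reformulation from Section~\ref{subsec:k_means_constrained} together with Gaussian $\chi^2$ concentration.

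First, I would compute the 1-cluster cost. The optimal 1-center is the empirical centroid $\bar{\bv g}$, giving cost $\|\bv G\|_F^2 - n'\|\bar{\bv g}\|^2$. Since $\bv G$ has $mn'$ iid $\mathcal{N}(0,1/n')$ entries, $\|\bv G\|_F^2$ is a scaled $\chi^2_{mn'}$ random variable concentrating tightly around its mean $m$, while $n'\|\bar{\bv g}\|^2$ has mean $m/n'$ and is negligible. Thus the 1-cluster cost would equal $m(1-o(1))$ with high probability.

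For the $k$-cluster lower bound I would use the formulation from Section~\ref{subsec:k_means_constrained}: the optimal cost equals $\|\bv G\|_F^2 - \max_{\bv X_C}\|\bv X_C^\top \bv G\|_F^2$, where $\bv X_C \in \mathbb{R}^{n'\times k}$ ranges over cluster indicator matrices with orthonormal columns. For a \emph{fixed} such $\bv X_C$, orthonormality of its columns makes $\bv X_C^\top \bv G$ a $k \times m$ matrix whose entries are iid $\mathcal{N}(0, 1/n')$, so $\|\bv X_C^\top \bv G\|_F^2$ is a scaled $\chi^2_{km}$ concentrating around $km/n'$. I would then union-bound over the at most $k^{n'}$ possible cluster indicator matrices using the Laurent-Massart $\chi^2$ tail inequality with deviation parameter $x = n'\log k + \log(1/\delta)$, yielding $\max_{\bv X_C} \|\bv X_C^\top \bv G\|_F^2 \leq km/n' + 2\sqrt{km\log k/n'} + 2\log k$ with probability at least $1 - \delta$.

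Substituting $m = k/\epsilon$, the three terms are each $o(m)$ provided (i) $n' \gg k$ (so $km/n'$ is small), (ii) $n' \gg \epsilon \log k$ (so the square-root term is small), and (iii) $k \gg \epsilon \log k$ (so the $\log k$ term is small relative to $k/\epsilon$). For $k$ large enough depending only on $\epsilon$, condition (iii) is automatic and the other two reduce to $n' \gg k$. Combining with the 1-cluster calculation, the ratio of $k$-cluster to 1-cluster cost would then be $1 - o(1)$, and for any $\lambda < 1$, choosing $k$ and $n'$ larger than explicit polynomials in $1/\epsilon$ and $1/(1-\lambda)$ makes this ratio exceed $\lambda$ with high probability. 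The main obstacle is the union bound over the $\sim k^{n'}$ partitions; the key structural observation that saves it is that $\bv X_C^\top \bv G$ has only $km$ independent Gaussian entries, \emph{independent of $n'$}, so the per-partition $\chi^2$ tail rate is fast enough (in both the sub-Gaussian and linear regimes) to beat the $e^{n'\log k}$ partition count once $n'$ and $k$ are chosen appropriately.
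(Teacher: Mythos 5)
Your argument is correct, and it takes a genuinely different route from the paper. The paper works with the gain formula $\sum_i |C_i|\norm{\bs\mu(C_i)}^2$ and bounds each centroid norm \emph{directionally}: it constructs a $1$-net $\mathcal{N}$ of the sphere $S^{m-1}$, and for each $\bv v\in\mathcal{N}$ proves a dyadic tail bound on the sum of the top $|C_i|$ order statistics of the $n'$ iid Gaussian projections $\inprod{\bv g_j, \bv v}$, then concludes via the concavity bound $\sum_i f_i\log(1/f_i)\le\log k$. You instead invoke the cluster-indicator reformulation $\max_{\bv X_C}\norm{\bv X_C^\top\bv G}_F^2$ (which the paper sets up in Section~\ref{subsec:k_means_constrained} but does not actually use in this proof), and the key structural observation that for a \emph{fixed} partition the $k\times m$ matrix $\bv X_C^\top\bv G$ has iid $\mathcal{N}(0,1/n')$ entries --- because the columns of $\bv X_C$ are orthonormal with disjoint supports --- so $\norm{\bv X_C^\top\bv G}_F^2\sim\tfrac{1}{n'}\chi^2_{km}$ with degrees of freedom independent of $n'$. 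A Laurent--Massart tail with deviation parameter $x\approx n'\log k$ then survives the union bound over the at most $k^{n'}$ partitions, and after rescaling by $1/n'$ the dominant surviving term is $2\log k$, matching the paper's $O(\log k)$ gain bound; both arguments conclude by comparing this against the $1$-cluster cost $\approx m = k/\epsilon$ and taking $k$ large. What your version buys is a shorter, more self-contained concentration argument --- one $\chi^2$ tail plus a counting union bound, no net, no order-statistics lemma --- and it more transparently exploits the linear-algebraic machinery the paper already built. The paper's version, in exchange, yields per-cluster estimates $\norm{\bs\mu(C_i)}\lesssim\sqrt{\log(1/f_i)/n'}$ and so gives slightly more structural information about which partitions can come close to realizing the gain.
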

The idea of Lemma \ref{unclusterable_cloud} is simple: a cloud of random Gaussians is very naturally clustered with one cluster. However its proof is somewhat involved, so we first use Lemmas \ref{gaussian_singular} and \ref{unclusterable_cloud} to prove Theorem \ref{lb_theorem} and then prove Lemma \ref{unclusterable_cloud} at the end of the section.

\begin{proof}[Proof of Theorem \ref{lb_theorem}]

First, we analyze the clustering projection matrix
\[\bv P=\begin{pmatrix}\bv{I}_{k-1}&0\\0&\frac1{n'}\bv J_{n'}\end{pmatrix},\]
where $\bv J_{n'}$ is the all-ones matrix. This puts the whole Gaussian cloud in one cluster and the vertices of the simplex in their own clusters. We have
\[\norm{\bv A-\bv P\bv A}_F^2=\norm{\bv G-\frac1{n'}\bv J_{n'}\bv G}_F^2\le\norm{\bv G}_F^2,\]
since this corresponds to placing the center for the cloud at the origin rather than the centroid of the cloud, which incurs a higher cost. Now as a sum of squared Gaussian random variables, $\norm{\bv G}_F^2\sim\frac1{n'}\chi^2_{mn'}$, which is tightly concentrated around its mean, $m$. Therefore, $\norm{\bv A-\bv P^*\bv A}_F^2\le\norm{\bv A-\bv P\bv A}_F^2\approx m$ with high probability.

On the other hand, in $\bv A_m$, the simplex collapses to $k-1$ repeated points at the origin, so the optimal clustering corresponding to the projection $\bv{\tilde P}$ will cluster these $k-1$ points into the same cluster. We will argue that this cluster incurs a cost of almost $k=\epsilon m$.

Let us examine the first $k-1$ coordinates of this cluster centroid, corresponding to the dimensions of the simplex. Since there are at least $k-1$ points in the cluster, one of which is $\lambda'$ and the rest of which are zero, the centroid will have a coordinate of at most $\lambda'/(k-1)$ in each of these first $k-1$ coordinates. Therefore, the total squared distance of the first $k-1$ points to their cluster center will be at least
\[\lambda'^2(k-1)\left((k-2)\left(\frac1{k-1}\right)^2+\left(1-\frac1{k-1}\right)^2\right)=\frac{\lambda'^2}{k-1}(k-2+(k-2)^2)=(k-2)\lambda'^2.\]
This is about $k$, which is what we want. To technically prove the necessary claim, take $k$ large enough so that $\frac{k-2}k\ge\sqrt\lambda$, and then $n'$ large enough so that $\lambda'=1-2\sqrt{\frac k{n'\epsilon}}\ge\lambda^{1/4}$, so this cluster contributes a cost of $(k-2)\lambda'^2\ge\sqrt\lambda k\sqrt\lambda=\lambda\epsilon m$.

Now the remaining $n'$ points are clustered in $k$ clusters (possibly including the cluster used for the simplex) in $\bv{\tilde P}$ rather than one cluster as in $\bv P$ above. Then Lemma \ref{unclusterable_cloud} claims that the cloud's contribution to the cost is at least $\lambda\norm{\bv A-\bv P^*\bv A}_F^2$.

In all, therefore, we have
\[\frac{\norm{\bv A-\bv{\tilde P}\bv A}_F^2}{\norm{\bv A-\bv P^*\bv A}_F^2}\ge\lambda\epsilon+\lambda=\lambda(1+\epsilon),\]
as desired.\end{proof}

\begin{proof}[Proof of Lemma \ref{unclusterable_cloud}]One option to clustering $\bv G$ is to put just one center at the origin. We will compare this clustering to any $k$-means clustering by looking at the cost accrued to the points in each cluster. Of course, the optimal single cluster center will be the centroid of the points, but this will only perform better than the origin.
\begin{claim}For a given partition $\{\bv x_1,\dotsc,\bv x_l\}$ with $\bs \mu=\frac1l\sum_i\bv x_i$,
\[\sum_{i=1}^l\norm{\bv x_i-\bs\mu}^2=\sum_{i=1}^l\norm{\bv x_i}^2-l\norm{\bs\mu}^2.\]\end{claim}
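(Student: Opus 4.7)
The plan is to prove this identity by direct expansion of the squared norm, leveraging bilinearity of the inner product and the defining property of the centroid $\bs\mu$. Since the statement is a single-line algebraic identity rather than a structural theorem, the entire proof is essentially one computation; the only decision is in what order to apply the two basic facts.

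First, I would expand $\norm{\bv x_i - \bs\mu}^2 = \norm{\bv x_i}^2 - 2\inprod{\bv x_i, \bs\mu} + \norm{\bs\mu}^2$ for each $i$ using the identity $\norm{\bv a - \bv b}^2 = \norm{\bv a}^2 - 2\inprod{\bv a,\bv b} + \norm{\bv b}^2$. Summing over $i$ from $1$ to $l$ and pulling $\bs\mu$ out of the inner product by linearity yields $\sum_i \norm{\bv x_i - \bs\mu}^2 = \sum_i \norm{\bv x_i}^2 - 2\inprod{\sum_i \bv x_i,\, \bs\mu} + l\norm{\bs\mu}^2$.

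Second, I would substitute the definition $\bs\mu = \frac{1}{l}\sum_i \bv x_i$, which is equivalent to $\sum_i \bv x_i = l\bs\mu$. Plugging this into the middle term gives $-2\inprod{l\bs\mu,\bs\mu} = -2l\norm{\bs\mu}^2$, and combining with the $+l\norm{\bs\mu}^2$ term leaves $-l\norm{\bs\mu}^2$. This produces exactly $\sum_i \norm{\bv x_i}^2 - l\norm{\bs\mu}^2$, as claimed.

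There is no genuine obstacle here: the argument is a two-line expansion and nothing in it depends on the ambient space beyond having an inner product and the usual norm it induces. If anything warrants emphasis, it is simply to make explicit the step where $\sum_i \bv x_i = l\bs\mu$ is used, since this is the only place the definition of the centroid enters and it is what produces the characteristic minus sign in the final expression.
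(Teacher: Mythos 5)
Your proof is correct and is essentially the same argument as the paper's: a direct expansion of the square followed by the substitution $\sum_i \bv x_i = l\bs\mu$. The only cosmetic difference is that the paper carries out the expansion coordinate-by-coordinate and then sums over dimensions, whereas you work directly with the inner product; the content is identical.
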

\begin{proof}In each dimension,
\[\sum_{i=1}^l(\bv x_{i,j}-\bs\mu_j)^2=\sum_{i=1}^l\bv x_{i,j}^2-2\bs\mu_j\sum_{i=1}^l\bv x_{i,j}+l\bs\mu_j^2=\sum_{i=1}^l\bv x_{i,j}^2-2l\bs\mu_j^2+l\bs\mu_j^2=\sum_{i=1}^l\bv x_{i,j}^2-l\bs\mu_j^2.\]
Summing this over all $j$ yields the claim.
\end{proof}
Notice that the first term on the right side is the clustering cost of a single center at the origin. Therefore, the gains in the objective function from moving from clustering at the origin to $k$-means clustering with clusters $\{C_1,\dotsc,C_k \}$ are exactly $\sum_i\abs{C_i}\norm{\bs\mu(C_i)}^2$, where $\bs\mu(C_i)$ is the centroid of $C_i$. We must show that with high probability, these gains are only a $1-\lambda\ll1$ fraction of the original clustering cost.

To do so, we will argue that with high probability, no cluster will achieve large gains by concentrating in any direction. Technically, our directions will be given by a 1-net on the sphere. But first, we prove the statement for any given direction:
\begin{claim}Let $x_1,\dotsc,x_n\sim N(0,1)$ independent. Reorder the $x_i$ such that $x_1>x_2>\dotsb>x_n$. Then with high probability, $\sum_{i=1}^lx_i<10l\sqrt{\log(n/l)}$.\end{claim}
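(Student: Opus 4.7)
The plan is to reduce the claim to a Gaussian tail bound plus a union bound over size-$l$ subsets. The key observation is that, for any ordering,
\[
\sum_{i=1}^{l} x_i \;=\; \max_{\substack{S \subseteq [n] \\ |S|=l}} \sum_{i \in S} x_i,
\]
since putting the $l$ largest values into $S$ is what maximizes the right-hand side. So it suffices to control this maximum.

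For any fixed $S$ of size $l$, the sum $Y_S := \sum_{i \in S} x_i$ is a Gaussian with mean $0$ and variance $l$, so the standard tail bound gives $\Pr[Y_S > t] \le \exp(-t^2/(2l))$. I will set $t = 10 l \sqrt{\log(n/l)}$, which makes this tail probability at most $\exp(-50 l \log(n/l)) = (l/n)^{50 l}$. Then I apply a union bound over all $\binom{n}{l}$ choices of $S$, using the standard estimate $\binom{n}{l} \le (en/l)^l$:
\[
\Pr\!\left[\sum_{i=1}^l x_i > 10 l \sqrt{\log(n/l)}\right] \;\le\; \binom{n}{l}\,(l/n)^{50l} \;\le\; e^{l}\,(l/n)^{49l}.
\]
Provided $n/l$ is bounded away from $1$ (which is the regime in which the claim is meaningful, and which will hold in its intended use as part of the net argument on the sphere in Lemma~\ref{unclusterable_cloud}), this bound decays as $\exp(-\Omega(l \log(n/l)))$, yielding the desired high-probability statement.

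The main step to check carefully is the constant: with $t = 10 l \sqrt{\log(n/l)}$ the Gaussian tail beats the combinatorial factor $(en/l)^l$ by a huge margin, so the generous constant $10$ in the statement gives substantial slack. The only obstacle worth flagging is the degenerate case $l$ close to $n$, where $\log(n/l)$ is small and the stated bound becomes vacuous or false; this will not arise in the application, where the lemma is invoked with $l$ at most a constant fraction of $n$, so $\log(n/l) = \Omega(1)$ and the union bound closes comfortably. No other ingredient is needed beyond the scalar Gaussian tail bound and the elementary estimate for $\binom{n}{l}$.
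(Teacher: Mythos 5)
Your proof is correct and takes a genuinely different route. The paper argues by a layer-cake decomposition: it fixes dyadic thresholds $z_k = 2^k\sqrt{\log(n/l)}$, uses concentration of the binomial count $|\{i : x_i > z_k\}|$ around its mean $\le n e^{-z_k^2}$ to bound how many reordered $x_i$ land in each band $[z_{k-1},z_k]$, and then sums the resulting geometric series of per-band contributions. Your argument instead observes that the reordered partial sum equals $\max_{|S|=l}\sum_{i\in S}x_i$, applies the scalar Gaussian tail $\Pr[N(0,l)>t]\le e^{-t^2/(2l)}$ to each of the $\binom{n}{l}\le(en/l)^l$ fixed subsets, and closes with a union bound; this is more compact and uses fewer moving parts, and both routes land on a failure probability of $\exp(-\Omega(l\log(n/l)))$ with generous slack in the constant $10$, as you observe. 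The caveat you flag --- that the bound degenerates as $l\to n$ and is in fact false at $l=n$, where it would demand $\sum_i x_i < 0$ --- is real, and the paper has to confront it too: it does so in a separate paragraph by a reflection argument, reducing $l>n/2$ to $n-l<n/2$ via $\sum_{i\le l}x_i = \sum_i x_i - \sum_{i>l}x_i$ and concentration of the total sum. If you want your argument to be a drop-in replacement you would need to reproduce that reflection step, since cluster sizes in the application are not a priori bounded by $n/2$; but for $l\le n/2$ your proof is self-contained and correct.
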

\begin{proof}The Gaussian distribution falls off superexponentially, so if $\xi\sim N(0,1)$, $\Pr(\xi>z)\le\exp(-z^2)$. Before reordering, each of the events $x_1>z,\dotsc,x_n>z$ are Bernoulli independent random variables, so their count is tightly concentrated around the mean, making $\abs{\{i:x_i>z\}}\le2n\exp(-z^2)$ with high probability.

Let $z$ take on values $z_k=2^k\sqrt{\log(n/l)}$, where $k=1,2,\dotsc,n$, so with high probability, 
\[\abs{\{i:z_{k-1}\le x_i\le z_k\}}\le2n\exp(-z_k^2)=\frac{2n}{(n/l)^{2^k}}\le\frac{2n}{(n/l)^{2k}}=\frac{2l}{(n/l)^{2k-1}}.\]
By a union bound, all of these bounds hold with high probability.

First suppose that $l\le n/2$. Then we can bound $x_1+\dotsb+x_l$ (reordered) by the contributions of the terms at most $z_k$. First, with very high probability there are no terms greater than $z_n>2^n\sqrt{\log2}$. Then for $z=1,2,\dotsc,k-1$,
\[\sum_{z_k\le x_i\le z_{k+1}}x_i\le\frac{(2l)(2^{k+1}\sqrt{\log(n/l)})}{2^{2k-1}}=\frac{8l\sqrt{\log(n/l)}}{2^k}.\]
Summing this geometric series yields a total sum of less than $8l\sqrt{\log(n/l)}$. Finally, the remaining at most $l$ terms are all less than $z_1=2\sqrt{\log(n/l)}$, so they contribute at most $2l\sqrt{\log(n/l)}$, totaling at most $10l\sqrt{\log(n/l)}$, as desired.

For $l>n/2$, it is easy to check that the bound for $n-l$ is stronger, i.e. $(n-l)\sqrt{\log(n/(n-l))}\le l\sqrt{\log(n/l)}$ for $n/2<l<n$. The total $x_1+\dotsb+x_n$ is tightly concentrated around 0 (as each $x_i$ has zero mean), so with high probability, $x_1+\dotsb+x_l\approx-x_{l+1}-\dotsb-x_n$. Since $-x_i$ has the same distribution, we can apply the result for $n-l$ to those numbers (the $n-l$ highest among the $-x_i$), and with high probability it will carry over to the desired result for $l$.\end{proof}

Now notice that if $\bv v\in S^{m-1}$ is a unit vector in some direction, its inner products with the rows of $\bv G$ are iid $N(0,1/n')$. Therefore, if $\abs {C_i}=f_in'$, i.e. if an $f_i$-fraction of the Gaussians are in the $i$th cluster, this claim shows that $\inprod{\bs\mu(C_i),\bv v}\le10\sqrt{\log(1/f_i)/n'}$ with high probability.

We now take $\bv v$ to range over a $1$-net $\mathcal N$ of $S^{m-1}$. This will have size exponential in $m$ by a simple volume argument, so we just take $n'$ to be large enough that with high probability by a union bound, $\inprod{\bs\mu(C_i),\bv v}\le10\sqrt{\log(1/f_i)/n'}$ for all $\bv v\in\mathcal N$. Now $\hat{\bs\mu}(C_i)=\bs\mu(C_i)/\norm{\bs\mu(C_i)}\in S^{m-1}$ so there exists some $\bv v\in\mathcal N$ with $\norm{\bv v-\hat{\bs\mu}(C_i)}\le1$, and expanding, $\inprod{\bv v,\hat{\bs\mu}(C_i)}\ge\frac12$. Therefore,
\[\norm{\bs\mu(C_i)}\le2\inprod{\bv v,\bs\mu(C_i)}\le20\sqrt{\log(1/f_i)/n'}.\]
Hence, with high probability, the total gain in the objective function is
\[\sum_i\abs{C_i}\norm{\bs\mu(C_i)}^2\le\sum_if_in'(400\log(1/f_i)/n')=400\sum_if_i\log(1/f_i).\]

Since $h(x) = x\log(1/x)$ is a concave function on $(0,1)$, this sum is maximized over $\sum_if_i=1$ when $f_i=1/k$ for all $i$, at which point it is equal to $400\log(k)$. Recall that the original cost is around $m=k/\epsilon$. Simply take $k$ large enough that $\frac{400\log(k)}{k/\epsilon}\le1-\lambda$, and this proves the claim.\end{proof}
\section{Approximate SVD and General Low Rank Approximation}
\label{approx_svd_proofs}
In this section we provide proofs for Theorems \ref{approx_svd} and \ref{general_svd} (stated in Section \ref{svds}). These results extend our analysis of SVD sketches from Theorem \ref{exact_svd} to the case when only an approximate SVD or general low rank approximation are available for $\bv{A}$.

\begin{reptheorem}{approx_svd}
Let $m = \lceil k/\epsilon \rceil$. For any $\bv{A} \in \mathbb{R}^{n \times d}$ and any orthonormal matrix $\bv{Z} \in \mathbb{R}^{d \times m}$ satisfying $\norm{\bv{A}-\bv{A}\bv{Z}\bv{Z}^\top}_F^2 \le (1+\epsilon') \norm{\bv{A}_{r \setminus m}}_F^2$, the sketch $\bv{\tilde A} = \bv{A}\bv{Z}\bv{Z}^\top$ satisfies the conditions of Definition \ref{def:1sidedsketch}. Specifically, for all rank $k$ orthogonal projections $\bv{P}$,
\begin{align*}
\norm{\bv{A} - \bv{P}\bv{A}}_F^2 \le \norm{\bv{\tilde A-P\tilde A}}_F^2 + c \le (1+\epsilon+\epsilon')  \norm{\bv{A} - \bv{P}\bv{A}}_F^2.
\end{align*}
\end{reptheorem}
\begin{proof}
As in the exact SVD case, since $\bv{Z}\bv{Z}^\top$ is an orthogonal projection,
\begin{align*}
\bv{\tilde C} = \bv{\tilde A} \bv{\tilde A}^\top = (\bv{A} - (\bv{A} - \bv{A}\bv{Z}\bv{Z}^\top))(\bv{A} - (\bv{A} - \bv{A}\bv{Z}\bv{Z}^\top))^\top = \bv{ A} \bv{ A}^\top - (\bv{A} - \bv{A}\bv{Z}\bv{Z}^\top)(\bv{A} - \bv{A}\bv{Z}\bv{Z}^\top)^\top.
\end{align*}
We set $\bv{E} = - (\bv{A} - \bv{A}\bv{Z}\bv{Z}^\top)(\bv{A} - \bv{A}\bv{Z}\bv{Z}^\top)^\top$. $\bv{\tilde C} = \bv{C} + \bv{E}$, $\bv{E}$ is symmetric, and $\bv{E} \preceq \bv{0}$. Finally, 
\begin{align*}
\sum_{i=1}^k |\lambda_i(\bv{E})| = \sum_{i=1}^k \sigma_i^2(\bv{A} - \bv{A}\bv{Z}\bv{Z}^\top) = \norm{(\bv{A} - \bv{A}\bv{Z}\bv{Z}^\top)_k}_F^2.
\end{align*}

Observe that, since $(\bv{A} - \bv{A}\bv{Z}\bv{Z}^\top)_k$ is rank $k$, $\bv{A}\bv{Z}\bv{Z}^\top + (\bv{A} - \bv{A}\bv{Z}\bv{Z}^\top)_k$ has rank at most $m + k$. Thus, by optimality of the SVD in low rank approximation, it must be that:
\begin{align*}
\norm{\bv{A} - \left(\bv{A}\bv{Z}\bv{Z}^\top + (\bv{A} - \bv{A}\bv{Z}\bv{Z}^\top)_k\right)}_F^2 \geq \norm{\bv{A}_{r \setminus(m+k)}}_F^2.
\end{align*}
Regrouping and applying Pythagorean theorem gives:
\begin{align*}
\norm{\bv{A} - \bv{A}\bv{Z}\bv{Z}^\top}_F^2 - \norm{(\bv{A} - \bv{A}\bv{Z}\bv{Z}^\top)_k}_F^2 \geq \norm{\bv{A}_{r \setminus (m+k)}}_F^2.
\end{align*}
Then, reordering and applying the approximate SVD requirement for $\bv{A}\bv{Z}\bv{Z}^\top$ gives
\begin{align*}
\norm{(\bv{A} - \bv{A}\bv{Z}\bv{Z}^\top)_k}_F^2 &\leq (1+\epsilon')\norm{\bv{A}_{r \setminus m}}_F^2 - \norm{\bv{A}_{r \setminus(m+k)}}_F^2 \\
&\le \epsilon'\norm{\bv{A}_{r \setminus m}}_F^2 + \sum_{i=m+1}^{m+k} \sigma_i^2(\bv{A})\\
&\le (\epsilon+\epsilon')\norm{\bv{A}_{r \setminus k}}_F^2.
\end{align*}
The last inequality follows from Equation \eqref{svd_head_bound} and the fact that 
$\norm{\bv{A}_{r \setminus k}}_F^2 \ge \norm{\bv{A}_{r \setminus m}}_F^2$. 
So, we conclude that $\sum_{i=1}^k |\lambda_i(\bv{E})| \leq (\epsilon+\epsilon')\norm{\bv{A}_{r \setminus k}}_F^2$ 
and the theorem follows from applying Lemma \ref{error_improved}.
 \end{proof}

\begin{reptheorem}{general_svd}
Let $m = \lceil k/\epsilon \rceil$. For any $\bv{A} \in \mathbb{R}^{n \times d}$ and any $\bv{\tilde A} \in \mathbb{R}^{n \times d}$ with $rank(\bv{\tilde A}) = m$ satisfying $\norm{\bv{A}-\bv{\tilde A}}_F^2 \le (1+(\epsilon')^2) \norm{\bv{A}_{r \setminus m}}_F^2$, the sketch $\bv{\tilde A}$ satisfies the conditions of Definition \ref{def:2sidedsketch}. Specifically, for all rank $k$ orthogonal projections $\bv{P}$,
\begin{align*}
 (1-2\epsilon') \norm{\bv{A} - \bv{P}\bv{A}}_F^2 \le \norm{\bv{\tilde A-P\tilde A}}_F^2 + c \le (1+2\epsilon+5\epsilon')  \norm{\bv{A} - \bv{P}\bv{A}}_F^2.
\end{align*}
\end{reptheorem}

\begin{proof}
We write $\bv{\tilde A}$ as the sum of a projection and a remainder matrix: $\bv{\tilde A} = \bv{A} \bv{Z} \bv{Z}^\top + \bv{E}$ where $\bv{Z} \in \mathbb{R}^{d \times m}$ is an orthonormal basis for row span of $\bv{\tilde A}$. 
 By the Pythagorean theorem, 
\begin{align*}
\norm{\bv{A}-\bv{\tilde A}}_F^2 &= \norm{\bv{A}-\bv{A} \bv{Z} \bv{Z}^\top}_F^2 + \norm{\bv{E}}_F^2,
\end{align*}
since the rows of $\bv{A}-\bv{A} \bv{Z} \bv{Z}^\top$ are orthogonal to the row span of $\bv{\tilde A}$ and the rows of $\bv{E}$ lie in this span. Since the SVD is optimal for low rank approximation, $\norm{\bv{A}-\bv{A} \bv{Z} \bv{Z}^\top}_F^2 \geq \norm{\bv{A}_{r \setminus m}}_F^2$. Furthermore, by our low rank approximation condition on $\bv{\tilde A}$, $\norm{\bv{A}-\bv{A} \bv{Z} \bv{Z}^\top}_F^2 \le (1+(\epsilon')^2) \norm{\bv{A}_{r \setminus m}}_F^2$. Thus:
\begin{align}
\label{r_is_small}
\norm{\bv{E}}_F^2 \le (\epsilon')^2 \norm{\bv{A}_{r \setminus m}}_F^2.
\end{align}
Also note that, by Theorem \ref{approx_svd},
\begin{align}
\label{projection_is_good}
\norm{(\bv{I-P})\bv{A}}_F^2 \le \norm{(\bv{I}- \bv{P})\bv{A} \bv{Z} \bv{Z}^\top}_F^2 +c \le (1+\epsilon + (\epsilon')^2)\norm{(\bv{I-P})\bv{A}}_F^2.
\end{align}
Using these facts, we prove Theorem \ref{general_svd}, by starting with the triangle inequality:
\begin{align*}
 \norm{(\bv{I}- \bv{P})\bv{A} \bv{Z} \bv{Z}^\top}_F - \norm{(\bv{I}- \bv{P})\bv{E}}_F \le \norm{\bv{\tilde A-P\tilde A}}_F \le \norm{(\bv{I}- \bv{P})\bv{A} \bv{Z} \bv{Z}^\top}_F + \norm{(\bv{I}- \bv{P})\bv{E}}_F.
\end{align*}
Noting that, since $\bv{I} - \bv{P}$ is a projection it can only decrease Frobenius norm, we substitute in \eqref{r_is_small}: 
\begin{align*}
\norm{\bv{\tilde A-P\tilde A}}_F ^2 &\le  \norm{(\bv{I}- \bv{P})\bv{A} \bv{Z} \bv{Z}^\top}_F^2 + \norm{\bv{E}}_F^2 + 2\norm{(\bv{I-P})\bv{A} \bv{Z} \bv{Z}^\top}_F\norm{\bv{E}}_F\\
&\le  \norm{(\bv{I}- \bv{P})\bv{A} \bv{Z} \bv{Z}^\top}_F^2 + \epsilon'^2 \norm{\bv{A}_{r \setminus m}}_F^2 + 2\epsilon' \norm{(\bv{I-P})\bv{A} \bv{Z} \bv{Z}^\top}_F\norm{\bv{A}_{r \setminus m}}_F\\
&\le (1+\epsilon')\norm{(\bv{I}- \bv{P})\bv{A} \bv{Z} \bv{Z}^\top}_F^2 + (\epsilon' + (\epsilon')^2) \norm{\bv{A}_{r \setminus m}}_F^2,
\end{align*}
where the last step follows from the AM-GM inequality. Then, using \eqref{projection_is_good} and again that $\norm{\bv{A}_{r \setminus m}}_F^2$ upper bounds $\norm{(\bv{I-P})\bv{A}}_F^2$, it follows that:
\begin{align}
\norm{\bv{\tilde A-P\tilde A}}_F ^2 &\le (1+\epsilon')(1+\epsilon+(\epsilon')^2)\norm{(\bv{I-P})\bv{A}}_F^2 - (1+\epsilon')c + (\epsilon' + (\epsilon')^2) \norm{\bv{A}_{r \setminus m}}_F^2\nonumber \\
&\le (1+\epsilon+2\epsilon' + 2(\epsilon')^2+(\epsilon')^3+\epsilon \epsilon')\norm{(\bv{I-P})\bv{A}}_F^2 - c'\nonumber \\
&\le (1+2\epsilon+5\epsilon')\norm{(\bv{I-P})\bv{A}}_F^2 -c'\label{general_upper},
\end{align}
where $c' = (1+\epsilon')c$. Our lower on $\norm{\bv{\tilde A-P\tilde A}}_F ^2$ follows similarly:
\begin{align}
\norm{\bv{\tilde A-P\tilde A}}_F ^2 &\ge  \norm{(\bv{I}- \bv{P})\bv{A} \bv{Z} \bv{Z}^\top}_F^2 - 2\norm{(\bv{I-P})\bv{A} \bv{Z} \bv{Z}^\top}_F\norm{\bv{E}}_F + \norm{\bv{E}}_F^2 \nonumber \\
&\ge  \norm{(\bv{I}- \bv{P})\bv{A} \bv{Z} \bv{Z}^\top}_F^2 - 2\epsilon'\norm{(\bv{I-P})\bv{A} \bv{Z} \bv{Z}^\top}_F\norm{\bv{A}_{r \setminus m}}_F \nonumber \\
&\ge (1-\epsilon')\norm{(\bv{I}- \bv{P})\bv{A} \bv{Z} \bv{Z}^\top}_F^2 - \epsilon'\norm{\bv{A}_{r \setminus m}}_F^2\nonumber \\
&\ge (1-\epsilon')\norm{(\bv{I-P})\bv{A}}_F^2 - (1-\epsilon') c -\epsilon'\norm{\bv{A}_{r \setminus m}}_F^2\nonumber \\
\label{general_lower}
&\ge (1-2\epsilon')\norm{(\bv{I-P})\bv{A}}_F^2 - c'.
\end{align}
The last step follows because $c' = (1+\epsilon')c \ge (1-\epsilon') c$. Combining \ref{general_upper} and \ref{general_lower} gives the result.
\end{proof}

While detailed, the analysis of Theorem \ref{general_svd} is conceptually simple -- the result relies on the small Frobenius norm of $\bv{E}$ and the triangle inequality. Alternatively, we could have computed
\begin{align*}
\bv{\tilde C} &= ( \bv{A}\bv{Z}\bv{Z}^\top+ \bv{E})(\bv{A}\bv{Z}\bv{Z}^\top+\bv{E})^\top \\
&= \bv{ A} \bv{ A}^\top - (\bv{A} - \bv{A}\bv{Z}\bv{Z}^\top)(\bv{A} - \bv{A}\bv{Z}\bv{Z}^\top)^\top + \bv{E}(\bv{A}\bv{Z}\bv{Z}^\top)^\top + (\bv{A}\bv{Z}\bv{Z}^\top)\bv{E}^\top + \bv{E}\bv{E}^\top,
\end{align*}
and analyzed it using Lemma \ref{error} directly, setting $\bv{E}_2 = -(\bv{A} - \bv{A}\bv{Z}\bv{Z}^\top)(\bv{A} - \bv{A}\bv{Z}\bv{Z}^\top)^\top + \bv{E}\bv{E}^\top$, $\bv{E}_3 = \bv{E}(\bv{A}\bv{Z}\bv{Z}^\top)^\top$, and $\bv{E}_4 = (\bv{A}\bv{Z}\bv{Z}^\top)\bv{E}^\top$. 

\section{Spectral Norm Projection-Cost Preserving Sketches}\label{spectral_version}

In this section we extend our results on sketches that preserve the Frobenius norm projection-cost, $\norm{\bv{A-PA}}_F^2$, to sketches that preserve the spectral norm cost, $\norm{\bv{A-PA}}_2^2$. Our main motivation is to prove the non-oblivious projection results of Section \ref{squish}, however spectral norm guarantees may be useful for other applications. We first give a spectral norm version of Lemma \ref{error}:

\begin{lemma}\label{error_spectral} 
For any $\bv{A} \in \mathbb{R}^{n \times d}$ and sketch $\bv{\tilde A}\in \mathbb{R}^{n \times m}$, let $\bv{C} = \bv{AA}^\top$ and $\bv{\tilde C} = \bv{\tilde A \tilde A}^\top$. If we can write $\bv{\tilde C} = \bv{C} + \bv{E}_1 + \bv{E}_2 + \bv{E}_3 + \bv{E}_4$ where
\begin{enumerate}
\item $\bv{E}_1$ is symmetric and $-\epsilon_1 \bv{C} \preceq \bv{E}_1 \preceq \epsilon_1\bv{C}$
\item $\bv{E}_2$ is symmetric, $\norm{\bv{E}_2}_2 \le \frac{\epsilon_2}{k} \norm{\bv{A}_{r \setminus k}}^2_F$
\item The columns of $\bv{E}_3$ fall in the column span of $\bv{C}$ and $\norm{\bv{E}_3^\top \bv{C}^{+} \bv{E}_3}_2 \le \frac{\epsilon_3^2}{k} \norm{\bv{A}_{r  \setminus k}}_F^2$
\item The rows of $\bv{E}_4$ fall in the row span of $\bv{C}$ and $\norm{\bv{E}_4 \bv{C}^{+} \bv{E}_4^\top}_2 \le \frac{\epsilon_4^2}{k} \norm{\bv{A}_{r \setminus k}}_F^2$
\end{enumerate}
then for any rank k orthogonal projection $\bv{P}$ and $\epsilon \geq \epsilon_1 + \epsilon_2 + \epsilon_3+ \epsilon_4$:
\begin{align*}
(1-\epsilon) \norm{\bv{ A} - \bv{P A}}_2^2 - \frac{\epsilon}{k}\norm{\bv{ A} - \bv{P A}}_F^2 \le
\norm{\bv{\tilde A} - \bv{P \tilde A}}_2^2 + c \le (1+\epsilon) \norm{\bv{ A} - \bv{P A}}_2^2 + \frac{\epsilon}{k}\norm{\bv{ A} - \bv{P A}}_F^2.
\end{align*}
\end{lemma}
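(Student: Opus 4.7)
The plan is to mirror the proof of Lemma \ref{error}, replacing every trace / Frobenius estimate by a spectral-norm estimate. Write $\bv{Y} = \bv{I} - \bv{P}$, so that $\norm{\bv{\tilde A} - \bv{P\tilde A}}_2^2 = \norm{\bv{Y\tilde C Y}}_2$ and $\norm{\bv{A} - \bv{P A}}_2^2 = \norm{\bv{YCY}}_2$ (using $\norm{\bv{M}}_2^2 = \norm{\bv{MM^\top}}_2$ and $\bv{Y}^2 = \bv{Y}$). Since $\bv{\tilde C} - \bv{C} = \bv{E}_1 + \bv{E}_2 + \bv{E}_3 + \bv{E}_4$, the triangle inequality for the spectral norm gives
\begin{align*}
\bigl|\,\norm{\bv{Y\tilde C Y}}_2 - \norm{\bv{YCY}}_2\,\bigr| \;\le\; \sum_{i=1}^{4}\norm{\bv{Y E_i Y}}_2,
\end{align*}
so it suffices to bound each $\norm{\bv{Y E_i Y}}_2$ by its allotted share of $\epsilon\norm{\bv{A-PA}}_2^2 + (\epsilon/k)\norm{\bv{A-PA}}_F^2$; the constant $c$ in the lemma statement can then simply be taken to be $0$.

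For $\bv{E}_1$, the spectral condition $-\epsilon_1\bv{C}\preceq \bv{E}_1\preceq \epsilon_1\bv{C}$ is preserved under conjugation by $\bv{Y}$, so $\norm{\bv{YE_1Y}}_2 \le \epsilon_1\norm{\bv{YCY}}_2 = \epsilon_1\norm{\bv{A-PA}}_2^2$. For $\bv{E}_2$, submultiplicativity together with $\norm{\bv{Y}}_2 \le 1$ gives $\norm{\bv{YE_2Y}}_2 \le \norm{\bv{E}_2}_2 \le \tfrac{\epsilon_2}{k}\norm{\bv{A}_{r\setminus k}}_F^2 \le \tfrac{\epsilon_2}{k}\norm{\bv{A-PA}}_F^2$, where the last inequality uses that $\norm{\bv{A}_{r\setminus k}}_F^2$ is the universal lower bound on the Frobenius projection cost.

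The main obstacle is $\bv{E}_3$ (and $\bv{E}_4$ by symmetry), because the Cauchy--Schwarz step used for Frobenius norm in Lemma \ref{error} does not transfer to the spectral norm. Instead, using that the columns of $\bv{E}_3$ lie in the column span of $\bv{C}$, factor $\bv{E}_3 = \bv{C}^{1/2}(\bv{C}^+)^{1/2}\bv{E}_3$ (with both square roots defined on the image of $\bv{C}$) and apply submultiplicativity:
\begin{align*}
\norm{\bv{Y E_3 Y}}_2 \;\le\; \norm{\bv{Y}\bv{C}^{1/2}}_2\cdot\norm{(\bv{C}^+)^{1/2}\bv{E}_3\bv{Y}}_2.
\end{align*}
The first factor equals $\sqrt{\norm{\bv{YCY}}_2} = \norm{\bv{A-PA}}_2$. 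The second factor squared equals $\norm{\bv{Y}\bv{E}_3^\top\bv{C}^+\bv{E}_3\bv{Y}}_2 \le \norm{\bv{E}_3^\top\bv{C}^+\bv{E}_3}_2 \le \tfrac{\epsilon_3^2}{k}\norm{\bv{A}_{r\setminus k}}_F^2 \le \tfrac{\epsilon_3^2}{k}\norm{\bv{A-PA}}_F^2$. Then AM--GM yields
\begin{align*}
\norm{\bv{YE_3Y}}_2 \;\le\; \tfrac{\epsilon_3}{\sqrt{k}}\,\norm{\bv{A-PA}}_2\,\norm{\bv{A-PA}}_F \;\le\; \tfrac{\epsilon_3}{2}\norm{\bv{A-PA}}_2^2 + \tfrac{\epsilon_3}{2k}\norm{\bv{A-PA}}_F^2.
\end{align*}
The bound for $\bv{E}_4$ is identical after transposing the factorization. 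Summing the four estimates and invoking $\epsilon_1 + \epsilon_2 + \epsilon_3 + \epsilon_4 \le \epsilon$ produces exactly the two-sided bound stated in the lemma, with $c = 0$. The price paid by switching from Cauchy--Schwarz to submultiplicativity/AM--GM is the additive $(\epsilon/k)\norm{\bv{A-PA}}_F^2$ slack, which is precisely what the statement allows.
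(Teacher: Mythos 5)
Your proof is correct and follows essentially the same route as the paper: bound each $\norm{\bv{Y}\bv{E}_i\bv{Y}}_2$ separately, using the L\"{o}wner sandwich for $\bv{E}_1$, submultiplicativity for $\bv{E}_2$, and a Cauchy--Schwarz/AM--GM step for $\bv{E}_3,\bv{E}_4$ that trades a $\tfrac{1}{\sqrt{k}}$ factor for an additive $\tfrac{\epsilon}{k}\norm{\bv{A-PA}}_F^2$ slack (with $c=0$, matching the paper). The only cosmetic difference is that you realize the Cauchy--Schwarz step by factoring $\bv{E}_3 = \bv{C}^{1/2}(\bv{C}^+)^{1/2}\bv{E}_3$ and applying submultiplicativity, whereas the paper argues via the $\bv{C}^+$ semi-inner product on the bilinear form $\max_{\bv{a},\bv{b}}\bv{a}^\top(\bv{YC})\bv{C}^+\bv{E}_3\bv{b}$; these are equivalent and yield the identical intermediate bound $\sqrt{\norm{\bv{YCY}}_2\,\norm{\bv{E}_3^\top\bv{C}^+\bv{E}_3}_2}$.
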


\begin{proof}
Using the notation $\bv{Y} = \bv{I} - \bv{P}$ we have that $\norm{\bv{ A} - \bv{P  A}}_2^2 = \norm{\bv{ Y}\bv{C}\bv{ Y}}_2$ and $\norm{\bv{\tilde A} - \bv{P \tilde A}}_2^2 = \norm{\bv{ Y}\bv{\tilde C}\bv{ Y}}_2$. Furthermore:
\begin{align}
\label{eq:tildecbrokenout_upper_spec}
\norm{\bv{Y}\bv{\tilde C}\bv{Y}}_2 \le \norm{\bv{Y}\bv{C}\bv{Y}}_2 + \norm{\bv{Y}\bv{E}_1\bv{Y}}_2 + \norm{\bv{Y}\bv{E}_2\bv{Y}}_2 + \norm{\bv{Y}\bv{E}_3\bv{Y}}_2+ \norm{\bv{Y}\bv{E}_4\bv{Y}}_2
\end{align}
and 
\begin{align}
\label{eq:tildecbrokenout_lower_spec}
\norm{\bv{Y}\bv{\tilde C}\bv{Y}}_2 \ge \norm{\bv{Y}\bv{C}\bv{Y}}_2 - \norm{\bv{Y}\bv{E}_1\bv{Y}}_2 - \norm{\bv{Y}\bv{E}_2\bv{Y}}_2 - \norm{\bv{Y}\bv{E}_3\bv{Y}}_2 - \norm{\bv{Y}\bv{E}_4\bv{Y}}_2.
\end{align}

Our bounds on $\bv{E}_1$ immediately give $\norm{\bv{Y} \bv{E}_1 \bv{Y}}_2 \le \epsilon_1 \norm{\bv{Y} \bv{C} \bv{Y}}_2$. The spectral norm bound on $\bv{E}_2$, the fact that $\bv{Y}$ is an orthogonal projection, and the optimality of the SVD for Frobenius norm low rank approximation gives:
\begin{align*}
\norm{\bv{Y}\bv{E}_2\bv{Y}}_2 \le \norm{\bv{E}_2}_2 \le \frac{\epsilon_2}{k} \norm{\bv{A}_{r \setminus k}}^2_F \le \frac{\epsilon_2}{k} \norm{\bv{A-PA}}_F^2.
\end{align*}

Next, we note that, since $\bv{E}_3$'s columns fall in the column span of $\bv{C}$, $\bv{C}\bv{C}^+ \bv{E}_3 = \bv{E}_3$. Thus,
\begin{align*}
\norm{\bv{Y}\bv{E}_3 \bv{Y}}_2 \le \norm{\bv{Y}\bv{E}_3}_2 = \norm{(\bv{YC}) \bv{C}^+ (\bv{E}_3)}_2.
\end{align*}
We can rewrite the spectral norm as:
\begin{align*}
\norm{(\bv{YC}) \bv{C}^+ (\bv{E}_3)}_2 = \max_{\bv{a},\bv{b} \in \mathbb{R}^{n}, \norm{\bv{a}}_2 = \norm{\bv{b}}_2 = 1} \sqrt{(\bv{a}^\top \bv{YC}) \bv{C}^+ (\bv{E}_3 \bv{b})}.
\end{align*}
Since $\bv{C}^+$ is positive semidefinite, $\langle \bv{x},\bv{y} \rangle = \bv{x}^\top \bv{C}^+ \bv{y}$ is a semi-inner product and by the Cauchy-Schwarz inequality,
\begin{align*}
\norm{(\bv{YC}) \bv{C}^+ (\bv{E}_3)}_2 &\le \max_{\bv{a},\bv{b} \in \mathbb{R}^{n}, \norm{\bv{a}}_2 = \norm{\bv{b}}_2 = 1} \sqrt{(\bv{a}^\top \bv{YC}\bv{C}^+ \bv{C} \bv{Y} \bv{a})^{1/2} \cdot (\bv{b}^\top \bv{E}_3 \bv{C}^+ \bv{E}_3 \bv{b})^{1/2}}\\
& \le \sqrt{\norm{\bv{YCY}}_2 \cdot \norm{\bv{E}_3 \bv{C}^+ \bv{E}_3}_2} \\
&\le \frac{\epsilon_3}{\sqrt{k}} \norm{\bv{A-PA}}_2 \norm{\bv{A}_{r \setminus k}}_F\\
&\le \frac{\epsilon_3}{2}\norm{\bv{A-PA}}_2^2 + \frac{\epsilon_3}{2k} \norm{\bv{A-PA}}_F^2.
\end{align*}
The final inequality follows from the AM-GM inequality.
For $\bv{E}_4$ a symmetric argument gives:
\begin{align*}
\norm{\bv{Y}\bv{E}_4 \bv{Y}}_2 \le  \frac{\epsilon_4}{2}\norm{\bv{A-PA}}_2^2 + \frac{\epsilon_4}{2k} \norm{\bv{A-PA}}_F^2.
\end{align*}
Finally, combining the derived bounds for $\bv{E}_1$, $\bv{E}_2$, $\bv{E}_3$, and $\bv{E}_4$ with \eqref{eq:tildecbrokenout_upper_spec} and \eqref{eq:tildecbrokenout_lower_spec} gives:
\begin{align*}
(1-\epsilon) \norm{\bv{ A} - \bv{P A}}_2^2 - \frac{\epsilon}{k}\norm{\bv{ A} - \bv{P A}}_F^2 \le
\norm{\bv{\tilde A} - \bv{P \tilde A}}_2^2 \le (1+\epsilon) \norm{\bv{ A} - \bv{P A}}_2^2 + \frac{\epsilon}{k}\norm{\bv{ A} - \bv{P A}}_F^2.
\end{align*}
\end{proof}

It is easy to see that the conditions for Lemma \ref{error_spectral} holds for $\bv{\tilde A} = \bv{A}\bv{R}$ as long as the conditions of Lemma \ref{blocklem} are satisfied.
%
Choose $\bv{W}_1 \in \mathbb{R}^{n \times (n+m)}$ such that $\bv{W}_1\bv{B} = \bv{AZZ}^\top$ and $\bv{W}_2 \in \mathbb{R}^{n \times (n+m)}$ such that $\bv{W}_2 \bv{B} = \bv{A} - \bv{AZZ}^\top$. Recall that $\bv{E} = \bv{\tilde C} - \bv{C} = \bv{AR}\bv{R}^\top \bv{A}^\top - \bv{AA}^\top$ and thus,
\begin{align*}
\bv{E} = (\bv{W}_1 \bv{B} \bv{R} \bv{R}^\top \bv{B}^\top \bv{W}_1^\top - \bv{W}_1 \bv{B} \bv{B}^\top \bv{W}_1^\top) + 
(\bv{W}_2 \bv{B} \bv{R} \bv{R}^\top \bv{B}^\top \bv{W}_2^\top - \bv{W}_2 \bv{B} \bv{B}^\top \bv{W}_2^\top) +\\
 (\bv{W}_1 \bv{B} \bv{R} \bv{R}^\top \bv{B}^\top \bv{W}_2^\top - \bv{W}_1 \bv{B} \bv{B}^\top \bv{W}_2^\top) + 
(\bv{W}_2 \bv{B} \bv{R} \bv{R}^\top \bv{B}^\top \bv{W}_1^\top - \bv{W}_2 \bv{B} \bv{B}^\top \bv{W}_1^\top).
\end{align*}
As in Lemma \ref{blocklem}, we set $\bv{E}_1 = (\bv{W}_1 \bv{B} \bv{R} \bv{R}^\top \bv{B}^\top \bv{W}_1^\top - \bv{W}_1 \bv{B} \bv{B}^\top \bv{W}_1^\top)$ and have
\begin{align}\label{block_1_spec}
-\epsilon \bv{C} \preceq \bv{E}_1 \preceq \epsilon \bv{C}.
\end{align}
Setting $\bv{E}_2 = (\bv{W}_2 \bv{B} \bv{R} \bv{R}^\top \bv{B}^\top \bv{W}_2^\top - \bv{W}_2 \bv{B} \bv{B}^\top \bv{W}_2^\top)$, we have:
\begin{align}\label{block_2_1_spec}
\norm{\bv{E}_2}_2  = \frac{\norm{\bv{A}_{r \setminus k}}_F^2}{k} \cdot \norm{\bv{B}_2 \bv{R} \bv{R}^\top \bv{B}_2^\top - \bv{B}_2 \bv{B}_2^\top}_2 \le \frac{\epsilon}{k} \norm{\bv{A}_{r \setminus k}}_F^2.
\end{align}
Setting $\bv{E}_3 =  (\bv{W}_1 \bv{B} \bv{R} \bv{R}^\top \bv{B}^\top \bv{W}_2^\top - \bv{W}_1 \bv{B} \bv{B}^\top \bv{W}_2^\top)$, 
as shown in the proof of Lemma \ref{blocklem},
\begin{align}\label{block_3_spec}
\norm{\bv{E}_3^\top \bv{C}^+ \bv{E}_3}_2 \le \frac{\epsilon^2}{k} \norm{\bv{A}_{r \setminus k}}_F^2.
\end{align}
Finally, setting $\bv{E}_4 =  (\bv{W}_2 \bv{B} \bv{R} \bv{R}^\top \bv{B}^\top \bv{W}_1^\top - \bv{W}_2 \bv{B} \bv{B}^\top \bv{W}_1^\top) = \bv{E}_3^\top$ we have
\begin{align}\label{block_4_spec}
\norm{\bv{E}_4 \bv{C}^+ \bv{E}_4^\top}_2 \le \frac{\epsilon^2}{k} \norm{\bv{A}_{r \setminus k}}_F^2.
\end{align}
\eqref{block_1_spec}, \eqref{block_2_1_spec}, \eqref{block_3_spec}, and \eqref{block_4_spec} together ensure that $\bv{\tilde A} = \bv{AR}$ satisfies Lemma \ref{error_spectral} with error $4\epsilon$.
Together, Lemmas \ref{blocklem} and \ref{error_spectral} give a spectral norm version of Theorems \ref{rp_theorem}, \ref{sample}, and \ref{bss_theorem}:
\begin{theorem}\label{rp_theorem_spectral}
Let $\bv R \in \mathbb{R}^{d' \times d}$ be drawn from any of the matrix families of Lemma \ref{matapprox} with error $O(\epsilon)$.  Then for any matrix $\bv A \in \mathbb{R}^{n \times d}$, with probability at least $1-O(\delta)$, $\bv A \bv R^\top$ is a rank $k$ spectral norm projection-cost preserving sketch of $\bv A$ with error $\epsilon$. Specifically, for any rank $k$ orthogonal projection $\bv{P}$
\begin{align*}
(1-\epsilon) \norm{\bv{ A} - \bv{P A}}_2^2 - \frac{\epsilon}{k}\norm{\bv{ A} - \bv{P A}}_F^2 \le
\norm{\bv{\tilde A} - \bv{P \tilde A}}_2^2 \le (1+\epsilon) \norm{\bv{ A} - \bv{P A}}_2^2 + \frac{\epsilon}{k}\norm{\bv{ A} - \bv{P A}}_F^2.
\end{align*}
\end{theorem}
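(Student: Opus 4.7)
The plan is to mimic the proof of Theorem \ref{rp_theorem}, substituting the spectral-norm characterization (Lemma \ref{error_spectral}) for the Frobenius-norm characterization (Lemma \ref{error}). All the heavy lifting for this substitution has in fact already been carried out in the paragraph immediately preceding the statement: the four-term decomposition of $\bv{E} = \bv{\tilde C} - \bv{C}$ through the auxiliary matrices $\bv{W}_1, \bv{W}_2, \bv{B}, \bv{R}$ produces error matrices $\bv{E}_1, \bv{E}_2, \bv{E}_3, \bv{E}_4$ whose bounds \eqref{block_1_spec}--\eqref{block_4_spec} are exactly the hypotheses of Lemma \ref{error_spectral}. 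So the task reduces to instantiating Lemma \ref{blocklem} correctly and invoking Lemma \ref{matapprox}.

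Concretely, I would choose $\bv{Z} = \bv{V}_{2k}$ to build $\bv{B}$ as in Lemma \ref{blocklem}. This choice is oblivious and purely for analysis, exactly as in the proof of Theorem \ref{rp_theorem}; neither $\bv{Z}$ nor $\bv{B}$ needs to be computed in order to form the sketch $\bv{A}\bv{R}^\top$. The resulting $\bv{B}$ satisfies $\|\bv{B}\bv{B}^\top\|_2 \le 2$ and has stable rank $O(k)$, and the lower block $\bv{B}_2$ also has stable rank $O(k)$.

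Next, I would apply Lemma \ref{matapprox} twice with error parameter $\Theta(\epsilon)$: once with $\bv{M} = \tfrac{1}{\sqrt{2}} \bv{B}^\top$ to obtain $\|\bv{B}\bv{R}^\top \bv{R}\bv{B}^\top - \bv{B}\bv{B}^\top\|_2 \le \epsilon$, and once with $\bv{M} = \tfrac{1}{\sqrt{2}} \bv{B}_2^\top$ to obtain the analogous control on $\bv{B}_2\bv{R}^\top \bv{R}\bv{B}_2^\top - \bv{B}_2\bv{B}_2^\top$. A union bound gives both events with probability $1 - O(\delta)$. Plugging these into the bounds \eqref{block_1_spec}--\eqref{block_4_spec} verifies the four hypotheses of Lemma \ref{error_spectral} with $\epsilon_1 + \epsilon_2 + \epsilon_3 + \epsilon_4 = O(\epsilon)$, and applying Lemma \ref{error_spectral} (after rescaling $\epsilon$ by a constant) yields the two-sided spectral inequality in the theorem statement.

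The only potential obstacle is that Lemma \ref{matapprox} must deliver its spectral approximation guarantee under a \emph{stable rank} (rather than exact rank) hypothesis on $\bv{M}$, since our $\bv{B}$ is generally full rank. For families 1--3 this is handled by the stable-rank extensions attributed to \cite{jelaniDiscussion} in the paragraph after Lemma \ref{matapprox}; families 4 and 5 satisfy the requirement directly by construction (sampling proportional to squared row norms, and BSS selection, respectively). With this ingredient in hand no further work is needed, so the proof is essentially an assembly of Lemmas \ref{blocklem}, \ref{matapprox}, and \ref{error_spectral} together with the decomposition already spelled out above the theorem.
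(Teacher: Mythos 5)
Your proposal is correct and matches the paper's argument: instantiate Lemma \ref{blocklem} with $\bv{Z}=\bv{V}_{2k}$ for the oblivious families (or the $\bv{Z}$ used to form sampling probabilities for the non-oblivious ones), invoke Lemma \ref{matapprox} to get the spectral embedding guarantee \eqref{spectral_cond}, and feed the resulting bounds \eqref{block_1_spec}--\eqref{block_4_spec} into Lemma \ref{error_spectral}. The only slightly inefficient step is the second application of Lemma \ref{matapprox} to $\bv{B}_2$: in the Frobenius case that was needed for the separate trace condition \eqref{trace_cond}, but here condition 2 of Lemma \ref{error_spectral} only requires a spectral bound on $\bv{B}_2\bv{R}^\top\bv{R}\bv{B}_2^\top - \bv{B}_2\bv{B}_2^\top$, which is a principal submatrix of $\bv{B}\bv{R}^\top\bv{R}\bv{B}^\top - \bv{B}\bv{B}^\top$ and hence inherits the bound from the single application to $\bv{B}$.
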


Applying Theorem \ref{rp_theorem_spectral} to $\bv{A}^\top$ and setting $\epsilon$ to a constant gives the requirements for Lemma \ref{spectral_norm_nonoblivious}. Note that, in general, a similar analysis to Lemma \ref{sketch_approximation} shows that a spectral norm projection-cost preserving sketch allows us to find $\bv{\tilde P}$ such that:
\begin{align*}
\norm{\bv{ A} - \bv{\tilde P A}}_2^2 \le (1+O(\epsilon)) \norm{\bv{ A} - \bv{ P^* A}}_2^2 + O\left( \frac{\epsilon}{k}\right ) \norm{\bv{ A} - \bv{ P^* A}}_F^2
\end{align*}
where $\bv{P^*}$ is the optimal projection for whatever constrained low rank approximation problem we are solving. This approximation guarantee is comparable to the guarantees achieved in \cite{Halko:2011} and \cite{bhojanapalli2014tighter} using different techniques.

\end{document}